\renewcommand{\d}{\mathrm{d}}
\renewcommand{\i}{\mathrm{i}}
\newcommand{\Tr}{\operatorname{Tr}}
\newcommand{\<}{\langle}
\renewcommand{\>}{\rangle}
\renewcommand{\i}{\mathrm{i}}
\renewcommand{\d}{\mathrm{d}}
\renewcommand{\i}{\mathrm{i}}
\newtheorem{thm}{Theorem}
\newenvironment{thmbis}[1]
  {%
   \addtocounter{thm}{-1}%
   \begin{thm}}
  {\end{thm}}
\xpatchcmd{\@ssect@ltx}{\@xsect}{\protected@edef\@currentlabelname{#8}\@xsect}{}{}
\xpatchcmd{\@sect@ltx}{\@xsect}{\protected@edef\@currentlabelname{#8}\@xsect}{}{}
\newtheorem{theorem}{Theorem}
\newtheorem{corollary}[theorem]{Corollary}
\newtheorem{lemma}[theorem]{Lemma}
\newtheorem{proposition}[theorem]{Proposition}
\newcommand{\bes} {\begin{subequations}}
\newcommand{\ees} {\end{subequations}}
\newcommand{\bea} {\begin{eqnarray}}
\newcommand{\eea} {\end{eqnarray}}
\newcommand{\be} {\begin{equation}}
\newcommand{\ee} {\end{equation}}
\def\>{\rangle}
\def\<{\langle}
\def\Tr{\textrm{Tr}}
\newcommand{\ignore}[1]{}
\renewcommand{\i}{\mathrm{i}}
\begin{document}	
	\title{Theory of Quantum Circuits with Abelian Symmetries}

	\author{Iman Marvian}
\affiliation{Duke Quantum Center, Departments of Physics \& Electrical and Computer Engineering, Duke University, Durham, North Carolina}

	\begin{abstract}
Quantum circuits with gates (local unitaries) respecting a global symmetry have broad applications in quantum information science and related fields, such as condensed matter theory and quantum thermodynamics. However, despite their widespread use, fundamental properties of such circuits are not well-understood. Recently, it was found that generic unitaries respecting a global symmetry cannot be realized, even approximately, using gates that respect the same symmetry. This observation raises important open questions: What unitary transformations can be realized  with k-local gates that respect a global symmetry? In other words, in the presence of a global symmetry,  how does the locality of interactions constrain the possible time evolution of a composite system? In this work, we address these questions for the case of Abelian (commutative) symmetries and develop constructive methods for synthesizing circuits with such symmetries. Remarkably, as a corollary, we find that, while the locality of interactions still imposes additional constraints on realizable unitaries, certain restrictions observed in the case of non-Abelian symmetries do not apply to circuits with Abelian symmetries.  For instance, in circuits with a general non-Abelian symmetry such as SU($d$), the unitary realized in a subspace with one irreducible representation (charge) of the symmetry dictates the realized unitaries in multiple other sectors with inequivalent representations of the symmetry. Furthermore, in certain sectors, rather than all unitaries respecting the symmetry, the realizable unitaries are the symplectic or orthogonal subgroups of this group. We prove that none of these restrictions appears in the case of Abelian symmetries. This result suggests that global non-Abelian symmetries may affect the thermalization of quantum systems in ways not possible under Abelian symmetries.
 \end{abstract}

		\maketitle

\section{Introduction}

The quantum circuit model \cite{deutsch1985quantum, deutsch1989quantum, divincenzo1995two, lloyd1995almost, Barenco:95a, NielsenAndChuang} was originally developed in the context of quantum computing, inspired by  (classical) digital logic circuits. This model provides a framework for formulating quantum algorithms and can be directly translated into instructions that can be implemented on quantum computers. Over time, the applications of this model have extended well beyond quantum computing. Quantum circuits have now become a standard framework for describing the dynamics and phases of many-body quantum systems (See, e.g., \cite{chen2010local, chen2011classification, khemani2018operator, roberts2017chaos, cotler2017chaos}). Hence, given the wide range of motivations and applications of quantum circuits, understanding how the standard framework of quantum circuits should be modified in the presence of symmetries is also crucial and has broad applications.

For instance, suppose certain native gates on a quantum computer respect the U(1) symmetry corresponding to rotations around the z-axis or respect  its $\mathbb{Z}_2$ subgroup corresponding to $\pi$ rotation around z. What sort of unitary transformations can be realized with such gates? Recently, it was observed  that in the presence of symmetries, the locality of unitaries in the circuit further restricts the set of realizable unitaries \cite{marvian2022restrictions, alhambra2022forbidden}. That is, there is no fixed $k$ such that symmetric unitaries that are $k$-local (i.e., act non-trivially on at most $k$ qudits) generate all symmetric unitaries on systems with $n>k$ qudits. This is in sharp contrast to the well-known universality of 2-local unitaries in the absence of symmetries \cite{divincenzo1995two, lloyd1995almost}.

What is the nature of the constraints imposed by the locality of the gates, and how do they depend on the properties of the symmetry group? In particular, are there any fundamental distinctions between the restrictions imposed by the locality in the case of Abelian and non-Abelian symmetries?
This question can be equivalently formulated in the language of the dynamics of composite systems with local Hamiltonians. The presence of symmetries implies certain standard conservation laws, a fact often referred to as Noether's theorem \cite{noether1918nachrichten, noether1971invariant}. These standard conservation laws apply to both Abelian and non-Abelian symmetries. For systems with local interactions, does the presence of non-Abelian global symmetries impose any additional constraints (beyond the standard conservation laws) that do not appear in the case of Abelian symmetries? Answering this question will be a step toward understanding the special features of non-Abelian symmetries in the dynamics and thermalization of quantum systems, which have recently attracted significant attention. (See, e.g., \cite{majidy2023noncommuting, halpern2016microcanonical, majidy2023non, guryanova2016thermodynamics, halpern2020noncommuting,  majidy2023critical, lostaglio2017thermodynamic}).




\subsection*{Overview of main results}

In this work we develop the theory of quantum circuits with Abelian symmetries.  
Our main results in theorems \ref{Thm3}, \ref*{Thm4}, and  \ref{Thm2} characterize the group of unitaries generated by  $k$-local unitaries that respect an  Abelian global symmetry. The proofs of these results are constructive. Indeed, we develop new techniques for synthesizing circuits with Abelian symmetries.  
Interestingly, our results imply that certain  constraints that have been previously observed in the case of non-Abelian symmetries \cite{Marvian2021qudit}, never appear in circuits with  Abelian symmetries.  In the following, we present an informal overview of these results. 

To explain this, first recall that the unitary representation of any global  symmetry group $G$ decomposes the total Hilbert space of $n$ qudits, as
$(\mathbb{C}^d)^{\otimes n}\cong \bigoplus_{\mu} \mathcal{H}_\mu$,   where the summation is over inequivalent  irreducible representations (irreps) of group $G$,   and  $ \mathcal{H}_\mu$ is the subspace of states with irrep $\mu$.  Then, a unitary transformation  on this system respects the Abelian symmetry $G$, i.e., is $G$-invariant  if, and only if, it is block-diagonal with respect to this decomposition, which means it conserves the charge (irrep) associated to the symmetry.
However, it turns out that a  general unitary  with this property cannot be realized with $k$-local unitaries that respect the same symmetry. In particular, as  shown in \cite{marvian2022restrictions},  
\vspace{-1mm}
\begin{enumerate}[label=\textbf{\Roman*}.]
  \setcounter{enumi}{0}
\item 
Locality imposes certain constraints on  the relative phases between sectors with different charges $\{\mathcal{H}_\mu\}$ (See theorem \ref{Thm-1} and the discussion below it).  
\end{enumerate}
Motivated by the presence of these constraints, in this paper we propose a weaker notion of universality, which will be called semi-universality\footnote{This name refers to universality on the semi-simple part of the Lie algebra of symmetric Hamiltonians.}. Namely, $k$-local symmetric unitaries are called semi-universal  if they generate all symmetric unitaries,  up to constraints on the relative phases between sectors with different charges. More precisely, if for any desired symmetric unitary $V$  there exists a set of phases $\phi_\mu\in(-\pi,\pi]$, such that $V\sum_\mu e^{\i\phi_\mu}\Pi_\mu$ is realizable by    $k$-local symmetric unitaries, we say  $k$-local symmetric unitaries  are semi-universal, where $\Pi_\mu$ is the projector to $\mathcal{H}_\mu$, the sector with charge $\mu$.
 (See Sec. \ref{def:semi} for the formal definitions).

\begin{table*}[htb]
  \centering
  \renewcommand{\arraystretch}{2}
  \setlength{\tabcolsep}{4pt}
  \begin{tabular}{ | m{3cm} | m{5cm}| m{4cm} | m{4cm} | } 
  \hline
\center{\textbf{Symmetry}}\\ $ $\\  & \center{\vspace{-2mm}\textbf{      Representation of symmetry on each local subsystem}\\ } 
 & {\center{\vspace{2mm}\textbf{Locality of gates needed for semi-universality\\ (w/o ancilla qudits)}}\\ } & {\center{\textbf{Number of ancilla qudits needed for  universality}}\\ }
   \\  
  \hline  
  \center{U(1)\\ }  &  \center{$\exp({\i \theta A })\ \  : \theta\in[0,2\pi)$\\  for a Hermitian operator  $A$ with equidistant integer eigenvalues} & \center{$k=2$}\\ 
 &  \ \ \ \ \ \ \ \ \ \ \ \ \ \  \ \ \ \ \ \ \  \vspace{3mm} 1
 \\  \hline
  \center{$\mathbb{Z}_p\ \ \ : p\ge 2$ }  &  $\exp(\frac{\i 2\pi a}{p} |1\rangle\langle 1|)\ \  :   a=0,\cdots, p-1$ & \center{$k=p$}&  \ \ \ \ \ \ \ \ \ \ \ \ \ \  \ \ \ \ \ \ \  1
 \\  \hline
    \center{Finite Abelian group $G$}\\  & \center{Regular representation}\\  & \center{$k=2$}\\  &  \ \ \ \ \ \ \ \ \ \ \ \ \ \  \ \ \ \ \ \ \  \vspace{3mm} 1
 \\  \hline
    \center{Finite Abelian group $G$}\\  & \center{ Arbitrary}\\  & \center{$2\le k \le |G|$}\\  &  \ \ \ \ \ \ \ \ \ \ \ \ \ \  \ \ \ \ \ \ \  \vspace{3mm} 1 \\  \hline
    \center{All Abelian groups}\\  & \center{ Arbitrary}\\  & \center{See 
    theorem \ref{Thm4}}\\  &  \ \ \ \ \ \ \ \ \ \ \ \ \ \  \ \ \ \ \ \ \  \vspace{3mm} 1 \\ 
  \hline
    \center{SU(2)} & \center{ Defining rep on $\mathbb{C}^2$} &$\ \ \ \ \ \ \ \ \ \ \ \  \ \ \ \   k=2$ &  \ \ \ \ \ \ \ \ \ \ \ \ \ \  \ \ \ \ \ \ \ 2
   \\ 
  \hline
   \center{SU($d$)\ \ \ \  : $d\ge 3$} & \center{Defining rep on $\mathbb{C}^d$} & $\ \ \ \ \ \ \ \ \ \  \ \ \ \  \ \  k=3$ &  \ \ \ \ \ \ \ \ \ \ \ \ \ \  \ \ \ \ \ \ $\le 3$  \\   
  \hline
\end{tabular}
   \caption{Conditions for semi-universality and universality for circuits with $k$-local symmetric unitaries.  The results on general Abelian symmetries, and, in particular, $\mathbb{Z}_p$ group and finite Abelian  groups, are established in this paper. The case of U(1) group was studied in \cite{marvian2022restrictions}, SU(2) group  in \cite{marvian2022rotationally} and SU($d$) for $d> 2$, was established in \cite{Marvian2021qudit} and \cite{hulse2024framework} (See also  \cite{marin2007algebre}).}
  \label{tab}
\end{table*}

Then, the next natural question is whether $k$-local symmetric unitaries are semi-universal? It turns out that the answer is not positive in general. That is, the locality of interactions impose a different type of constraints, namely 
\begin{enumerate}[label=\textbf{\Roman*}.]
  \setcounter{enumi}{1}
\item  There are observables commuting with all $k$-local symmetric unitaries that do not commute with some other symmetric unitaries. Equivalently, under the action of   $k$-local symmetric unitaries a charge sector $\mathcal{H}_\mu$ splits  into multiple irreducible  invariant subspaces, as 
\be\label{sub1}
\mathcal{H}_\mu=\bigoplus_\alpha \mathcal{H}_{\mu,\alpha}\ ,\ 
\ee
  such that all $k$-local symmetric unitaries are block-diagonal with respect to this decomposition.
\end{enumerate}

In particular, projectors to  the invariant subspaces $\{\mathcal{H}_{\mu,\alpha}\}$ are conserved under $k$-local symmetric unitaries, whereas they do not remain conserved under general symmetric unitaries. Hence, they are independent of the standard (Noether's) observables associated to the symmetry.  Interestingly, we find that in the case of Abelian symmetries, the only obstacle for semi-universality is the presence of these extra conserved observables. \\

\begin{figure}[t]
  \includegraphics[scale=.27]{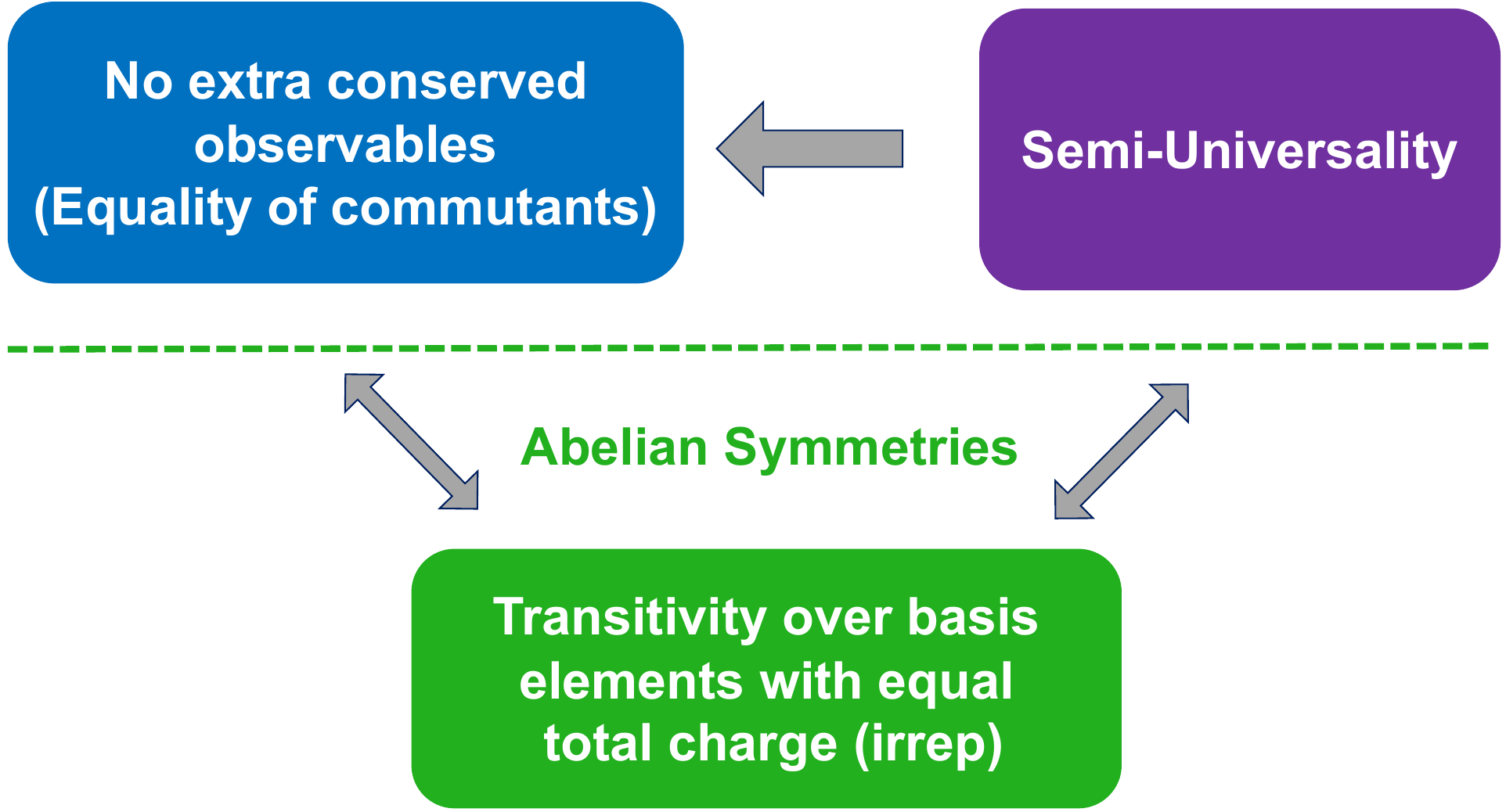}\vspace{2mm}
  \caption{\textbf{Relations  between three fundamental concepts:} We say $k$-local symmetric unitaries are semi-universal if they can realize all symmetric unitaries, up to possible constraints on the relative phases between sectors with different irreps (charges). This property implies that the only observables that commute with  $k$-local symmetric unitaries are those that commute with all symmetric unitaries. Our previous work in \cite{Marvian2021qudit}  shows that the reverse implication does not hold for general non-Abelian symmetries, such as SU($d$) with $d\ge 3$.  However,  as we show in this paper, the reverse implication \emph{does} hold  in the case of Abelian symmetries (See theorem \ref{Thm4} for a formal statement of this result.) Indeed, in this case the two aforementioned  properties are both equivalent to a third property, namely transitivity over basis elements with equal total irrep (charge), where the basis is chosen such that in each basis element each subsystem has a definite charge. Roughly speaking, this transitivity property means that $k$-local charge-conserving operations can redistribute the charge in the system.    }
  \label{FigRel}
\end{figure}

\noindent{\textbf{Theorem  (Informal version):} } For circuits with $k$-local $G$-invariant unitaries,     
if  $G$  is an Abelian group and $k\ge 2$, then following three properties  are equivalent:\\

\noindent$\bullet$ Semi-universality\\
\hspace{2mm}$\bullet$  No extra conserved observables (Equality of commutants)\\
\hspace{2mm}$\bullet$ Transitivity over basis elements with equal total charge (irrep). \\  

In the following we define and discuss the third property. See Theorem \ref{Thm4} for the formal statement and Fig. \ref{FigRel} for an illustration of this result.  

\subsection*{Transitivity:  The key  to understand Abelian circuits}

For Abelian symmetries each qudit has a basis in which the  action of the symmetry transformations are diagonal, i.e., each basis element has a definite charge (irrep). Furthermore, the $n$-fold tensor product of such states  defines a basis for the entire system, and since the group is Abelian, each tensor product state also has a definite charge, i.e., is  restricted to a single charge sector $\mathcal{H}_\mu$. For instance, in the case of U(1) symmetry corresponding to rotations around the z axis of a qubit, this basis is the so-called computational basis $\{|0\rangle,|1\rangle\}^{\otimes n}$. 

Two elements of this global  basis belong to the same charge sector $\mathcal{H}_\mu$ if, and only if, they can be converted to each other using a symmetric unitary, which can be interpreted as "charge redistribution" in the system.  
That is,  
symmetric unitaries act \emph{ transitively} on the basis elements in the same sector $\mathcal{H}_\mu$\footnote{In the above example of U(1) symmetry, the total charge (irrep) is determined by the Hamming weight of the bit string associated to each element of the computational basis, i.e., the number of 1’s in that element. }. On the other hand, in general, $k$-local symmetric unitaries with $k<n$ may not act transitively on such basis elements.  According to Theorem \ref{Thm4}, if they do act transitively then they are also semi-universal. In Sec. \ref{Sec:which} we go one step further and show that a subset of $k$-local symmetric Hamiltonians are semi-universal, if they can achieve transitivity (See corollary \ref{cor9}).

 These results significantly simplify  
 the study of semi-universality and universality in Abelian circuits; transitivity over basis elements  has a simple (classical) interpretation as the \emph{charge redistribution} in the system and  is generally a property that is  easier to check.\footnote{It is worth noting that the concept of transitivity can also be defined in classical logic circuits in a similar fashion. However, interestingly,  in the classical case transitivity does not imply universality. }   For instance, the semi-universality of 2-local U(1)-invariant unitaries, which was established in \cite{marvian2022restrictions} using a lengthy  Lie-algebraic argument specific to U(1) symmetry, can now be simply understood as a consequence of the fact that  2-qubit swap unitaries, which clearly respect the U(1) symmetry, act transitively on bit strings with equal Hamming weights (See Sec. \ref{Sec:Ex}).  

Remarkably, a very similar argument can now be applied for other Abelian groups. For example, in the case of  $\mathbb{Z}_2$ subgroup of this U(1) symmetry, a similar argument shows that 2-local unitaries that commute with $\sigma_z\otimes \sigma_z$ are semi-universal. More generally,  for any finite Abelian group $G$ with order $|G|\ge 2$, $k$-local symmetric unitaries with $k\ge |G|$ are semi-universal, i.e.,  generate all symmetric unitaries, up to type  
\textbf{I} constraints (See corollary \ref{cor3}). 
 It is also worth noting that, in general, depending on the symmetry and its representation,  universality may or may not be achievable with finite $k<n$. For instance, in  the example of qubit systems with the cyclic group $\mathbb{Z}_p$ symmetry with even $p$,  unless $k=n$, $k$-local  $\mathbb{Z}_p$-symmetric unitaries are not universal. In the case of odd $p$, on the other hand,   $k$-local $\mathbb{Z}_p$-symmetric unitaries  become universal for $k\ge p \ge 3$  (See table \ref{tab} and section \ref{Sec:Ex}). 



 
  
    Our results also allow us to characterize the realizable unitaries even when semi-universality, or, other equivalent properties does not hold.  Recall that when this condition is not satisfied some charge sectors $\mathcal{H}_\mu$ may further decompose into    
     invariant subspaces $\{\mathcal{H}_{\mu,\alpha}\}$, as in  Eq.(\ref{sub1}). Then, 
  theorem \ref{Thm2} implies that   
 any unitary $V$ that is block-diagonal with respect to the invariant subspaces $\{\mathcal{H}_{\mu,\alpha}\}$ can be realized with $k$-local symmetric unitaries with $k\ge 2$, provided that in all these  invariant subspaces,  the determinants of the realized unitaries are one. The latter condition guarantees that type \textbf{I} constraints are satisfied.

\subsection*{Elevating  semi-universality to universality using a single ancilla} 
 
We show that, in the case of Abelian symmetries using a single ancilla qudit one can always  elevate semi-universality to universality. Indeed, in section \ref{Sec:anc} we introduce a general mechanism for implementing unitary transformations that  are diagonal relative to the aforementioned basis using only 2-local symmetric unitaries and a single ancilla. This, in particular, 
 implies that one can circumvent type \textbf{I} constraints using a single ancilla qudit. 
 
 \subsection*{More restrictions for circuits with non-Abelian symmetries}
 
Remarkably, it turns out that in the case of non-Abelian symmetries,  the locality of interactions can impose other types of constraints, namely 
\begin{enumerate}[label=\textbf{\Roman*}.]
  \setcounter{enumi}{2}
  \item In certain  sectors $\{\mathcal{H}_\mu\}$ the realized unitaries are the orthogonal, the symplectic, or other (irreducible) subgroups of the symmetric unitaries.
\item The realized unitaries in certain sectors $\{\mathcal{H}_\mu\}$ dictate the unitaries in one or multiple  other sectors. In other words, in general, the time evolution of different sectors can not be  independent of each other. 
\end{enumerate}
Indeed, both of these restrictions exist in the case of SU($d$) symmetry with $d > 2$ \cite{Marvian2021qudit}. In particular, Ref. \cite{Marvian2021qudit} shows that even when the state of system is restricted to a sector $\mathcal{H}_\mu$ with an irreducible representation of SU($d$), certain functions of state, which are not conserved under general SU($d$)-invariant Hamiltonians, do remain conserved if the SU($d$)-invariant Hamiltonian can be decomposed as a sum of 2-local terms.  It is also worth noting that, in the special case of qubit systems with SU(2) symmetry, the above constraints do not exist, and the locality of interactions only imposes type \textbf{I} constraints \cite{marvian2022rotationally}. However, unlike the case of Abelian symmetries, in the case of SU(2) symmetry, a single ancilla qubit is not sufficient to elevate semi-universality to universality. Instead, universality can be achieved with 2 ancilla qubits  \cite{marvian2022rotationally}.\\


We proceed by first formally defining the framework of symmetric quantum circuits in Sec. \ref{sec:setup}, defining the no extra conserved observable condition in Sec. \ref{Sec:extra}, and presenting our main results in Sec. \ref{Sec:main}.   In Sec. \ref{Sec:which} we show  that if $k$-local symmetric unitaries are semi-universal, then only certain specific families of $k$-local symmetric gates are sufficient to achieve semi-universality.   In Sec. \ref{Sec:anc} we present a general mechanism for elevating semi-universality to universality using an ancilla qudit.  In Sec. \ref{Sec:discuss} we present further discussions  on the applications and the implications of symmetric quantum circuits with Abelian symmetries in the context of 
the resource theory of quantum thermodynamics \cite{FundLimitsNature, brandao2013resource, janzing2000thermodynamic,  lostaglio2015quantumPRX, halpern2016microcanonical, halpern2016beyond, guryanova2016thermodynamics, chitambar2019quantum}, thermalization of quantum systems with conserved charges, quantum reference frames \cite{QRF_BRS_07},  and 
universal quantum computing. The proofs of the main results are presented in section \ref{Sec:proof}.



\vspace{-3mm}

\color{red}



\color{black}

\section{Preliminaries }\label{sec:setup}

\subsection{The framework of symmetric quantum circuits}

We first briefly review the framework of symmetric quantum circuits (See \cite{marvian2022restrictions} for further details).  
Consider a system of $n$ qudits with the total Hilbert space $(\mathbb{C}^d)^{\otimes n}$.  We say an operator $A$ on this system is $k$-local, if, up to a permutation of qudits,  it  can be decomposed  as $A={A}_\text{loc}\otimes \mathbb{I}^{\otimes (n-k)}_d$, where ${A}_\text{loc}$ acts on $k$ qudits and $\mathbb{I}_d$ is the identity operator on a single qudit (See Fig.  \ref{Fig:circuit}).  Consider a group $G$ with a given unitary representation $u(g):g\in G$ 
on a single qudit. On $n$ qudits we consider the tensor product representation $U(g)=u(g)^{\otimes n}$. We say an operator $A$ is $G$-invariant, or, symmetric, if $[A, U(g)]=0$ for all $g\in G$.  Define $\mathcal{V}_{n,k}^{G}$  to be the set of all unitary transformations that can be implemented with $k$-local $G$-invariant  unitaries. More precisely, $\mathcal{V}_{n,k}^{G}$ is the set of unitaries $V=\prod_{i=1}^{m} V_{i}$,  generated  by composing $G$-invariant  $k$-local  unitaries $ V_i: i=1\cdots m$, for a finite $m$ \cite{marvian2022restrictions}. Equivalently, $\mathcal{V}_{n,k}^{G}$ is the set of unitary time evolutions $V(t)\ :t\ge 0$ of a system evolving under the  
Schr\"{o}dinger equation  
\be\nonumber
\frac{\d V(t)}{\d t}=-\i H(t) V(t)\ ,
\ee
 with the initial condition $V(0)=\mathbb{I}_d^{\otimes n}$,   for Hamiltonians $H(t)$ that are (i) $G$-invariant, and (ii) can be decomposed as a sum of $k$-local terms.  When the system has a given geometry,   one may impose the stronger requirement of geometric locality, i.e., $k$-local interactions should be restricted to $k$ nearest-neighbor qudits. However, if the qudits lie on a connected graph, such as a spin chain, this additional constraint does not change the group $\mathcal{V}_{n,k}^{G}$ \cite{marvian2022restrictions}.  It is well-known that in the absence of symmetries, i.e., when $G$ is the trivial group, 2-local unitaries are \emph{universal}  \cite{divincenzo1995two, lloyd1995almost}, i.e., $\mathcal{V}_{n,2}=\mathcal{V}_{n,n}$ for all $n$.

\begin{figure}[t]
  \includegraphics[scale=.4]{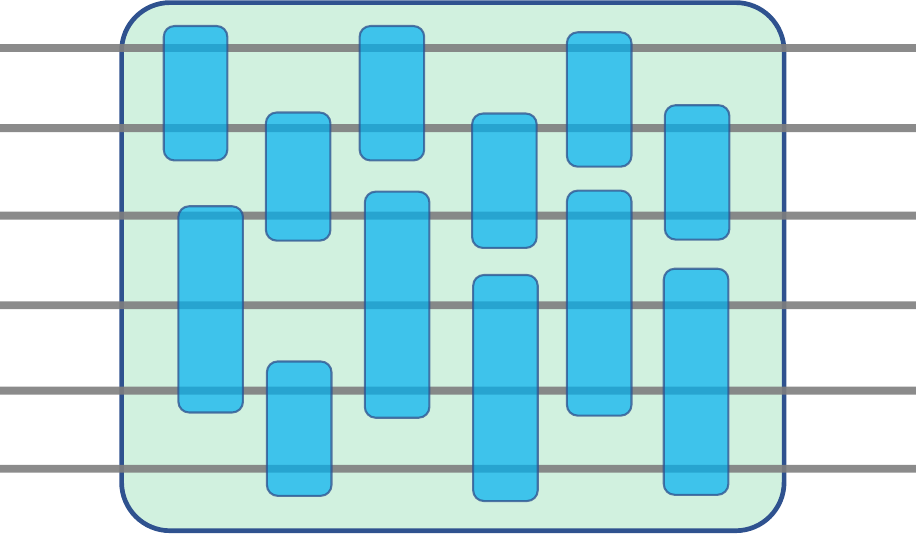}\vspace{2mm}
  \caption{\textbf{Example of a quantum circuit with 3-local gates:}  Each gate is a unitary transformation that acts on, at most, 3 qudits.   }
  \label{Fig:circuit}
\end{figure}

\subsection{A no-go theorem: Type \textbf{I} constraints}

 Does the universality of 2-local unitaries in the absence of symmetries  remain valid in the presence of symmetries? That is, does there exist a fixed $k$ such that $\mathcal{V}_{n,k}^G=\mathcal{V}_{n,n}^G$ for all $n$? The following result addresses this question. 

\begin{theorem} \label{Thm-1}\emph{\cite{marvian2022restrictions}} 
 For any integer $k\le n$, the group generated by $k$-local $G$-invariant unitaries on $n$ qudits, denoted by $\mathcal{V}_{n,k}^G, $ is a compact connected Lie group. The difference between the dimensions of this Lie group and the group of all $G$-invariant  unitaries is lower bounded by
\be\label{bound1}
\text{dim}(\mathcal{V}_{n,n}^G)-\text{dim}(\mathcal{V}_{n,k}^G)\ge |\text{Irreps}_G(n)|-|\text{Irreps}_G(k)|\ ,
\ee
where $|\text{Irreps}_G(k)|$ is the number of inequivalent irreps of group $G$  appearing in the representation $u(g)^{\otimes k}: g\in G$.
\end{theorem}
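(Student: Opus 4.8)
The plan is to prove the two assertions separately: the Lie-theoretic structure statement, which I would treat as standard, and the dimension bound, which is the substantive content.

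First, for the structural claim, every $k$-local $G$-invariant unitary can be written as $e^{-\i H}$ with $H$ a $k$-local $G$-invariant Hermitian operator (take a $G$-invariant logarithm block by block in the isotypic decomposition), so $\mathcal{V}_{n,k}^G$ is exactly the reachable set from the identity of the right-invariant control system $\dot V = -\i H(t)V$ with $H(t)$ ranging over $k$-local $G$-invariant Hamiltonians. Since this family is symmetric ($H\mapsto -H$ is allowed), standard results in geometric control (Jurdjevic--Sussmann) identify the reachable set with the connected Lie subgroup whose Lie algebra $\v_{n,k}^G$ is the smallest real Lie algebra containing all the generators $\i H$. As this Lie algebra sits inside $\mathfrak{u}_{n,n}^G$, the Lie algebra of the compact group of all $G$-invariant unitaries, and the reachable set of a symmetric system on a compact group is closed, $\mathcal{V}_{n,k}^G$ is a compact connected Lie group. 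I would spend essentially no effort here and concentrate on the bound.

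For the bound, I would first decompose the full commutant $\mathfrak{u}_{n,n}^G \cong \bigoplus_\mu \mathfrak{u}(\mathcal{M}_\mu)$, where $\mathcal{M}_\mu$ is the multiplicity space of irrep $\mu$ in $u(g)^{\otimes n}$. Its center is $\mathfrak{z} = \mathrm{span}_\R\{\i\Pi_\mu\}$, of dimension exactly $|\text{Irreps}_G(n)|$, and $\mathfrak{u}_{n,n}^G = \mathfrak{z}\oplus[\mathfrak{u}_{n,n}^G,\mathfrak{u}_{n,n}^G]$. Let $P(X)=\sum_\mu \frac{\Tr(\Pi_\mu X)}{\dim\H_\mu}\,\Pi_\mu$ be the projection onto $\mathfrak{z}$. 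Because each $\Pi_\mu$ is central in $\mathfrak{u}_{n,n}^G$ and the trace of any commutator vanishes, $P([A,B])=0$ for all $A,B\in\mathfrak{u}_{n,n}^G$; hence $P$ annihilates the entire derived algebra, and since $\v_{n,k}^G$ is the span of its generators together with all iterated commutators, $P(\v_{n,k}^G)=P(\mathcal{S})$, where $\mathcal{S}$ is the real span of the $k$-local elements of $\mathfrak{u}_{n,n}^G$. Applying rank--nullity to $P|_{\v_{n,k}^G}$ and using $\v_{n,k}^G\cap\ker P\subseteq\ker P=[\mathfrak{u}_{n,n}^G,\mathfrak{u}_{n,n}^G]$ gives $\dim\v_{n,k}^G \le \dim[\mathfrak{u}_{n,n}^G,\mathfrak{u}_{n,n}^G]+\dim P(\mathcal{S})$, and subtracting from $\dim\v_{n,n}^G=\dim[\mathfrak{u}_{n,n}^G,\mathfrak{u}_{n,n}^G]+|\text{Irreps}_G(n)|$ yields $\dim\v_{n,n}^G-\dim\v_{n,k}^G \ge |\text{Irreps}_G(n)|-\dim P(\mathcal{S})$.

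The crux is then to show $\dim P(\mathcal{S})\le |\text{Irreps}_G(k)|$. For a generator $H=h_S\otimes\mathbb{I}^{\otimes(n-k)}$ supported on a $k$-subset $S$, I would write $h_S=\bigoplus_\nu(\mathbb{I}_{\mathcal{Q}_\nu}\otimes a_\nu)$ according to the decomposition of $u(g)^{\otimes k}$ into irreps $\nu$, and compute $\Tr(\Pi_\mu H)=\sum_\nu c_{\mu\nu}\Tr(a_\nu)$ by tracing out the irrep factors and the $n-k$ environment qudits; the coefficients $c_{\mu\nu}$ are fixed combinatorial and Clebsch--Gordan data built only from the dimensions of the charge sectors of the $n-k$ environment qudits. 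The decisive observation is that these dimensions, and hence the $c_{\mu\nu}$, depend only on the number $n-k$ of environment qudits and not on which subset $S$ is chosen. Therefore every generator on every $k$-subset produces a phase vector lying in the span of the $|\text{Irreps}_G(k)|$ fixed vectors $(c_{\mu\nu})_\mu$ indexed by the irreps $\nu$ occurring in $u(g)^{\otimes k}$, so $\dim P(\mathcal{S})\le|\text{Irreps}_G(k)|$ and the bound follows. I expect this last step to be the main obstacle: carrying out the trace computation cleanly in the genuinely non-Abelian case, where multiplicities and Clebsch--Gordan multiplicities intervene, and verifying rigorously that the coefficients are independent of the chosen subset, which is exactly what collapses all reachable phases into a space of dimension at most $|\text{Irreps}_G(k)|$.
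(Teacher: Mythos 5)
Your proof is correct and is essentially the paper's own argument (summarized in Appendix \ref{Sec:App1}, following \cite{marvian2022restrictions}) written in the Fourier-dual picture: your center projection $P$ and coefficient vectors $(c_{\mu\nu})_\mu$ are exactly the irrep-space transforms of the paper's characteristic function $\chi_H(g)=\Tr(H\,U(g))$ and of its spanning functions $r^{n-k}f_\nu$, and both proofs rest on the same two facts, namely that commutators of $G$-invariant operators have vanishing central component and that the $k$-local generators contribute a central span of dimension at most $|\text{Irreps}_G(k)|$. In particular, the step you single out as the main obstacle --- the independence of the $c_{\mu\nu}$ from the chosen subset $S$ --- is immediate in the paper's formulation, since $\Tr\big(U(g)\,(h_S\otimes\mathbb{I}^{\otimes(n-k)})\big)=\Tr\big(u(g)^{\otimes k}h_S\big)\,r(g)^{n-k}$ holds for every $k$-subset $S$, so no Clebsch--Gordan bookkeeping is needed.
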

In the case of Abelian groups, the right-hand  side of Eq.(\ref{bound1}) can be interpreted as the difference between the total charge in the system and the charge that participates in $k$-local interactions.  For a Lie group $G$, the number of inequivalent irreps $|\text{Irreps}_G(n)|$ can grow unboundedly with the system size $n$, in which case the universality cannot be achieved with any finite $k$.  The bound in Eq.(\ref{bound1}) is a consequence of certain constraints on the relative phases between sectors with different charges (type \textbf{I} constraints), which are characterized in  \cite{marvian2022restrictions}  in terms of the Lie algebra associated to  $\mathcal{V}_{n,k}^G$ (Namely, they appear because  local symmetric interactions cannot generate certain elements of the center of the Lie algebra of all symmetric Hamiltonians).  In particular, these constraints imply that only for certain combinations of the phases $\{\exp(\i\phi_\mu)\}_\mu$ the unitaries  $\sum_\mu \exp(\i\phi_\mu)\Pi_\mu $ are in $\mathcal{V}_{n,k}^G$, where $\Pi_\mu$ is the projector to the sector $\mathcal{H}_\mu$ with charge $\mu$. In Appendix \ref{Sec:App1} we present further discussions and a summary of the derivation of Eq.(\ref{bound1}).     

\subsection{Definition: Semi-universality}\label{def:semi}

In addition to type \textbf{I} constraints highlighted in the above theorem,  which appear for general symmetries, for some symmetries the locality of interactions imposes further restrictions on the realizable unitaries. To focus  on these additional constraints and ignore the  type \textbf{I} constraints, in the following  
we define the notion of
  \emph{semi-universality}, which is weaker than universality: 
  For a general group $G$ (Abelian, or non-Abelian)  $k$-local $G$-invariant unitaries are called \emph{semi-universal}  if 
 for any $G$-invariant unitary $V\in \mathcal{V}^G_{n,n}$, there exists a set of phases $\phi_{\mu}\in(-\pi,\pi]$, such that the unitary  $ (\sum_\mu e^{\i\phi_\mu} \Pi_\mu)V\in\mathcal{V}^G_{n,k}$, which means it is realizable with $k$-local $G$-invariant unitaries. In  words, this condition means that   $k$-local $G$-invariant unitaries generate  all $G$-invariant unitaries, up to constraints on the relative phases between sectors with different charges.   This condition can be equivalently stated as\footnote{Note that the Lie algebra associated to $\mathcal{SV}^G_{n,k}\equiv \big[\mathcal{V}^G_{n,k}\ ,\mathcal{V}^G_{n,k}\big]$
 is the semi-simple part of the Lie algebra 
 associated to $\mathcal{V}^G_{n,k}$.} 
 \be\label{semi5}
 \mathcal{SV}^G_{n,k}\equiv \big[\mathcal{V}^G_{n,k}\ ,\mathcal{V}^G_{n,k}\big]=\big[\mathcal{V}^G_{n,n}\ ,\mathcal{V}^G_{n,n}\big]\ ,
 \ee
 where  $[\mathcal{V}^G_{n,k},\mathcal{V}^G_{n,k}]$ denotes the commutator subgroup of group $\mathcal{V}^G_{n,k}$, i.e., the subgroup generated by the unitaries  in the form $V_1V_2V_1^{\dag}V_2^{\dag}$ for arbitrary $V_1, V_2\in \mathcal{V}^G_{n,k}$.   It is worth noting that, from a Lie-algebraic perspective, Eq. (\ref{semi5}) implies that the semi-simple parts of the Lie algebras associated with $\mathcal{V}^G_{n,k}$ and $\mathcal{V}^G_{n,n}$ are identical. However, in general, the centers of these Lie algebras may differ, which will result in constraints of type $\textbf{I}$ (See Appendix \ref{Sec:App1} and Ref. \cite{marvian2022restrictions} for further discussions).

  In the special case of Abelian groups,  which is the main focus of this paper, we have      
 \be\label{Ab-inc}
 \mathcal{V}^G_{n, k}\subseteq  \mathcal{V}^G_{n,n}=\bigoplus_{\mu\in\text{Irreps}_G(n)} \hspace{-3mm}\text{U}(\mathcal{H}_{\mu})\ ,
 \ee 
  where the summation is over inequivalent irreps of group $G$ appearing in the representation of the symmetry on $n$ qudits, and 
$\text{U}(\mathcal{H}_{\mu})$ is the group of all unitaries on space $\mathcal{H}_{\mu}$.  In this case, semi-universality in Eq.(\ref{semi5}) can be equivalently rephrased as
 \be\label{semi}
  \mathcal{SV}^G_{n,n} =\bigoplus_{\mu\in\text{Irreps}_G(n)} \hspace{-3mm}\text{SU}(\mathcal{H}_{\mu})  \subset \mathcal{V}^G_{n,k}\ ,
\ee
where $\text{SU}(\mathcal{H}_{\mu}) $ is the group of special unitaries on $\mathcal{H}_{\mu}$, i.e., unitaries with determinant 1. 


\section{Extra conserved Observables}\label{Sec:extra}

Besides the constraint on the relative phases, there can be another obstruction to the universality of local symmetric unitaries, namely the presence of extra conserved observables.

\vspace{-2mm}\subsection{Example:   Qubit systems with cyclic symmetry}\label{Sec:cyc}
For any integer $p$,  the single-qubit unitary transformations 
\be\nonumber
u(a)=\exp(\frac{\i 2\pi a}{p}  |1\rangle\langle 1|)=\left(
\begin{array}{cc}
1  &     \\ & e^{\frac{\i 2\pi a}{p} }
  \end{array}
\right)\ ,
\ee
define a representation of the cyclic group $\mathbb{Z}_p$, corresponding to integers $a=0,1,\cdots, p-1$ with addition mod $p$.    The action of this symmetry  on $n$ qubits is given   by unitaries ${u}(a)^{\otimes n}$. For $n<p$  the set of $\mathbb{Z}_p$-invariant unitaries  coincides with the set of unitaries  respecting the stronger U(1) symmetry, corresponding to rotations around the z axis, which is represented by  unitaries $(\exp(\i\theta \sigma_z/2))^{\otimes n}$ for  $\theta\in[0,2\pi)$ (See section \ref{Sec:Ex}).  On the other hand, for $n\ge p$ there are unitaries that respect the $\mathbb{Z}_p$ symmetry, but not this U(1) symmetry. This is a consequence of the fact that both states $|1\rangle^{\otimes p}$ and $|0\rangle^{\otimes p}$  remain invariant under the action of $\mathbb{Z}_p$  group 
and, therefore, can be interconverted into each other by  unitaries that respect this symmetry  
(in other words, $p$ copies of "excitation" $|1\rangle$ annihilate each other into the  "vacuum"). It follows that for $n\ge p$ the observable $O=|1\rangle\langle 1|^{\otimes n}$ does not remain conserved under general $\mathbb{Z}_p$-invariant unitaries, whereas it does remain conserved under $k$-local $\mathbb{Z}_p$-invariant unitaries for $k<p$. In this example, it is clear that to achieve semi-universality,  $k$-local $\mathbb{Z}_p$-invariant  unitaries with $k\ge p$ are needed. 

\subsection{No extra conserved observables condition\\ (The equality of commutants)}


The above example clearly shows that in the presence of a symmetry the locality of interactions can impose additional conservation laws,  independent of the standard conservation laws  associated to the symmetry. In the following, we say there are  no extra conserved observables 
under $k$-local symmetric  unitaries if 
\be\label{Noeth}
\text{Comm}\{\mathcal{V}^G_{n, k}\}\stackrel{?}{=}\text{Comm}\{\mathcal{V}^G_{n, n}\}=\text{Span}_\mathbb{C}\{U(g): g\in G\}\ ,
\ee 
where  the left-hand side is the space of operators commuting with all unitaries  in $\mathcal{V}^G_{n, k}$, and the second equality always holds by the bicommutant theorem. 
Observables that belong to the linear space in the right-hand side  are  the standard (Noether's) conserved observables associated to the symmetry. That is, they are conserved under all symmetric unitaries (It is  also worth mentioning that  any unitary that commutes with all such observables is a symmetric unitary, i.e., belongs to $\mathcal{V}^G_{n, n}$).




As we saw before in Eq.(\ref{Ab-inc}),  for an Abelian group $G$, symmetric unitaries are those that are block-diagonal with respect to subspaces with different charges. Then, the right-hand side of Eq.(\ref{Noeth}) is equal to the span of projectors to $\{\mathcal{H}_{\mu}\}$,  and therefore has dimension $|\text{Irreps}_G(n)|$. Hence, the  condition in Eq.(\ref{Noeth})  can be equivalently stated as
\be\label{Abs}
\text{dim}(\text{Comm}\{\mathcal{V}^G_{n, k}\})\stackrel{?}{=}|\text{Irreps}_G(n)|\ .
\ee
%
%

\section{(Non-)Universality in Abelian circuits}\label{Sec:main}
According to our first main result, in the case of Abelian symmetries  if there are no extra conserved observables  then all symmetric unitaries are realizable, up to constraints on the relative phases (type \textbf{I} constraints).

\begin{theorem}\label{Thm3}
For an Abelian group $G$, the no-extra-conserved-observable condition  in Eq.(\ref{Noeth}) (or, equivalently in Eq.(\ref{Abs})) holds for $k\ge 2$ if, and only if, 
\be
\bigoplus_{\mu\in\text{Irreps}_G(n)} \hspace{-3mm}\text{SU}(\mathcal{H}_{\mu}) \subset \mathcal{V}^G_{n,k} \ .
\ee
\end{theorem}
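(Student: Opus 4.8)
The statement is an equivalence, so the plan is to prove the two implications separately. Throughout I fix the product charge eigenbasis $\{|b\rangle\}$, in which each single-qudit factor is a charge eigenstate of $u(g)$ (such a basis exists because $G$ is Abelian), and I write $B_\mu$ for the subset of these basis vectors spanning $\mathcal{H}_\mu$. I will use freely that $\mathcal{V}^G_{n,k}$ is a compact connected Lie group (theorem~\ref{Thm-1}) with Lie algebra $\mathfrak{v}\subseteq\bigoplus_\mu\mathfrak{u}(\mathcal{H}_\mu)$ generated by the anti-Hermitian parts of $k$-local $G$-invariant Hamiltonians.

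\emph{Sufficiency of the inclusion (the easy direction).} Assume $\bigoplus_\mu\operatorname{SU}(\mathcal{H}_\mu)\subseteq\mathcal{V}^G_{n,k}$. Since $\mathcal{V}^G_{n,k}\subseteq\mathcal{V}^G_{n,n}$ always holds, $\text{Comm}\{\mathcal{V}^G_{n,k}\}\supseteq\text{Comm}\{\mathcal{V}^G_{n,n}\}=\text{Span}_{\mathbb{C}}\{U(g)\}$, so only the reverse inclusion needs proof. Any $X$ commuting with $\mathcal{V}^G_{n,k}$ commutes in particular with each $\operatorname{SU}(\mathcal{H}_\mu)$, which acts irreducibly on $\mathcal{H}_\mu$; by Schur's lemma every diagonal block $\Pi_\mu X\Pi_\mu$ is then a scalar. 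Moreover $\mathcal{V}^G_{n,k}$ contains the one-local evolutions $e^{\i t\sum_i Q_i}$ generated by the total charge, whose phases separate distinct sectors, so $X$ has no off-diagonal blocks. Hence $X\in\text{Span}\{\Pi_\mu\}=\text{Span}_{\mathbb{C}}\{U(g)\}$, giving Eq.~(\ref{Noeth}).

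\emph{Necessity of the inclusion (the hard direction).} Assume the commutant is minimal. Two facts follow at once. First, by the bicommutant theorem the minimal commutant means the sectors $\{\mathcal{H}_\mu\}$ are pairwise inequivalent irreducible representations of $\mathcal{V}^G_{n,k}$. Second, I claim transitivity: the graph $\mathcal{G}_\mu$ on $B_\mu$ joining two basis vectors that differ on at most $k$ qudits is connected. Indeed, every $k$-local $G$-invariant Hermitian operator, written in the charge basis, has nonzero matrix elements only on the diagonal and between basis vectors differing within the support of its local term and carrying equal charge there, i.e.\ only along edges of $\mathcal{G}_\mu$. If some $\mathcal{G}_\mu$ split into components, the projector onto the span of one component would commute with all of $\mathcal{V}^G_{n,k}$ while not lying in $\text{Span}\{\Pi_\mu\}$, contradicting minimality. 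It then remains to show $\bigoplus_\mu\mathfrak{su}(\mathcal{H}_\mu)\subseteq\mathfrak{v}$; since $[\mathfrak{v},\mathfrak{v}]\subseteq\bigoplus_\mu\mathfrak{su}(\mathcal{H}_\mu)$ the point is equality. I would split this into (a) a per-sector statement, that the image of $\mathfrak{v}$ in each factor $\mathfrak{u}(\mathcal{H}_\mu)$ has derived algebra the full $\mathfrak{su}(\mathcal{H}_\mu)$, and (b) a decoupling statement, that the simple factors are not tied together. For (b) I would invoke a Goursat-type lemma for subalgebras of a product of simple compact Lie algebras that surject onto each factor: such an algebra is the full product except that isomorphic factors may be identified diagonally, but a diagonal identification of $\mathfrak{su}(\mathcal{H}_\mu)$ with $\mathfrak{su}(\mathcal{H}_\nu)$ would render $\mathcal{H}_\mu$ and $\mathcal{H}_\nu$ equivalent $\mathcal{V}^G_{n,k}$-representations, contradicting the inequivalence from the bicommutant step. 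For (a) I would use the charge basis explicitly: the $k$-local $G$-invariant generators furnish, along every edge of $\mathcal{G}_\mu$, both the real hopping $|a\rangle\langle b|+|b\rangle\langle a|$ and its imaginary partner, together with arbitrary $k$-local diagonal phase generators, and connectivity of $\mathcal{G}_\mu$ lets iterated commutators assemble the Cartan and all root vectors of $\mathfrak{su}(\mathcal{H}_\mu)$. The coexistence of real hoppings and independent diagonal phases, a feature special to the Abelian case where the symmetry acts diagonally in a product basis, is precisely what forbids an invariant bilinear or symplectic form and so excludes the orthogonal, symplectic, and other proper irreducible subgroups that can occur for non-Abelian $G$.

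I expect step (a) to be the main obstacle. The difficulty is that a single $k$-local generator $h_S\otimes\mathbb{I}$ does not act as one hopping term but couples all basis pairs sharing the same transition on its support $S$ simultaneously, and across all sectors at once; isolating the action enough to reach the full $\mathfrak{su}(\mathcal{H}_\mu)$ in each sector requires resolving these pairs by taking commutators with diagonal operators supported on $S$ together with a few extra sites (possible since $k\ge 2$ leaves room beyond a minimal transition) and chaining transitions along connecting paths of $\mathcal{G}_\mu$. Controlling this bookkeeping, so as to guarantee that the generated algebra is exactly $\bigoplus_\mu\mathfrak{su}(\mathcal{H}_\mu)$ and neither smaller (a proper irreducible subgroup) nor entangled across sectors, is the technical heart of the argument and is where the Abelian hypothesis is indispensable.
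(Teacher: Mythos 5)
The easy direction of your proof and the reduction of the hard direction to connectivity of the graph $\mathcal{G}_\mu$ are sound, and they mirror the paper's own steps (Schur's lemma plus the argument of lemma \ref{lem0}). But in the hard direction your proposal stops exactly where the proof has to do its work. Your step (a) requires producing, from $k$-local $G$-invariant generators, the individual two-level hoppings $|a\rangle\langle b|+|b\rangle\langle a|$ along edges of $\mathcal{G}_\mu$; these are $n$-local operators (they fix the state of every spectator qudit), so they are emphatically not among the $k$-local generators, and you acknowledge that isolating them is ``the technical heart'' without carrying it out. This is precisely what the paper's lemma \ref{lem4} does: the $k$-local symmetric Hamiltonian $\mathbb{I}_A\otimes\mathbb{X}_B(\pmb{r}_B,\pmb{r}'_B)$ couples all basis pairs sharing the same transition on $B$, and the unwanted pairs are stripped away by the recursive conjugation identity $\exp(\i\theta F_{l+1})=\exp(\i\theta F_l)\, C_{l+1,b}\exp(-\i\theta F_l)\, C_{l+1,b}$ with the 2-local diagonal gates $C_{a,b}$ of Eq.(\ref{Cab}), which after $|A|$ rounds yields $\exp(\i\theta\,\mathbb{X}(\pmb{r};\pmb{r}'))$ exactly. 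Without this (or an equivalent) construction, step (a) is an unproven claim, not a proof.

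Your step (b) has an additional genuine flaw, independent of (a). A Goursat identification of two factors $\mathfrak{su}(\mathcal{H}_\mu)\cong\mathfrak{su}(\mathcal{H}_\nu)$ need not be inner: the graph can run along the outer automorphism $X\mapsto -X^{T}$, in which case $\mathcal{H}_\nu$ carries the representation \emph{dual} to $\mathcal{H}_\mu$, which for $\dim\ge 3$ is irreducible and inequivalent to it as a linear representation. The commutant of such a graph is still just the span of the two projectors, so the ``inequivalence from the bicommutant step'' gives no contradiction, and your argument does not exclude this case. This is not a technicality: identifications of exactly this kind are the mechanism behind the type \textbf{IV} constraints that actually occur for SU($d$), $d\ge 3$, so any correct proof must supply a reason they cannot occur for Abelian $G$. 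The paper avoids the issue entirely (and simultaneously disposes of step (b)) because the two-level generators delivered by lemma \ref{lem4}, chained via Eq.(\ref{commutq}), are supported inside a single irreducible invariant subspace; the group they generate is therefore manifestly $\bigoplus_{\mu,\alpha}\mathrm{SU}(\mathcal{H}_{\mu,\alpha})$ block by block, with no cross-sector coupling left to rule out. (A minor further point: for finite $G$ there is no generator $\sum_i Q_i$, but your easy direction survives since $U(g)=u(g)^{\otimes n}$ itself lies in $\mathcal{V}^G_{n,1}$.)
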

This equation means that $k$-local $G$-invariant unitaries are semi-universal.  Note that the celebrated universality of 2-local unitaries   
in the absence of symmetries \cite{divincenzo1995two, lloyd1995almost}, corresponds to a special case of this theorem,  when the representation of group $G$ on the system is trivial. 

 The proof of this theorem is presented in section \ref{Sec:proof}. To establish this result, first we show the following more general theorem, which does not rely on the no extra conserved observable condition. 

\begin{theorem}\label{Thm2}
For an Abelian group $G$, under the action of $k$-local $G$-invariant unitaries  with $k\ge 2$,   the total Hilbert space of $n$ qudits decomposes into  orthogonal subspaces $\{\mathcal{H}_{\mu,\alpha}\}$ as 
\be\label{decomp}
(\mathbb{C}^d)^{\otimes n}\cong \bigoplus_{\mu\in\text{Irreps}_G(n)} \mathcal{H}_\mu=\bigoplus_{\mu\in\text{Irreps}_G(n)} \bigoplus_\alpha \mathcal{H}_{\mu,\alpha}\ ,
\ee
such that $\mathcal{V}^G_{n,k}$ is block-diagonal with respect to this decomposition and  \be\label{subgroup}
\bigoplus_{\mu,\alpha} \text{SU}\left(\mathcal{H}_{\mu,\alpha}\right)\ \subset \mathcal{V}^G_{n,k} \subseteq \ \bigoplus_{\mu,\alpha} \text{U}\left(\mathcal{H}_{\mu,\alpha}\right)\ .
\ee 
\end{theorem}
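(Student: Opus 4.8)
The plan is to realize the decomposition $\{\mathcal{H}_{\mu,\alpha}\}$ concretely in the product-charge basis and to prove the sandwiching inclusion by an explicit Lie-algebraic synthesis argument, using that $\mathcal{V}^G_{n,k}$ is a compact connected Lie group (Theorem \ref{Thm-1}) whose Lie algebra $\mathfrak{v}$ is the real Lie algebra generated by $\{-\i H\}$ over $k$-local $G$-invariant Hermitian $H$.

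First I would fix, for each qudit, the charge eigenbasis in which $u(g)$ is diagonal and pass to the product basis $\{|\mathbf{c}\rangle = |c_1\cdots c_n\rangle\}$; since $G$ is Abelian, each $|\mathbf{c}\rangle$ carries a definite total charge, so $\bigoplus_\mu \mathcal{H}_\mu$ is a decomposition into coordinate subspaces and every $G$-invariant operator, in particular all of $\mathfrak{v}$, is block-diagonal with respect to it. Next I introduce the elementary generators: for any region $R$ of at most $k$ sites and any two product states $|\mathbf{c}\rangle,|\mathbf{c}'\rangle$ that agree outside $R$ and carry equal charge on $R$, the $k$-local operators $X=|\mathbf{c}'_R\rangle\langle \mathbf{c}_R| + \hc$ and $Y=\i(|\mathbf{c}'_R\rangle\langle \mathbf{c}_R| - \hc)$ (tensored with the identity outside $R$) are $G$-invariant, so $-\i X,-\i Y\in\mathfrak{v}$. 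I would then define a graph $\Gamma$ on $\{|\mathbf{c}\rangle\}$ whose edges join states related by such a move, take $\{S_{\mu,\alpha}\}$ to be the connected components of $\Gamma$ inside each $\mathcal{H}_\mu$, and set $\mathcal{H}_{\mu,\alpha}=\mathrm{span}(S_{\mu,\alpha})$. Because every generator maps basis states of a fixed component to superpositions inside the same component, $\mathcal{V}^G_{n,k}$ is automatically block-diagonal with respect to $\{\mathcal{H}_{\mu,\alpha}\}$, which yields both the stated decomposition and the upper inclusion $\mathcal{V}^G_{n,k}\subseteq\bigoplus_{\mu,\alpha}\text{U}(\mathcal{H}_{\mu,\alpha})$ for free.

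The substance of the proof is the lower inclusion $\bigoplus_{\mu,\alpha}\text{SU}(\mathcal{H}_{\mu,\alpha})\subset\mathcal{V}^G_{n,k}$, equivalently $\bigoplus_{\mu,\alpha}\mathfrak{su}(\mathcal{H}_{\mu,\alpha})\subseteq\mathfrak{v}$. The obstruction is that each elementary generator acts \emph{collectively}: $X$ above rotates, by the same angle, every pair $\{|\mathbf{c}_R\rangle|r\rangle,\,|\mathbf{c}'_R\rangle|r\rangle\}$ as the configuration $r$ of the complementary sites varies, and these parallel copies may be spread across several components. My plan is to decouple these copies by taking commutators of generators that share at least one site: such commutators are higher-body (but still lie in $\mathfrak{v}$) and correlate the shared region with its neighborhood, and by iterating one can localize the action onto individual basis-state pairs within a single component. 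Once an independent two-level $\mathfrak{su}(2)$ is available on each edge of $\Gamma$, I would invoke the elementary fact that two-level (Givens) rotations associated with a connected set of edges generate the full special unitary group of the spanned subspace; applied componentwise this gives $\mathfrak{su}(\mathcal{H}_{\mu,\alpha})$ acting independently on each block, which in turn shows the blocks are pairwise inequivalent and irreducible.

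The main obstacle, and the step that genuinely distinguishes the Abelian case, is twofold: completing the decoupling so that the generated Lie algebra is exactly $\bigoplus_{\mu,\alpha}\mathfrak{su}(\mathcal{H}_{\mu,\alpha})$ and not merely enough to generate the full matrix algebra associatively, and ruling out that $\mathfrak{v}$ restricts to a proper irreducible subalgebra such as $\mathfrak{so}$ or $\mathfrak{sp}$ on some block (the type \textbf{III} phenomena that do occur for non-Abelian $G$). The key input removing both issues is that, for an Abelian symmetry, \emph{both} couplings $X$ and $Y$ are symmetric on every edge, so each edge carries the full complex two-level algebra rather than only a real (orthogonal) or quaternionic (symplectic) part; together with connectivity this forces the full $\text{SU}$ on each block. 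I expect the careful bookkeeping of the commutator/decoupling construction, and verifying that it works uniformly for all $k\ge 2$, to be the most technical part.
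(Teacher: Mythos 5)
Your overall architecture matches the paper's: you work in the product charge basis, define the blocks $\mathcal{H}_{\mu,\alpha}$ through connectivity under charge-preserving moves on at most $k$ sites (the paper defines them instead through $\text{Comm}(\mathcal{V}^G_{n,k})$ and then proves, in its Lemma \ref{lem0}, that this coincides with your graph components), obtain the upper inclusion from block-diagonality, and reduce the lower inclusion to producing two-level (Givens) rotations on pairs of basis states inside a single block. One minor repair first: for the upper inclusion you must argue that \emph{every} $k$-local $G$-invariant unitary, not only your elementary generators $X,Y$, respects the components of $\Gamma$. This does hold, but it needs the observation that a $k$-local $G$-invariant unitary supported on a region $R$ commutes with $u(g)^{\otimes |R|}$, hence its matrix elements vanish between basis states whose restrictions to $R$ carry different charges, so it only connects basis states along edges of $\Gamma$.

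The substantive gap is the decoupling step, which is exactly where the entire difficulty of the theorem sits. You correctly identify that the elementary generator $\mathbb{I}_{\overline{R}}\otimes\mathbb{X}_R$ rotates all configurations of the complementary sites in parallel, and you propose to localize it ``by taking commutators of generators that share at least one site \dots by iterating.'' But this is a statement of intent, not a construction: you never exhibit the operators the iteration produces, nor show that it terminates in a two-level operator supported on a single pair $|\pmb{r}\rangle,|\pmb{r}'\rangle$, uniformly for all $k\ge 2$, all Abelian $G$, and all local representations. Concretely, commutators of overlapping collective hopping terms produce string-dressed operators --- e.g., for U(1) on qubits, $[\mathbb{X}_{12},\mathbb{Y}_{23}]\propto \sigma_x\sigma_z\sigma_x+\sigma_y\sigma_z\sigma_y$ --- and one must then show how projector-dressed terms can be assembled from these and iterated without leaving a controllable family; this bookkeeping is essentially the ``lengthy Lie-algebraic argument'' of the earlier U(1) work that the present paper deliberately replaces. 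The paper's substitute is its Lemma \ref{lem4}: it starts from the bare $k$-local generator $\mathbb{I}_A\otimes\mathbb{X}_B(\pmb{r}_B,\pmb{r}'_B)$, which is $G$-invariant because the two configurations carry equal charge on $B$, and dresses it one control site at a time using the 2-local diagonal $G$-invariant gates $C_{a,b}=\mathbb{I}-2|r_a\rangle\langle r_a|\otimes|r_b\rangle\langle r_b|$ through the closed-form, group-level recursion $F_{l+1}=F_l-C_{l+1,b}\,F_l\,C_{l+1,b}$, equivalently $\exp(\i\theta F_{l+1})=\exp(\i\theta F_l)\,C_{l+1,b}\exp(-\i\theta F_l)\,C_{l+1,b}$, which after $|A|$ steps yields $\exp\big(\i\theta\,\mathbb{X}(\pmb{r};\pmb{r}')\big)$ exactly. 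Until you supply either this construction or a completed version of your commutator iteration, the inclusion $\bigoplus_{\mu,\alpha}\text{SU}(\mathcal{H}_{\mu,\alpha})\subset\mathcal{V}^G_{n,k}$ is not established. By contrast, your worry about $\mathfrak{so}$ or $\mathfrak{sp}$ obstructions needs no separate treatment: once genuine two-level $\mathbb{X},\mathbb{Y},\mathbb{Z}$ rotations on a connected set of pairs are in hand, they already generate the full special unitary group on each block, which is why the paper never addresses this point explicitly.
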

Equivalently, Eq.(\ref{subgroup}) can be restated as
\be\label{def:commut}
\mathcal{SV}^G_{n,k}\equiv \big[\mathcal{V}^G_{n,k}\ ,\mathcal{V}^G_{n,k}\big]=\bigoplus_{\mu,\alpha} \text{SU}\left(\mathcal{H}_{\mu,\alpha}\right)\ ,
\ee 
which fully characterizes the commutator subgroup of $\mathcal{V}^G_{n,k}$, denoted by $\mathcal{SV}^G_{n,k}$. 

In summary, in the case of Abelian symmetries the locality of interactions only imposes constraints of types \textbf{I} and  \textbf{II} (i.e., no  types \textbf{III} and  \textbf{IV} constraints). Furthermore, the no extra conserved observable condition rules out type \textbf{II} constraints, which means  the only constraints on the realizable unitaries are of type \textbf{I}.



\subsection{Transitivity implies Semi-universality.}

In the example of the cyclic symmetry we saw that the extra conserved observables are related to the fact that, due to the locality of interactions, the charge
associated to the symmetry cannot be arbitrarily redistributed in the system. In the following,  
we show that this interpretation can be generalized to all Abelian symmetries.   

An important property of  Abelian symmetries, which plays a crucial role in all the arguments in this paper, is the additivity of the charge associated to the symmetry. In particular, the charges (irreps) of the subsystems uniquely determine the total charge in the system. Let $|r\rangle: r=0,\cdots, d-1$ be an orthonormal basis for $\mathbb{C}^d$, 
with the property that  single-qudit unitaries $\{u(g): g\in G\}$ are simultaneously diagonal in this basis (Note that such basis exists because the group is Abelian).  The $n$-fold tensor product of these states define an orthonormal basis for the total Hilbert space $(\mathbb{C}^d)^{\otimes n}$, denoted as 
\be\label{form}
\pmb{B}_n=\big\{{|\pmb{r}}\rangle=|r_1\rangle\otimes \cdots \otimes |r_n\rangle\ \  : r_j=0,\cdots, d-1\big\}\ ,
\ee
where $\pmb{r}=r_1\cdots r_n$. 
Each basis element is an eigenvector of unitaries $U(g)=u(g)^{\otimes n}$, and therefore is a vector in a single charge sector $\mathcal{H}_\mu$.   In the following, the Hamming distance between  $\pmb{r}=r_1\cdots r_n$, and  $\pmb{r}'=r'_1\cdots r'_n$, denoted by $d(\pmb{r}, \pmb{r}')$, is  the  
 number of qudits which are assigned different reduced states by $|\pmb{r}\rangle$ and $|\pmb{r}'\rangle$. The following result extends Theorem \ref{Thm3} by including an equivalent condition.\\

 \begin{thmbis}{Thm3}\label{Thm4} 
For an Abelian group $G$, let $\bigoplus_{\mu} \mathcal{H}_\mu$ be the decomposition of the total Hilbert space of $n$ qudits into subspaces with inequivalent irreps (charges) of group $G$. For $k\ge 2$, the  following statements are equivalent:
\begin{enumerate}
\item  \textbf{Semi-universality:} 
$${\bigoplus_{\mu\in\text{Irreps}_G(n)} \text{SU}(\mathcal{H}_{\mu})\subset \mathcal{V}^G_{n,k}}\ . $$\vspace{-4mm}
\item \textbf{No extra conserved observable condition:} Eq.(\ref{Noeth}), or, equivalently, Eq.(\ref{Abs}) holds.
\item \textbf{Transitivity over the basis elements with equal  charge (irrep):} For any pair of basis elements  $|\pmb{r}\rangle, |\pmb{r}'\rangle\in \pmb{B}_n$ that belong to the same charge sector $\mathcal{H}_\mu$, there exists a sequence of  elements of  $\pmb{B}_n$ connecting 
$|\pmb{r}\rangle$ to $|\pmb{r}'\rangle$  as 
\be\nonumber
|\pmb{r}\rangle=|\pmb{s}^1\rangle \longrightarrow  |\pmb{s}^2\rangle \longrightarrow  \cdots \cdots\longrightarrow  |\pmb{s}^t\rangle=|\pmb{r}'\rangle\ ,
\ee
such that  (i) all states $|\pmb{s}^j\rangle$ are in the same charge sector $\mathcal{H}_{\mu}$, and (ii)  the Hamming distance between any consecutive pair $\pmb{s}^{j}$ and $\pmb{s}^{j+1}$ is 
 $d(\pmb{s}^{j}, \pmb{s}^{j+1})\le k$. 
\end{enumerate}
\end{thmbis}


The condition in statement 3 has a simple (classical) interpretation: it means that the charge associated to the symmetry can be arbitrarily redistributed in the system via a sequence of  $k$-local charge-conserving   operations. According to the theorem, if this property holds, 
then there are no extra conserved observables and  all the symmetric unitaries  are realizable by $k$-local symmetric unitaries, up to relative phases between sectors with different charges (type \textbf{I} constraints).   


It is also worth noting that 
 using swap unitaries,  which are 2-local and $G$-invariant, any state $|\pmb{r}\rangle\in \pmb{B}_n$ can be mapped to an arbitrary permuted version of this state. Therefore, for testing the condition in statement 3, the only relevant property of  $|\pmb{r}\rangle$ and $|\pmb{r}'\rangle$ is the number of qudits 
in each irrep $\mu\in\text{Irreps}_G(1)$ (That is, the order of qudits does not matter).   
  In the rest of this section, we discuss the implications of these theorems.

 
%
%
%
%

\subsection{Finite Abelian groups}
In the case of finite  groups, $|\text{Irreps}_G(n)|$ is bounded by the order of the group, denoted by $|G|$. 
Let $l_\text{min}$ be the smallest positive integer satisfying
\be\label{lmin}
|\text{Irreps}_G(l_\text{min})|=|\text{Irreps}_G(n)|\le |G|\ .
\ee
Roughly speaking, this means that the total charge in the system with $n$ qudits can be compressed into $l_\text{min}$ qudits. Then, it can be easily seen that statement 3 of theorem \ref{Thm4} holds  for $k=l_\text{min}+1$:  one can use $l_\text{min}$ qudits as a \emph{charge reservoir} and by coupling them sequentially to all other qudits in the system via $(l_\text{min}+1)$-local $G$-invariant  unitaries,  
 one can transform any basis element  $|\pmb{r}\rangle$ to $|\pmb{r}'\rangle$, provided that they have equal total charges.   We conclude that the three statements in theorem \ref{Thm4} hold for $k\ge l_\text{min}+1$. For instance, if each qudit carries the regular representation of the group, then $l_\text{min}=1$, independent of $n$, which means 2-local $G$-invariant unitaries are semi-universal.
 
 Next, we note that  for all $l<l_\text{min}$, $|\text{Irreps}_G(l)|$  monotonically increases with $l$. More precisely,    $|\text{Irreps}_G(l+1)|>|\text{Irreps}_G(l)|$ (See Appendix \ref{Sec:App1}). Furthermore, assuming each qudit has more than a single charge sector, which means for some group element $g\in G$ the unitary $u(g)$ is not a global phase, then $|\text{Irreps}_G(1)|\ge 2$. Together with Eq.(\ref{lmin}) this means
\be
l_\text{min}\le |G|-1\ .
\ee 
Then, applying theorem \ref{Thm4}, we arrive at 
\begin{corollary}\label{cor3}
Let $G$ be an arbitrary finite Abelian group with order $|G|$. For $k\ge \text{min}(|G|, 2)$, $k$-local $G$-invariant unitaries are semi-universal, i.e., Eq.(\ref{semi}) holds.
\end{corollary}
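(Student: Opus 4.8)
The plan is to derive the corollary directly from Theorem~\ref{Thm4}, whose statements 1 and 3 are equivalent for $k\ge 2$: it suffices to show that $k$-local symmetric unitaries act transitively on the basis elements $\pmb B_n$ within each charge sector $\mathcal H_\mu$, for the claimed range of $k$. By the additivity of the Abelian charge this becomes a purely combinatorial charge-redistribution question, which I would settle by a charge-reservoir construction controlled by the integer $l_\text{min}$ of Eq.~(\ref{lmin}).

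Concretely, I would fix a reference single-qudit state $|0\rangle$ and designate $l_\text{min}$ qudits as a reservoir $R$. Given an arbitrary $|\pmb r\rangle\in\pmb B_n$ in sector $\mathcal H_\mu$, I canonicalize it by processing the $n-l_\text{min}$ non-reservoir qudits one at a time: to each qudit $j$ I apply a single symmetric unitary supported on $R\cup\{j\}$ (hence $(l_\text{min}+1)$-local) that resets qudit $j$ to $|0\rangle$ and transfers its charge into $R$. Each such gate preserves the total charge, stays within $\mathcal H_\mu$, and alters at most $l_\text{min}+1$ qudits, so it is a legal step of Hamming distance $\le l_\text{min}+1$. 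The point I would verify is coset-consistency: at every step the reservoir charge needed to keep the total fixed lies in the set of charges realizable on $R$, because once $l\ge l_\text{min}$ the charges attainable on $l$ qudits form a single translate of the subgroup generated by the single-qudit charges. After all non-reservoir qudits are reset, $|\pmb r\rangle$ has been connected to a canonical state $|c(\mu)\rangle$ determined solely by $\mu$; two basis elements with equal charge therefore reach the same $|c(\mu)\rangle$, and concatenating one path with the reverse of the other yields transitivity for every $k\ge l_\text{min}+1$.

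It then remains to bound $l_\text{min}$. Letting $S$ be the set of charges carried by one qudit, the charges attainable on $l$ qudits form the $l$-fold sumset of $S$, so $|\text{Irreps}_G(l)|$ is the size of that sumset. After translating the charges so that $0\in S$, these sumsets form an increasing chain that grows strictly until it saturates at the subgroup $\langle S\rangle$; since each qudit carries at least two charges ($|S|\ge 2$), saturation occurs within $|\langle S\rangle|-1\le|G|-1$ steps, giving $l_\text{min}\le|G|-1$. Combining with the previous paragraph, transitivity --- and hence, by Theorem~\ref{Thm4}, semi-universality --- holds for all $k\ge l_\text{min}+1$, and in particular for all $k\ge\max(|G|,2)$, that is $k\ge|G|$ for any nontrivial group and $k\ge 2$ in the trivial case. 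The $\mathbb Z_p$ example, where $k=p=|G|$ is genuinely required, shows that $2\le k<|G|$ does not suffice, so the correct threshold in the statement is $\max(|G|,2)$ rather than $\min(|G|,2)$.

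The main obstacle will be the strict-growth property $|\text{Irreps}_G(l+1)|>|\text{Irreps}_G(l)|$ for $l<l_\text{min}$ that is used to bound $l_\text{min}$. This is exactly the claim that an iterated sumset in a finite abelian group strictly increases until it becomes a coset of a subgroup, which I would establish by a Kneser-type argument: if the $l$-fold and $(l+1)$-fold sumsets coincide, then the $l$-fold sumset is already closed under adding $S$ and, containing $0$ after translation, equals $\langle S\rangle$, so it has saturated. Compared with this counting step, the reservoir construction itself is routine once coset-consistency has been checked.
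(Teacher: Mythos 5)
Your proof is correct and follows essentially the same route as the paper: establish statement 3 of Theorem \ref{Thm4} via an $l_\text{min}$-qudit charge reservoir coupled sequentially to the remaining qudits with $(l_\text{min}+1)$-local symmetric gates, then bound $l_\text{min}\le |G|-1$ by the strict growth of $|\text{Irreps}_G(l)|$ until saturation (your sumset-stabilization argument is exactly the appendix's observation that $\text{Irreps}_G(l+1)=\text{Irreps}_G(l)$ forces the chain to stabilize). Your correction is also right: the stated threshold $k\ge\min(|G|,2)$ is a typo for $k\ge\max(|G|,2)$, as confirmed by the derivation preceding the corollary ($k\ge l_\text{min}+1$ with $l_\text{min}\le|G|-1$), by the $\mathbb{Z}_p$ example where $k\ge p=|G|$ is genuinely required, and by the paper's own table.
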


\subsection{Type \textbf{I} constraints}\label{sec:typeI}

Combining theorem \ref{Thm4} with
the results of \cite{marvian2022restrictions} discussed in Appendix \ref{Sec:App1}, it can be shown that for a connected Abelian group $G$, such as U(1), 
 if  any/all of the equivalent conditions in theorem \ref{Thm4} are  satisfied, then the bound in Eq.(\ref{bound1}) holds as equality, i.e.,
\be\label{bound2}
\text{dim}(\mathcal{V}_{n,n}^G)-\text{dim}(\mathcal{V}^G_{n,k})= |\text{Irreps}_G(n)|-|\text{Irreps}_G(k)|\ .
\ee
Furthermore, this equation  holds even if $G$ is not 
 connected, but $\Tr(u(g))\neq 0$ for all $g\in G$  (See Appendix \ref{Sec:App1}).

It is also worth noting that when $G$ is a connected Abelian group, if each qudit  has more than a single charge sector, such that for some group elements $g\in G$, $u(g)$ is not a global phase, then, $|\text{Irreps}_G(k)|< |\text{Irreps}_G(n)|$ for $k<n$ (See Appendix \ref{Sec:App1}). This implies that, unless $k=n$, general symmetric unitaries cannot be realized with $k$-local symmetric unitaries\footnote{On the other hand, in the case of non-Abelian groups such as SU(2), it is possible to achieve $|\text{Irreps}_G(k)|= |\text{Irreps}_G(n)|$ for $k<n$  \cite{marvian2022rotationally}.}.  The latter statement also holds  even if $G$ is not connected but  $\Tr(u(g))=0$ for a group element $g\in G$ 
 (See Appendix \ref{Sec:App1}).

\subsection{Examples}\label{Sec:Ex}

\noindent\textbf{U(1) group:} For a system of $n$ qubits, consider the U(1) symmetry corresponding to rotations around the z axis, $\exp(\i\theta \sigma_z/2): \theta\in[0,2\pi)$. Then, the basis $\pmb{B}_n=\{|0\rangle,|1\rangle\}^{\otimes n}$ is the standard (computational) basis for $n$ qubits. Any pair of bit strings with equal  Hamming weights (i.e., the same number of 1's) can be converted to each other by a sequence of  swaps, which are 2-local  and respect the U(1) symmetry. Then, theorem \ref{Thm4} immediately implies that 2-local U(1)-invariant unitaries are semi-universal. Furthermore, 
because U(1) is a connected group,  Eq.(\ref{bound2})
implies that for $k\ge 2$,
\be
\text{dim}(\mathcal{V}_{n,n}^{U(1)})-\text{dim}(\mathcal{V}^{U(1)}_{n,k})=n-k \ .
\ee
These results were previously obtained in \cite{marvian2022restrictions}, using Lie-algebraic arguments that are specific to U(1) symmetry on qubits (namely, by considering the nested commutators of the Hamiltonian  $\sigma_x\otimes \sigma_x+\sigma_y\otimes \sigma_y$). \\

\noindent\textbf{Cyclic groups:}  Recall the example of qubit systems with the cyclic group  $\mathbb{Z}_p$ in section \ref{Sec:cyc}. This group has $p$ distinct 1D irreps 
\be\nonumber
f_s(a)=\exp(\frac{\i 2\pi a s}{p})\ \ \ \  : s=0,\cdots, p-1\ ,
\ee
for arbitrary group element $a\in\{0,\cdots, p-1\}$.   The total $\mathbb{Z}_p$ charge of the $n$-qubit state $|\pmb{r}\rangle=|r_1\rangle\cdots |r_n\rangle\in\{|0\rangle,|1\rangle\}^{\otimes n}$ is determined by
\be\nonumber
s_{\text{tot}}=\sum_{j=1}^n r_{j}\ \ \ \ \ \ (\text{mod} \ p)\ .
\ee 
For a system with $n$ qubits, if $n < p$ then the value of this charge  uniquely determines the Hamming weight 
of $r_1\cdots r_n$. This means that under $\mathbb{Z}_p$-invariant gates acting on $k<p$ qubits,  the Hamming weight $\sum_{j=1}^n r_{j}$ is conserved.  On the other hand,  since $|0\rangle^{\otimes p}$ and $|1\rangle^{\otimes p}$ have equal total $\mathbb{Z}_p$ charges,   using $k$-local $\mathbb{Z}_p$-invariant  unitaries with $k\ge p$, one can transform  any basis element 
 $|\pmb{r}\rangle$ to a "standard"  form with Hamming weight  in the range $0,\cdots, p-1$.
 It follows that condition 3 in theorem \ref{Thm4} is satisfied for $k\ge p$. Therefore, $k$-local $\mathbb{Z}_p$-invariant  unitaries are semi-universal, which can also be seen using corollary \ref{cor3}.
  
   Interestingly, in this case type \textbf{I} constraints depend on whether 
$p$ is odd or even. For even $p$, we have $\Tr(u(p/2))=0$, whereas for odd $p$, $\Tr(u(a))\neq 0$ for all $a\in\{0,\cdots, p-1\}$. Then, the results of section \ref{sec:typeI} imply 
\begin{align}
p \text{  is odd } \ \ \ &\Longrightarrow\ \ \  \mathcal{V}_{n,k}=\mathcal{V}_{n,n}\ \ \  \ \  \  : k\ge p \nonumber \\
\ p \text{ is even} \ \ \ &\Longrightarrow\ \ \  \mathcal{V}_{n,k}\neq \mathcal{V}_{n,n}\nonumber\ ,  \ \  \  \  : k< n\ ,
\end{align}
where we have omitted the superscript $\mathbb{Z}_p$ (We note that independent of and after the present work, Ref. \cite{jain2023locality} 
discusses this example and reports similar results).

\section{Which $k$-local $G$-invariant gates are needed?} \label{Sec:which}

So far, our focus has been on the group generated by all $k$-local $G$-invariant unitaries, denoted as $\mathcal{V}^G_{n,k}$. However, upon closer examination of the constructive proof of Theorem \ref{Thm2} presented in Sec.  \ref{Sec:proof} (which also implies Theorems \ref{Thm3} and \ref{Thm4}), it becomes clear that only specific families of $k$-local $G$-invariant unitaries are required. These restricted families may not necessarily generate all elements of $\mathcal{V}^G_{n,k}$. But, they still  generate its commutator subgroup $\mathcal{SV}^G_{n,k}$ in Eq.(\ref{def:commut}). The following corollary presents a list of all $k$-local $G$-invariant gates that are needed in this construction, to generate $\mathcal{SV}^G_{n,k}$ (Note that since $\mathcal{V}^G_{n,k}$ itself is contained in $\bigoplus_{\mu,\alpha} \text{U}\left(\mathcal{H}_{\mu,\alpha}\right)$,  the extra $k$-local $G$-invariant gates that are not  included in the following list can only affect the relative phases between sectors $\{\mathcal{H}_{\mu,\alpha}\}$).  


In the following, for any pair of basis elements $|\textbf{r}\rangle, |\textbf{r}'\rangle\in \pmb{B}_k$, we say $|\textbf{r}\rangle$ and $|\textbf{r}'\rangle$ have the same total charge if
\be\label{equal-charge}
 \langle\pmb{r}|u(g)^{\otimes k}|\pmb{r}\rangle= \langle\pmb{r}'|u(g)^{\otimes k}|\pmb{r}'\rangle\ \ : \forall g\in G\ ,
 \ee
which means they carry the same irrep of group $G$.

\begin{corollary}\label{cor9}
Any unitary in the group $\mathcal{SV}^G_{n,k}$ can be realized using the following 3 families of $G$-invariant unitaries:  
\begin{enumerate}
\item Single-qudit unitaries  $\exp(\frac{\i\pi }{2}|r\rangle\langle r|)$, for all $|r\rangle\in \pmb{B}_1$.
\item 2-qudit  Hermitian unitaries $C_{12}=\mathbb{I}_d^{\otimes 2}-2|r_{1}\rangle\langle r_{1}|_{1}\otimes |r_2\rangle\langle  r_2|_2$, for all $|r_{1}\rangle, |r_{2}\rangle\in \pmb{B}_1$.

\item The family of unitaries  $\exp(\i \theta \mathbb{X}(\textbf{t} ; \textbf{t}'))$ for $\theta\in[0,2\pi)$ and $\mathbb{X}(\textbf{t} ; \textbf{t}')\in \textbf{H}_\text{redist}$, where $\textbf{H}_\text{redist}$ is any set  of $G$-invariant Hermitian operators in the form $\mathbb{X}(\textbf{t} ; \textbf{t}')=|\textbf{t}\rangle\langle \textbf{t}'|+|\textbf{t}'\rangle\langle \textbf{t}|$ with  $|\textbf{t}\rangle, |\textbf{t}'\rangle\in \pmb{B}_l$ for some $l\le k$ satisfying the following property:    $\textbf{H}_\text{redist}$ acts transitively on the elements of $\pmb{B}_k$ that have the same total charge, i.e.,   for any  $|\textbf{r}\rangle, |\textbf{r}'\rangle\in \pmb{B}_k$, if Eq.(\ref{equal-charge}) holds, then there exists a sequence of elements   
$\mathbb{X}(\textbf{t}^{w+1} ; \textbf{t}^w),\cdots,  \mathbb{X}(\textbf{t}^{2} ; \textbf{t}^{1})  \in  \textbf{H}_\text{redist}$ such that
\be\label{trans4}
|\pmb{r}'\rangle=\mathbb{X}_{A_w}(\textbf{t}^{w+1} ; \textbf{t}^w)\cdots  \mathbb{X}_{A_1}(\textbf{t}^2 ; \textbf{t}^1)|\pmb{r}\rangle\ ,
\ee 
where $\mathbb{X}_{A_j}(\textbf{t}^{j+1} ; \textbf{t}^j)$ acts as $\mathbb{X}(\textbf{t}^{j+1} ; \textbf{t}^j)$ on the subset  $A_j$ of qudits and acts trivially on the rest of qudits.

\end{enumerate}
\end{corollary}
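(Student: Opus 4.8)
The plan is to prove the corollary at the level of Lie algebras. Since $\mathcal{SV}^G_{n,k}=\bigoplus_{\mu,\alpha}\text{SU}(\mathcal{H}_{\mu,\alpha})$ (Theorem \ref{Thm2}) is a connected compact group, it suffices to show that the real Lie algebra $\mathfrak{g}$ generated by the anti-Hermitian generators $\{i\mathbb{X}_A(\textbf{t};\textbf{t}')\}$ of Family 3, closed under commutators and under conjugation by the (discrete) unitaries of Families 1 and 2, equals $\bigoplus_{\mu,\alpha}\su(\mathcal{H}_{\mu,\alpha})$. The inclusion $\mathfrak{g}\subseteq\bigoplus_{\mu,\alpha}\su(\mathcal{H}_{\mu,\alpha})$ is the easy direction: every generator is $G$-invariant and, being either diagonal in $\pmb{B}_n$ or a charge-conserving $\le k$-local hopping, is block-diagonal with respect to the decomposition $\{\mathcal{H}_{\mu,\alpha}\}$; moreover the Family-3 generators are off-diagonal hence traceless on each block, and tracelessness and block-diagonality are preserved under real linear combinations, brackets, and conjugation.

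For the reverse inclusion I would argue block by block, working with the fact that each $\mathcal{H}_{\mu,\alpha}$ is spanned by an orbit of $\pmb{B}_n$ under $k$-local charge-conserving moves. First (connectivity): using that $\textbf{H}_\text{redist}$ acts transitively on the equal-charge elements of $\pmb{B}_k$, any charge-conserving permutation supported on a window of $k$ qudits factors into a product of $\textbf{H}_\text{redist}$-moves on that window; consequently the orbits generated by $\textbf{H}_\text{redist}$ applied across all $\le k$-qudit subsets coincide with the orbits of the full set of $k$-local charge-conserving moves, i.e. with the supports of the blocks $\mathcal{H}_{\mu,\alpha}$. Thus within each block the hopping graph on basis elements is connected. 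The goal is then to show that $\mathfrak{g}$ contains, for every pair $\textbf{a},\textbf{b}$ of basis elements in the same block, the three generators $X_{\textbf{a}\textbf{b}}=|\textbf{a}\rangle\langle\textbf{b}|+|\textbf{b}\rangle\langle\textbf{a}|$, $Y_{\textbf{a}\textbf{b}}=i(|\textbf{a}\rangle\langle\textbf{b}|-|\textbf{b}\rangle\langle\textbf{a}|)$ and $Z_{\textbf{a}\textbf{b}}=|\textbf{a}\rangle\langle\textbf{a}|-|\textbf{b}\rangle\langle\textbf{b}|$, since these span $\su(\mathcal{H}_{\mu,\alpha})$.

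The crux, and the step I expect to be the main obstacle, is that a single embedded generator $\mathbb{X}_A(\textbf{t};\textbf{t}')$ does not act on one pair but collectively, as the direct sum $\sum_{\textbf{w}} X_{(\textbf{t},\textbf{w}),(\textbf{t}',\textbf{w})}$ over all spectator configurations $\textbf{w}$ on the complementary qudits $A^c$. One must resolve this collective hopping into individual pair generators. The tool is conjugation by diagonal gates, which multiplies each term $|(\textbf{t},\textbf{w})\rangle\langle(\textbf{t}',\textbf{w})|$ by the corresponding diagonal phase difference. Here the two-body gates of Family 2 are essential: a product of single-qudit phases (Family 1) induces a phase difference $\sum_{j\in A}(\chi_j(t_j)-\chi_j(t'_j))$ that is the same for every spectator $\textbf{w}$ and so cannot separate the terms, whereas the correlated phases created by $C_{12}$ with one qudit in $A$ and one in $A^c$ depend on $\textbf{w}$. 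Forming enough independent conjugates so that a sign-matrix (Vandermonde-type) inversion isolates a single column then extracts each individual $iX_{\textbf{a}\textbf{b}}$ inside $\mathfrak{g}$; verifying that Families 1 and 2 supply enough independent spectator-dependent phases to make this inversion possible is the technical heart of the argument.

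Finally, to pass from the real symmetric generators to all of $\su$: for a fixed connected pair $\textbf{a},\textbf{b}$ choose a qudit in which they differ and apply a Family-1 phase there to obtain a diagonal gate inducing a relative phase $\phi\neq 0\ (\mathrm{mod}\ \pi)$ between $\textbf{a}$ and $\textbf{b}$; then the conjugate of $X_{\textbf{a}\textbf{b}}$ equals $\cos\phi\, X_{\textbf{a}\textbf{b}}+\sin\phi\, Y_{\textbf{a}\textbf{b}}$, so $Y_{\textbf{a}\textbf{b}}\in\mathfrak{g}$, and $[\,iX_{\textbf{a}\textbf{b}},iY_{\textbf{a}\textbf{b}}\,]\propto iZ_{\textbf{a}\textbf{b}}$ supplies the Cartan generator. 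Combined with connectivity, which lets the $Z_{\textbf{a}\textbf{b}}$ span the full diagonal part of each block, this gives $\bigoplus_{\mu,\alpha}\su(\mathcal{H}_{\mu,\alpha})\subseteq\mathfrak{g}$ and closes the argument. Everything after the separation step is standard $\su(N)$ bookkeeping; the genuine difficulty is disentangling the collective transport terms using only the restricted diagonal gates of Families 1 and 2.
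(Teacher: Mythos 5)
Your proposal follows essentially the same route as the paper, merely phrased at the Lie-algebra level instead of as circuit identities: both arguments (i) combine Lemma~\ref{lem0} with the transitivity of $\textbf{H}_\text{redist}$ on $\pmb{B}_k$ to reduce everything to hopping moves supported on a window of at most $k$ qudits, (ii) resolve the collective (spectator-summed) action of such a move into individual pair generators by exploiting the Family-2 gates $C_{a,b}$ of Eq.~(\ref{Cab}), and (iii) finish with Family-1 conjugations, brackets, and the standard fact that 2-level generators span $\mathfrak{su}$ of each block. So there is no genuinely different decomposition or lemma here; the difference is only whether the pinning of spectators is done on Hamiltonians inside the Lie algebra (your version) or on unitaries via a circuit identity (the paper's version, Eq.~(\ref{eq26})).

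The one incomplete point is exactly the step you flag as ``the technical heart'': you leave unverified that Families 1 and 2 provide enough independent spectator-dependent phases for your Vandermonde-type inversion. This step is in fact simpler than you anticipate, and it is precisely what the paper's Lemma~\ref{lem651} establishes; no inversion over many conjugates is needed. Choose a qudit $b$ inside the active window on which $\textbf{t}$ and $\textbf{t}'$ differ, i.e. $t_b\neq t'_b$, and take $C_{a,b}=\mathbb{I}-2|r_a\rangle\langle r_a|_a\otimes|r_b\rangle\langle r_b|_b$ with $r_b=t_b$ and $a$ a spectator qudit. Conjugating the collective generator by this diagonal gate multiplies the term labeled by spectator configuration $\textbf{w}$ by the sign $d(\textbf{t},\textbf{w})\,d(\textbf{t}',\textbf{w})$, and since exactly one of $t_b,t'_b$ equals $r_b$, this product equals $-1$ precisely when $w_a=r_a$. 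Hence the single real combination $\tfrac12\big(F-C_{a,b}\,F\,C_{a,b}\big)$ projects the collective hopping onto the terms with spectator $a$ pinned to $|r_a\rangle$; iterating over the spectator qudits one at a time (the paper's recursion $F_{l+1}=F_l-C_{l+1,b}F_lC_{l+1,b}$ in Eq.~(\ref{rec71})) isolates the individual pair generator. Because the Lie algebra you generate is a real vector space closed under conjugation by the group elements of Families 1 and 2, each such combination stays inside $\mathfrak{g}$, so with this observation your argument closes and agrees with the paper's proof; your correct remark that Family-1 phases alone cannot achieve this separation is also why the paper lists the 2-qudit gates as an indispensable family.
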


Note that the first two families of unitaries are independent of parameter $k$ and the structure of group $G$. For instance, for the above examples of U(1) and its cyclic subgroup $\mathbb{Z}_p$, these families are identical. Furthermore, the property that is required for the third family is that  $\textbf{H}_\text{redist}$ should act transitively  on the elements of $\pmb{B}_k$ that have the same charge. In general, for $k<n$ , $\textbf{H}_\text{redist}$ does not act transitively on elements  of  $\pmb{B}_n$ that belong to the same charge sector.  If it does, then corollary \ref{cor9} implies that $\mathcal{SV}^G_{n,k}=\mathcal{SV}^G_{n,n}$, which means semi-universality can be achieved with the above three gate sets (This, in particular, implies  theorem  
\ref{Thm3}). 


It is also worth noting  that under the first two families of gates the charge of each qudit is conserved. That is, such unitaries commute with $\bigotimes_{i=1}^n u(g_i)$ for all $g_i\in G$. On the other hand, the third family of unitaries, which correspond to Hamiltonians in $\textbf{H}_\text{redist}$, redistribute the charge in the system, which is crucial  for semi-universality.



\subsection*{Examples}

\noindent\textbf{No symmetries:}  
A special case of this framework is a system of qubits  with no symmetries, or, equivalently, with the trivial representation of the symmetry. In this case we can choose $ \pmb{B}_n=\{|0\rangle,|1\rangle\}^{\otimes n}$, which is the computational basis. Then, up to  a global phase,   the first family of unitaries in the above gate sets is generated by the single-qubit gate $S=\exp(\i\frac{\pi}{2}|1\rangle\langle 1|)$ (Note that $\exp(\i\frac{\pi}{2}|0\rangle\langle 0|)=\i S^3$). The second family is the 2-qubit Controlled-Z gate, also denoted as  $\text{CZ}_{12}=\mathbb{I}^{\otimes 2}-2|1\rangle\langle 1|_{1}\otimes |1\rangle\langle  1|_2$, and 
three other  2-qubit gates obtained from this gate by sandwiching it between Pauli x operators, namely 
\be\label{CZ4}
(\sigma^{r_1}_x\otimes \sigma^{r_2}_x) \text{CZ}_{12}(\sigma^{r_1}_x\otimes \sigma^{r_2}_x)\ \ \ \ \  : \ r_1,r_2=0,1\ . 
\ee  
The third family can be chosen to be the single-qubit gates $\exp(\i \theta \sigma_x): \theta\in[0,2\pi)$, which corresponds to $\textbf{H}_\text{redist}=\{\mathbb{X}(0;1)=\sigma_x\}$. In this context the transitivity condition in Eq.(\ref{trans4}) is satisfied trivially for $k=n$ because all basis elements    
have equal charge, i.e., Eq.(\ref{equal-charge}) holds trivially, and one can go from any element of the computational basis to another via a sequence of Pauli x operators. Using single-qubit gates $\exp(\i \frac{\pi}{2} \sigma_x)=\i\sigma_x$ one also obtains all unitaries in Eq.(\ref{CZ4}) from CZ gate. 

In summary, in this special case our result implies that  $S$ and CZ gates together with single-qubit rotations around the x axis are universal, i.e., generate all $n$-qubit unitaries, up to a global phase (This, of course, also follows from the well-known universality of CNOT and single-qubit gates). This example also demonstrates that, for general groups and representations, semi-universality cannot be achieved with any combination of two among the three families of gates listed in corollary \ref{cor9}. \\


\noindent\textbf{U(1) group:}  In the example of  qubit systems with U(1) symmetry discussed in Sec. \ref{Sec:Ex},  $\textbf{H}_\text{redist}=\{\mathbb{X}(01;10)\}$  with
\be
\mathbb{X}(01;10)=|01\rangle\langle 10|+|10\rangle\langle 01|=\frac{1}{2}(\sigma_x\otimes \sigma_x+\sigma_y\otimes \sigma_y)\ ,
\ee
satisfies the transitivity condition in Eq.(\ref{trans4}) for all $k\le n$. That is, by applying a sequence of this 2-qubit operator on different pairs of qubits we can go from any  element of the computational basis $\pmb{B}_n$ to any other element with equal Hamming weight. Then, corollary \ref{cor9} implies that  the semi-universality can be achieved using $S$ gate, 2-qubit gates in Eq.(\ref{CZ4}), and the family of unitaries $\exp(\i \theta (\sigma_x\otimes \sigma_x+\sigma_y\otimes \sigma_y)): \theta\in[0,2\pi)$.    \\

\noindent\textbf{Cyclic group:}  The above three gate sets that are semi-universal for 
U(1) symmetry are not sufficient to achieve semi-universality in the case of  
$\mathbb{Z}_p$ symmetry discussed in 
section \ref{Sec:cyc}. In particular, with such gates the transitivity condition in Eq.(\ref{trans4}) is not satisfied. To achieve transitivity, and hence semi-universality, we amend these gate sets with the gates $\exp(\i\theta \mathbb{X}(0^p;1^p)): \theta\in[0,2\pi)$, where 
\be\nonumber
\mathbb{X}(0^p;1^p)=|0\rangle\langle 1|^{\otimes p}+|1\rangle\langle 0|^{\otimes p}\ .
\ee
It can be easily seen that $\mathbb{X}(0^p;1^p)$ together with $\mathbb{X}(01;10)$ act transitively on the elements of $\pmb{B}_n$ with equal charge: Recall that for $\mathbb{Z}_p$ symmetry  the charge (irrep) is uniquely determined by the Hamming weight mode $p$. Using a sequence of operators $\mathbb{X}(01;10)$ it is possible to convert any basis element in  $\pmb{B}_n$  to another basis element with equal Hamming weight. Furthermore,  applying $\mathbb{X}(0^p;1^p)$  on $p$ qubits  which are all in state $|0\rangle$ (or, all in state $|1\rangle$) increases (or, decreases) the Hamming weight by $p$. Combining these operations together one can achieve transitivity on the basis elements with equal Hamming weights mode $p$. 

For instance, in the special case of bit-parity $\mathbb{Z}_2$ symmetry, we have
\be
\mathbb{X}(00;11)=\frac{1}{2}(\sigma_x\otimes  \sigma_x-\sigma_y\otimes  \sigma_y)\ .
\ee
Furthermore, since $S \sigma_x S ^\dag=\sigma_y$, using the $S$ gates together with the family of unitaries $\exp(\i \theta \sigma_x\otimes \sigma_x):\theta\in [0,2\pi)$ we obtain both families of unitaries  $\exp(\i \theta \mathbb{X}(01,10))$ and $\exp(\i \theta \mathbb{X}(00,11))$.  For instance,  $\exp(\i 2\theta \mathbb{X}(00,11))$ can be realized as
$(S\otimes S) \exp(-\i \theta \sigma_x\otimes \sigma_x)  (S^\dag\otimes S^\dag)\exp(\i \theta \sigma_x\otimes \sigma_x) $.

In this example, the elements of $\mathcal{SV}_{n,k}^{\mathbb{Z}_2}$ are  $n$-qubit unitaries $V$ satisfying  the conditions 
\be\label{con1}
\sigma_z^{\otimes n}V \sigma_z^{\otimes n}=V\ ,
\ee
and 
\be\label{con2}
\text{det}(V_0)=\text{det}(V_1)=1 \  ,
\ee
where  $V_0$ and $V_1$ are the components of $V$ in the subspaces with 
even and odd 
Hamming weights, respectively, such that $V=V_0\oplus V_1$ and $\text{det}(V_b)$ denotes the determinant of $V_b$ on its support.   In summary, we conclude that 

\begin{corollary}
Any $n$-qubit unitary $V$  satisfying conditions in Eq.(\ref{con1}) and  Eq.(\ref{con2})  can be implemented  using $S$ gates, 2-qubit gates in Eq.(\ref{CZ4}), and the family of unitaries $\exp(\i \theta \sigma_x\otimes \sigma_x): \theta\in[0,2\pi)$.  
\end{corollary}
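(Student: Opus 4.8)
The plan is to derive this corollary as the specialization of corollary \ref{cor9} to the bit‑parity group $\mathbb{Z}_2$ on qubits, and then to check that the three abstract gate families appearing there are generated by the concrete gates in the statement. First I would pin down the target group. Condition \reef{con1} says that $V$ commutes with $\sigma_z^{\otimes n}=\Pi_0-\Pi_1$, hence is block‑diagonal with respect to the even/odd‑Hamming‑weight sectors $\H_0\oplus\H_1$; this is exactly $\mathbb{Z}_2$‑invariance, so $V=V_0\oplus V_1\in\UU(\H_0)\oplus\UU(\H_1)$. Condition \reef{con2} then upgrades each block to a special unitary, so the set of $V$ obeying \reef{con1} and \reef{con2} is precisely $\SU(\H_0)\oplus\SU(\H_1)$.

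Next I would argue that this target set coincides with $\mathcal{SV}^{\mathbb{Z}_2}_{n,2}$. Since $\mathbb{X}(01;10)$ and $\mathbb{X}(00;11)$ act transitively on the elements of $\pmb{B}_n$ of equal total $\mathbb{Z}_2$‑charge (use $\mathbb{X}(01;10)$ to permute within a fixed Hamming weight and $\mathbb{X}(00;11)$ to change it by $\pm2$), statement 3 of theorem \ref{Thm4} holds at $k=2$; by that theorem semi‑universality holds and, by theorem \ref{Thm2}, the charge sectors do not split further, so $\mathcal{SV}^{\mathbb{Z}_2}_{n,2}=\mathcal{SV}^{\mathbb{Z}_2}_{n,n}=\SU(\H_0)\oplus\SU(\H_1)$. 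It therefore suffices to generate $\mathcal{SV}^{\mathbb{Z}_2}_{n,2}$, and I would invoke corollary \ref{cor9} with $k=2$ and $\textbf{H}_\text{redist}=\{\mathbb{X}(01;10),\mathbb{X}(00;11)\}$, which acts transitively on the two‑qubit basis elements of equal charge (it interconverts $|00\rangle\leftrightarrow|11\rangle$ and $|01\rangle\leftrightarrow|10\rangle$). This reduces the task to showing the stated gates generate the three families of corollary \ref{cor9}.

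Families 1 and 2 are immediate: $S=\exp(\tfrac{\i\pi}{2}|1\rangle\langle1|)$ gives the single‑qubit family 1 via $\exp(\tfrac{\i\pi}{2}|0\rangle\langle0|)=\i S^3$, and family 2 is by definition the set of gates in \reef{CZ4}. The only substantive step is family 3, and this is the main obstacle: I must confirm that the single two‑qubit rotation $\exp(\i\theta\,\sigma_x\otimes\sigma_x)$, helped only by $S$, realizes \emph{both} charge‑redistribution generators $\exp(\i\theta\,\mathbb{X}(01;10))$ and $\exp(\i\theta\,\mathbb{X}(00;11))$. Here I would use the orthogonal decomposition $\sigma_x\otimes\sigma_x=\mathbb{X}(01;10)+\mathbb{X}(00;11)$ and $\sigma_y\otimes\sigma_y=\mathbb{X}(01;10)-\mathbb{X}(00;11)$, noting the two summands commute because they are supported on orthogonal charge sectors. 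Since $S\sigma_x S^\dag=\sigma_y$, conjugating by $S\otimes S$ turns $\exp(\i\theta\,\sigma_x\otimes\sigma_x)$ into $\exp(\i\theta\,\sigma_y\otimes\sigma_y)$, and multiplying the two exponentials (as in $\exp(2\i\theta\,\mathbb{X}(00;11))=(S\otimes S)\exp(-\i\theta\,\sigma_x\otimes\sigma_x)(S^\dag\otimes S^\dag)\exp(\i\theta\,\sigma_x\otimes\sigma_x)$) isolates $\exp(\i\theta\,\mathbb{X}(00;11))$; the analogous product isolates $\exp(\i\theta\,\mathbb{X}(01;10))$. Once family 3 is in hand, corollary \ref{cor9} shows $\mathcal{SV}^{\mathbb{Z}_2}_{n,2}$ is contained in the group generated by $S$, the gates in \reef{CZ4}, and $\exp(\i\theta\,\sigma_x\otimes\sigma_x)$, and combining with the first two paragraphs proves that every $V$ satisfying \reef{con1} and \reef{con2} can be implemented with these gates.
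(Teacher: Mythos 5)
Your proposal is correct and follows essentially the same route as the paper: specialize corollary \ref{cor9} to the bit-parity $\mathbb{Z}_2$ symmetry with $\textbf{H}_\text{redist}=\{\mathbb{X}(01;10),\mathbb{X}(00;11)\}$, identify the set defined by Eqs.~(\ref{con1})--(\ref{con2}) with $\mathcal{SV}^{\mathbb{Z}_2}_{n,2}=\text{SU}(\mathcal{H}_0)\oplus\text{SU}(\mathcal{H}_1)$ via transitivity, and recover both redistribution generators from $\exp(\i\theta\,\sigma_x\otimes\sigma_x)$ and $S$ using the commuting decomposition $\sigma_x\otimes\sigma_x=\mathbb{X}(01;10)+\mathbb{X}(00;11)$, exactly as in the paper's construction $(S\otimes S)\exp(-\i\theta\,\sigma_x\otimes\sigma_x)(S^\dag\otimes S^\dag)\exp(\i\theta\,\sigma_x\otimes\sigma_x)=\exp(2\i\theta\,\mathbb{X}(00;11))$. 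Your write-up is somewhat more explicit than the paper's (e.g., separately verifying the transitivity hypotheses on $\pmb{B}_n$ for theorem \ref{Thm4} and on $\pmb{B}_2$ for corollary \ref{cor9}), but the argument is the same.
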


\section{From semi-universality to universality via an ancilla qudit}\label{Sec:anc}

Ref. \cite{marvian2022restrictions} proves that any U(1)-invariant unitary on qubit systems can be realized with 2-local U(1)-invariant unitaries and a single ancillary qubit. Here, we show how this method  can be generalized to other Abelian symmetries to circumvent  type \textbf{I} constraints. 
 It is worth noting that, in general, type \textbf{II} constraints cannot be removed with ancillary qudits of the same size (For instance, in the example of the cyclic group $\mathbb{Z}_p$, it is impossible to convert state $|1\rangle^{\otimes p}$ to $|0\rangle^{\otimes p}$ with $k$-local symmetric unitaries with $k<p$, even if one is allowed to use  ancillary qubits).

Indeed, we establish a more general  result demonstrating the power of a single ancillary qudit. First, we show

\begin{lemma}\label{lem:anc}
For an Abelian group $G$, any unitary transformation $V$ that is diagonal in the basis  $\pmb{B}_n$ in Eq.(\ref{form}) can be realized with 2-local $G$-invariant unitaries and a single ancilla qudit. That is, there exists $\widetilde{V}\in\mathcal{V}^G_{n+1, 2}$, such that
 for any state $|\psi\rangle\in(\mathbb{C}^d)^{\otimes n}$, it holds that
 \be\nonumber
 \widetilde{V} (|\psi\rangle\otimes |0\rangle_{\text{anc}})= (V|\psi\rangle)\otimes |0\rangle_{\text{anc}}\ ,
 \ee
where  $|0\rangle\in\mathbb{C}^d$ can be any state that belongs to a single irrep of $G$, i.e., has a definite charge. 
\end{lemma}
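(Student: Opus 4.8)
The plan is to reduce the problem to realizing a suitable diagonal unitary on the \emph{enlarged} $(n+1)$-qudit system and then invoke Theorem~\ref{Thm2}. First I would note that since every $U(g)=u(g)^{\otimes n}$ is diagonal in $\pmb{B}_n$, any diagonal $V=\sum_{\pmb r}e^{\i\phi(\pmb r)}|\pmb r\rangle\langle\pmb r|$ is automatically $G$-invariant, so the only issue is locality. The key observation is that a \emph{diagonal} $\widetilde V$ never moves the ancilla off $|0\rangle_{\text{anc}}$, so it suffices to exhibit a diagonal $\widetilde V=\sum_{(\pmb r,a)}e^{\i\Phi(\pmb r,a)}|\pmb r,a\rangle\langle\pmb r,a|\in\mathcal{V}^G_{n+1,2}$ whose restriction to the slice $a=0$ reproduces the prescribed phases, i.e.\ $\Phi(\pmb r,0)=\phi(\pmb r)$; the phases $\Phi(\pmb r,a)$ with $a\neq0$ are then free parameters at my disposal.

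Next I would apply Theorem~\ref{Thm2} to the $(n+1)$-qudit system with $k=2$. This gives the decomposition $\bigoplus_{\mu,\alpha}\mathcal{H}_{\mu,\alpha}$ together with $\bigoplus_{\mu,\alpha}\mathrm{SU}(\mathcal{H}_{\mu,\alpha})\subseteq\mathcal{V}^G_{n+1,2}\subseteq\bigoplus_{\mu,\alpha}\mathrm{U}(\mathcal{H}_{\mu,\alpha})$. Because the conserved projectors defining this decomposition are themselves diagonal in $\pmb{B}_{n+1}$ (equivalently, each $\mathcal{H}_{\mu,\alpha}$ is the span of a transitivity orbit of basis elements under $2$-local symmetric unitaries), every $\mathcal{H}_{\mu,\alpha}$ is spanned by a subset of $\pmb{B}_{n+1}$. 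Hence my candidate $\widetilde V$ is automatically block-diagonal with respect to $\{\mathcal{H}_{\mu,\alpha}\}$, and $\widetilde V\in\mathcal{V}^G_{n+1,2}$ becomes a purely $U(1)$-valued condition: its vector of block determinants $\big(\det\widetilde V|_{\mathcal{H}_{\mu,\alpha}}\big)_{\mu,\alpha}$ must lie in the type~\textbf{I} subtorus $T\cong\mathcal{V}^G_{n+1,2}\big/\!\bigoplus_{\mu,\alpha}\mathrm{SU}(\mathcal{H}_{\mu,\alpha})$ of $\prod_{\mu,\alpha}U(1)$. The point is that, thanks to the $\mathrm{SU}$ factors from Theorem~\ref{Thm2}, I am completely free to choose the (possibly high-degree) diagonal content \emph{inside} each block --- the only genuine constraints are on the block determinants.

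The crux, and the step I expect to be hardest, is the solvability of the resulting phase equations: I must choose the free phases $\{\Phi(\pmb r,a):a\neq0\}$ so that the block-determinant vector lands in $T$ while the fixed entries $\Phi(\pmb r,0)=\phi(\pmb r)$ are untouched. A block containing at least one basis element with $a\neq0$ is ``flexible'': its determinant can be set to any value by adjusting a single free phase. The obstruction comes from ``rigid'' blocks containing only $a=0$ basis elements --- the extreme charge sectors (for instance the all-maximal-charge configuration when $|0\rangle_{\text{anc}}$ itself carries maximal charge), whose determinant is then pinned to $\exp(\i\sum_{\pmb r}\phi(\pmb r))$ with no freedom. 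Here I would invoke the additivity of the Abelian charge: a block containing $(\pmb r,0)$ is flexible precisely when charge can be exchanged between the ancilla and some system qudit by a $2$-local charge-conserving gate, which holds whenever the representation is nontrivial, away from the extreme sectors. To finish I must verify that the pinned determinants of the remaining rigid blocks already lie in the projection of $T$ onto those coordinates, i.e.\ that the overall phase of each extreme sector is independently tunable; this is exactly where the type~\textbf{I} analysis of Appendix~\ref{Sec:App1}, via the single-qudit and $2$-local diagonal generators, is needed. An alternative, fully constructive route that avoids the torus bookkeeping is to build $\widetilde V$ gate-by-gate, using the ancilla as a charge reservoir so that the non-diagonal $2$-local charge-transfer gates together with diagonal gates synthesize the required high-degree diagonal phases directly; I expect the reservoir picture to be the cleaner way to dispatch the extreme-sector cases.
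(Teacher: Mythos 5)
Your reduction to a determinant condition is sound: by Theorem~\ref{Thm2} applied to the $(n+1)$-qudit system with $k=2$, each block $\mathcal{H}_{\mu,\alpha}$ is spanned by a subset of $\pmb{B}_{n+1}$, any diagonal $\widetilde V$ is automatically block-diagonal, and for a block-diagonal unitary membership in $\mathcal{V}^G_{n+1,2}$ is indeed decided by the vector of block determinants. The genuine gap is at the step you yourself flag as the crux: you never determine which blocks are actually rigid, and you leave unverified that their pinned determinants are attainable, deferring to the type-\textbf{I} analysis of Appendix~\ref{Sec:App1} and to a vague ``charge reservoir'' alternative. Neither is needed, and your flexibility criterion (``charge can be exchanged \ldots whenever the representation is nontrivial, away from the extreme sectors'') is off the mark. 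The correct statement is purely combinatorial: if a block contains $(\pmb r,0)$ with $\pmb r\neq 0^n$, pick a qudit of $\pmb r$ in a state $|s\rangle\neq|0\rangle$ and swap it with the ancilla; the swap is $2$-local and $G$-invariant, so the same block contains $(\pmb r',s)$ with $s\neq 0$ and is flexible. Hence the \emph{only} possibly rigid block is the singleton $\{\,|0\rangle^{\otimes n}\otimes|0\rangle_{\text{anc}}\,\}$, independently of the group, the representation, or the charge carried by $|0\rangle$. That single pinned phase is removed by a global phase, which is $1$-local and $G$-invariant: replace $\phi(\pmb r)$ by $\phi(\pmb r)-\phi(0^n)$, choose one free phase in each flexible block so that every block determinant equals $1$ (blocks containing no $a=0$ element get all phases set to $0$), conclude $\widetilde V\in\bigoplus_{\mu,\alpha}\mathrm{SU}(\mathcal{H}_{\mu,\alpha})\subseteq\mathcal{V}^G_{n+1,2}$, and multiply back by $e^{\i\phi(0^n)}\mathbb{I}$. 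Note also that your criterion ``pinned determinants lie in the projection of $T$'' would not suffice on its own if several rigid blocks could occur --- you would need joint membership in $T$, not coordinate-wise projections --- so pinning down that there is at most one rigid block is not a cosmetic point.

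With that repair your argument closes, and it is then essentially a globalized version of the paper's proof, which uses the same two ingredients (Theorem~\ref{Thm2} on $n+1$ qudits and the swap pairing) but organizes them per basis element rather than per block: for each $\pmb r\neq 0^n$ the paper forms the traceless Hamiltonian $\widetilde H_{\pmb r}=|\pmb r\rangle\langle\pmb r|\otimes|0\rangle\langle 0|_{\text{anc}}-|\pmb r'\rangle\langle\pmb r'|\otimes|s\rangle\langle s|_{\text{anc}}$, whose two terms lie in a single block precisely by the swap argument, so that $\exp(\i\theta\widetilde H_{\pmb r})\in\bigoplus_{\mu,\alpha}\mathrm{SU}(\mathcal{H}_{\mu,\alpha})$ for all $\theta$; since the compensating phase sits on an ancilla state orthogonal to $|0\rangle$, the $a=0$ slice sees exactly $\exp(\i\theta|\pmb r\rangle\langle\pmb r|)$, and the residual phase on $|0\rangle^{\otimes n}$ is handled by the same global-phase step. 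This per-element bookkeeping with manifestly traceless generators is what lets the paper bypass the determinant/torus discussion entirely.
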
 
Note that even in 
  the presence of  
type \textbf{II} constraints, 
 all diagonal unitaries can be still implemented with 2-qudit gates and a single ancilla qudit. This is expected from the intuition that type \textbf{II} constraints are related to restrictions on how the conserved charge associated to the symmetry can be redistributed in the system and  implementing diagonal unitaries does not require such charge redistributions.


As we argue in section \ref{Sec:proof},  each basis element $|\pmb{r}\rangle\in \pmb{B}_n$ belongs to a single irreducible invariant subspace $\mathcal{H}_{\mu,\alpha}$. Then, it is straightforward to see that combining these diagonal unitaries with unitaries in 
$\bigoplus_{\mu,\alpha} \text{SU}\left(\mathcal{H}_{\mu,\alpha}\right) \subset \mathcal{V}_{n,k}^G$ in Eq.(\ref{subgroup}), one obtains all the unitaries that are block-diagonal with respect to the 
irreducible invariant subspaces $\{\mathcal{H}_{\mu,\alpha}\}$. In other words, for $k\ge 2$, a single ancillary qudit allows us to extend the group $\mathcal{V}_{n,k}^G$ to 
\be\nonumber
\mathcal{V}_{n,k}^G\ \   \xrightarrow{\text{one ancillary qudit}} \ \ \ \bigoplus_{\mu,\alpha} \text{U}\left(\mathcal{H}_{\mu,\alpha}\right)\ .
\ee
  If there are no extra conserved observables, i.e., Eq.(\ref{Abs}) holds, then one obtains $\mathcal{V}_{n,n}^G$, the set of all $G$-invariant unitaries on the system.

In the following we show that lemma \ref{lem:anc}  follows from theorem \ref{Thm2}. To simplify the notation and without loss of generality, we assume the state $|0\rangle$ of ancilla  is an element of $\pmb{B}_1$, the qudit  basis that was used in the definition of  in Eq.(\ref{form}).  Consider   the family of unitaries 
$\exp(\i\theta |\pmb{r}\rangle\langle \pmb{r}|): \theta\in[0,2\pi)$, where 
$|\pmb{r}\rangle$ is an arbitrary element of the basis $\pmb{B}_n$, other than $|0\rangle^{\otimes n}$.  Note that these unitaries are not in $\bigoplus_{\mu,\alpha} \text{SU}(\mathcal{H}_{\mu,\alpha})$, and therefore, in general, they are not realizable with $k$-local $G$-invariant unitaries with $k<n$ (This is the case, for instance, for U(1) symmetry). Since $|\pmb{r}\rangle \neq |0\rangle^{\otimes n}$, there exists, at least, one qudit with the reduced state $|s\rangle$, which is orthogonal to $ |0\rangle$.  
 Let  $|\pmb{r}'\rangle\in \pmb{B}_n$  be the state obtained from $|\pmb{r}\rangle$ by changing the state of this qudit from $|s\rangle$ to $|0\rangle$.   This means that the two states  
   \be\nonumber
 |\pmb{r}\rangle\otimes |0\rangle_{\text{anc}}  \ \ \ \ \xleftrightarrow{\text{\ \ swap \  \ }} \ \ \ \ \ |\pmb{r}'\rangle \otimes |s\rangle_{\text{anc}}
 \ee
are related by swapping the ancilla qudit and the qudit in state $|s\rangle$. Since 
swap is 2-local and $G$-invariant, this implies that, relative to 
the group $\mathcal{V}^G_{n+1,2}$,  these states live in the same 
irreducible invariant subspace of $(\mathbb{C}^d)^{\otimes (n+1)}$, as defined in theorem \ref{Thm2}. It follows that the support of the  Hamiltonian
 \be\nonumber
\widetilde{H}_{\pmb{r}}=|\pmb{r}\rangle\langle \pmb{r}|\otimes |0\rangle\langle 0|_{\text{anc}}-|\pmb{r}'\rangle\langle \pmb{r}'|\otimes |s\rangle\langle s|_{\text{anc}}
\ee
is restricted to a single irreducible invariant subspace. Since this Hamiltonian is also traceless, theorem \ref{Thm2} implies that 
\be\nonumber
\forall\theta\in[0,2\pi):\ \ \ \ \ \  \exp(i \theta \widetilde{H}_{\pmb{r}}) \in \mathcal{V}^G_{n+1,2}\ .
\ee
Furthermore, we have 
\be\nonumber
\exp(\i \theta \widetilde{H}_{\pmb{r}}) (|\psi\rangle\otimes |0\rangle_{\text{anc}})= \big[\exp(\i \theta |\pmb{r}\rangle\langle \pmb{r}|)  |\psi\rangle\big]\otimes |0\rangle_{\text{anc}}\ .
\ee
Since the state of ancilla has remained unchanged, we can reuse it again. Using this method, we can  implement unitaries $\exp(\i \theta |\pmb{r}\rangle\langle \pmb{r}|)$ for all $|\pmb{r}\rangle\in\pmb{B}_n$, except $|\pmb{r}\rangle=|0\rangle^{\otimes n}$.  Combining these unitaries we  obtain any unitary in the form
\be\nonumber
V=|0\rangle\langle 0|^{\otimes n}+\sum_{\pmb{r}\neq 0^n} \exp(\i \theta_{\pmb{r}})\  |\pmb{r}\rangle\langle \pmb{r}| \ ,
\ee
for arbitrary phases $\{\exp(\i \theta_{\pmb{r}})\}$. This is the set of all  unitaries diagonal in $\pmb{B}_n$ basis, up to a global phase. Adding a global phase,  which by definition is  $k$-local for all $k$, we obtain all unitaries that are diagonal in this basis. This proves lemma \ref{lem:anc}.

\section{ Discussion}\label{Sec:discuss}


Since the early works of Deutsch \cite{deutsch1985quantum}, DiVincenzo \cite{divincenzo1995two}, and Lloyd \cite{lloyd1995almost}, the theory of quantum circuits has significantly advanced. However, despite their diverse applications, symmetric quantum circuits are not yet well-understood. 
In this work we developed the theory of Abelian symmetric quantum circuits and showed that certain constraints  that restrict realizable unitaries in circuits with non-Abelian symmetries do not exist in the case  of circuits with 
Abelian symmetries (namely, constraints of types \textbf{III} and \textbf{IV}). In particular, 
according to theorem \ref{Thm2}, the realized unitaries in different irreducible invariant subspaces $\{\mathcal{H}_{\mu,\alpha}\}$ are, in general, independent of each other, and all the unitaries with determinant one are realizable inside each subspace. 
It is also worth noting that 
the constraints of types \textbf{III} and \textbf{IV},  do not always appear in systems with non-Abelian symmetries.  This is the case for qubit systems with SU(2) symmetry \cite{marvian2022rotationally}. On the other hand, these constraints exist for qudit circuits with SU($d$) symmetry with $d\ge 3$ \cite{Marvian2021qudit}.


Type \textbf{I} constraints  can be characterized using the methods developed in \cite{marvian2022restrictions}, which are briefly reviewed in Appendix \ref{Sec:App1}. As it can be seen from Theorem \ref{Thm4} and lemma \ref{lem0},  type  \textbf{II} constraints are related to the fact that the locality of interactions restrict redistribution of the total charge in the system.  In particular, these constraints are  determined by the set of irreps (charges) carried by a single qudit and those appearing in the total system.  The irreps of any Abelian group $G$  themselves form an Abelian group, called the Pontryagin dual group $\hat{G}$ \cite{deitmar2014principles}. When $G$ is a finite or compact Lie group, which is the case of interest in this paper, $\hat{G}$ is a discrete group. Furthermore, if $G$ is connected, then $\hat{G}$ is torsion-free, which means that, unlike the case of $\mathbb{Z}_p$ example, copies of the same charge cannot annihilate each other.  

 We saw that in the case of finite Abelian groups,  there is a finite $k\le |G|$, independent of system size $n$, such that $k$-local symmetric unitaries can arbitrarily redistribute the charge in the system, which by theorem \ref{Thm4} implies semi-universality.  For many other groups of interest in physics,   the dual Pontryagin group is finitely generated, which essentially means there are a finite type of independent charges in the system. In this situation again one can show that there is a finite $k$, independent of the system size, such that $k$-local $G$-invariant unitaries become semi-universal.


An interesting future direction is exploring the connection between Abelian circuits and their classical counterparts in terms of reversible logic circuits. In general, from the point of view of universality, there is no immediate relation among the classical and quantum circuits. For example, while in the absence of symmetries 2-local gates are universal for quantum circuits, in classical reversible circuits  universality requires 3-local gates (e.g., the Toffoli gate). Nevertheless, given that type \textbf{II}  constraints in Abelian circuits have a simple classical interpretation, it will be interesting to further study possible connections of these circuits with  classical reversible circuits with conservation laws (See, e.g.,  \cite{aaronson2015classification}).  

\subsection*{Applications and implications}

Symmetric unitaries are ubiquitous across quantum information science. Many protocols and algorithms involve 
symmetric unitaries.  Symmetric quantum circuits have been shown to be useful useful in the context of  variational quantum machine learning
\cite{meyer2023exploiting, nguyen2022theory,  zheng2023speeding},  variational quantum eigensolvers  \cite{barron2021preserving, shkolnikov2021avoiding, gard2020efficientsymmetry},  and quantum gravity \cite{nakata2023black}.
 Besides,  symmetric quantum circuits have  become a standard framework for understanding the phases and  dynamics of many-body systems, e.g., in the context of classification of symmetry-protected topological phases \cite{chen2010local, chen2011classification} and quantum chaos in the presence of symmetries \cite{khemani2018operator}.

Hence,  understanding what class of unitary transformations can be realized with such circuits is a basic question with broad relevance.  Here, we briefly discuss some implications and applications of our results. \\

\noindent\textbf{Thermalization in the presence of non-Abelian conserved charges.} In the recent years scrambling and thermalization of closed quantum systems in the presence of conserved charges have been extensively studied. Researchers have  also considered possible implications of the presence of non-Abelian conserved charges in this context (See, e.g., \cite{majidy2023non,  yunger2022build, kranzl2022experimental}).  In particular, it has been conjectured that the non-commutativity of conserved charges hinders the  thermalization of quantum systems \cite{halpern2020noncommuting}.

We observed  that, in the presence of non-Abelian charges, the locality of interactions can obstruct the dynamics of the system in a way that cannot happen for Abelian symmetries (Note that although our theorems  are phrased in the language of quantum circuits, as explained in section \ref{sec:setup}, they can be equivalently  understood in terms of continuous Hamiltonian time evolution of a closed system with a global symmetry).  
In particular, we showed that due to the additivity of Abelian charges,  the time evolution in different charge sectors can be independent (decoupled) from each other, whereas in general  this  cannot happen in the case of non-Abelian charges.  Therefore, our result supports the intuition that non-commutativity of conserved charges may, in some sense, slow down the  thermalization process.  A better understanding of this phenomenon and establishing  the  connections with thermalization requires further investigation. \\


\noindent\textbf{Quantum thermodynamics and the resource theory of  asymmetry.} Symmetric unitaries play a central  role in quantum thermodynamics \cite{janzing2000thermodynamic, FundLimitsNature, brandao2013resource, guryanova2016thermodynamics, lostaglio2015quantumPRX, halpern2016microcanonical, halpern2016beyond}, the resource theory of asymmetry \cite{gour2008resource, Marvian_thesis, marvian2013theory}, and the closely related topic of covariant error correcting codes  \cite{faist2020continuous, hayden2021error, kong2021charge}.  In thermodynamics it is often assumed that  energy-conserving unitaries can be realized with negligible thermodynamic costs.  The same assumption is made when there are additional conserved  charges. Similarly, the resource theory of asymmetry focuses on operations that are realizable with symmetric unitaries and symmetric ancillary systems. 


But,  how can we implement a general symmetric unitary on a composite system? Fundamental laws of nature as well as practical restrictions limit us to local interactions, i.e., those that couple a few  subsystems together. Can we realize a general symmetric unitary by combining local symmetric unitaries? Or, perhaps, realizing a general symmetric unitary on a composite system requires symmetry-breaking interactions. Our results in this paper show that, at least, in the case of Abelian symmetries, all symmetric unitaries are resizable with local symmetric unitaries, and a single ancilla qudit, which justifies the fundamental assumptions of the resource theories of thermodynamics and asymmetry for such symmetries.\\  

\noindent\textbf{Subspace controllability and  quantum computing in decoherence-free subspaces.} The problem  of realizing symmetric unitaries in a single charge sector has been previously considered  in the context of universal quantum computing in decoherence-free subspaces. In particular, researchers have studied implementing  certain U(1)-invariant unitaries with $XX+YY$ interaction, as well as implementing certain  SU(2)-invariant unitaries with the  Heisenberg exchange  interaction in a subspace with one irrep of the symmetry \cite{Bacon:Sydney, Kempe:01, Bacon:2000qf, DiVincenzo:2000kx, kempe2001theory, bacon2001coherence, levy2002universal, rudolph2005relational, viola2000dynamical, rudolph2021relational}.  It should be noted that here the focus is often on demonstrating the computational universality, also known as the encoded universality, which does not require implementing all symmetric unitaries in the subspace.

  Another closely related topic is subspace controllability, which requires that all unitary transformations on a subspace of the Hilbert space should be realizable  (See, e.g.,\cite{wang2016subspace, albertini2021subspace}).  In this context, it has been shown that Hamiltonians $XX+YY$, $ZZ$, and local $Z$ generate all unitaries in a single irrep of U(1) symmetry \cite{wang2016subspace}.

In this work, on the other hand, we studied a stronger notion of controllability, namely, semi-universality. Unlike subspace controllability, which focuses on a single charge subspace, semi-universality requires implementing the desired symmetric unitaries in all charge sectors, with possible constraints on the relative phases between sectors. Of course, this stronger notion implies subspace controllability on a single charge sector $\mathcal{H}_\mu$ of a global symmetry. In particular, Theorem \ref{Thm4} implies that in the case of Abelian symmetries, the only possible obstruction to subspace controllability is the existence of irreducible invariant subspaces. These can be easily identified either by characterizing the commutant of the control Hamiltonians in the subspace under consideration or, alternatively, by using the notion of transitivity in Statement 3 of Theorem \ref{Thm4} (see also Lemma \ref{lem0}).  More generally, theorem \ref{Thm2} implies that on each irreducible invariant subspace $\mathcal{H}_{\mu,\alpha}$ the subspace controllability can be achieved with $k$-local symmetric Hamiltonians. In summary, this approach  reveals a deeper understanding of the subspace controllability, connecting it to the symmetries and locality of the control Hamiltonians.

\section{Methods}\label{Sec:proof}
In the following, first we prove theorem \ref{Thm2} and then, using this theorem,  in section \ref{secd} we prove theorem \ref{Thm4}, which  also implies theorem \ref{Thm3}.

\subsection{Proof of theorem   \ref{Thm2}}
The subspaces $\{\mathcal{H}_{\mu,\alpha}\}$ in the statement of  theorem 
\ref{Thm2} 
can be defined and characterized based on  
$\text{Comm}(\mathcal{V}^G_{n,k})$. First, using the fact  that 
\be
\mathcal{V}^G_{n,1}\subseteq  \mathcal{V}^G_{n,k}\subseteq  \mathcal{V}^G_{n,n} \ ,
\ee
we have
\be\label{ComCom}
\text{Comm}(\mathcal{V}^G_{n,1})\supseteq  \text{Comm}(\mathcal{V}^G_{n,k})\supseteq  \text{Comm}(\mathcal{V}^G_{n,n}) \  .
\ee
Applying Schur's lemma, or the bicommutant theorem, one can easily see  that the only operators commuting with all symmetric unitaries are projectors to subspaces $\{\mathcal{H}_\mu\}$ and their linear combinations, i.e., 
\be
\text{Comm}(\mathcal{V}^G_{n,n})=\text{Span}_\mathbb{C}\{\Pi_\mu\}\ . 
\ee
Next, we focus on $\text{Comm}(\mathcal{V}^G_{n,1})$. Since $\mathcal{V}^G_{n,1}$  contains single-qudit unitary $\exp(\i \theta |r\rangle\langle r|)$ tensor product with the identity operators on the other $n-1$ qudits,  for arbitrary $|r\rangle\in \pmb{B}_1$ and $ \theta\in[0,2\pi)$,  any operator in $\text{Comm}(\mathcal{V}^G_{n,1})$  commutes with $|r\rangle\langle r|$  on this qudit. It follows that all operators in $\text{Comm}(\mathcal{V}^G_{n,1})$ are diagonal in basis $\pmb{B}_n$. Combining this with Eq.(\ref{ComCom}), we conclude that for all $k\ge 1$ it holds that
\be\label{commutant}
\text{Span}_\mathbb{C}\{\Pi_\mu\}\subseteq 
\text{Comm}(\mathcal{V}^G_{n,k}) \subseteq    \text{Span}_\mathbb{C}\{|\pmb{r}\rangle\langle \pmb{r}|: |\pmb{r}\rangle\in \pmb{B}_n \} \ .
\ee
This, in particular, means that all operators in $\text{Comm}(\mathcal{V}^G_{n,k})$ commute with each other.  The common eigen-subspaces of  operators in  $\text{Comm}(\mathcal{V}^G_{n,k}) $ decompose the total Hilbert space into  orthogonal subspaces  $\{\mathcal{H}_{\mu,\alpha}\}$, as in Eq.(\ref{Thm2}), with the property that on each 
subspace $\mathcal{H}_{\mu,\alpha}$ any operator in $\text{Comm}(\mathcal{V}^G_{n,k}) $ takes a constant value.  In other words,
\be\label{contra}
\text{Comm}(\mathcal{V}^G_{n,k}) = \text{Span}_\mathbb{C}\{\Pi_{\mu,\alpha}\}\subseteq \text{Span}_\mathbb{C}\{|\pmb{r}\rangle\langle \pmb{r}|: |\pmb{r}\rangle\in \pmb{B}_n \}\ , 
\ee
 where $\Pi_{\mu,\alpha}$ is the projector to $\mathcal{H}_{\mu,\alpha}$. Note that because projectors $\{\Pi_\mu\}$ are in $\text{Comm}(\mathcal{V}^G_{n,k})$, each subspace $\mathcal{H}_{\mu,\alpha}$ is contained in a single charge sector $\mathcal{H}_\mu$, which justifies the label  $\mu,\alpha$ of subspace $\mathcal{H}_{\mu,\alpha}$. That is,
 \be
 \mathcal{H}_\mu=\bigoplus_\alpha \mathcal{H}_{\mu,\alpha}\ .
 \ee
We conclude that under the action of $\mathcal{V}^G_{n,k}$  the orthogonal  subspaces $\{\mathcal{H}_{\mu,\alpha}\}$ are invariant, which means
\be\label{inclusion}
\mathcal{V}^G_{n,k} \subseteq \bigoplus_{\mu,\alpha} \text{U}\left(\mathcal{H}_{\mu,\alpha}\right)\ .
\ee
In the rest of this proof we show that $\mathcal{V}^G_{n,k} $ contains the subgroup 
 $\bigoplus_{\mu,\alpha} \text{SU}\left(\mathcal{H}_{\mu,\alpha}\right)$. 
 
First, note that $\mathcal{V}^G_{n,k}$ acts irreducibly on each subspace $\mathcal{H}_{\mu,\alpha}$. Otherwise, by Schur's lemma, there exists an operator commuting with $\mathcal{V}^G_{n,k}$,  with support restricted to $\mathcal{H}_{\mu,\alpha}$ which is not proportional to the projector $\Pi_{\mu,\alpha}$. But, this contradicts Eq.(\ref{contra}). 

Next, recall that each projector  $\Pi_{\mu,\alpha}$ is diagonal in basis  $\pmb{B}_n$,  which means state 
 $|\pmb{r}\rangle\in \pmb{B}_n$ is a vector in a single subspace $\mathcal{H}_{\mu,\alpha}$. Furthermore,


\begin{lemma}\label{lem0}
A pair of basis elements  $|\pmb{r}\rangle, |\pmb{r}'\rangle\in \pmb{B}_n$ belong to the same   irreducible invariant subspace $\mathcal{H}_{\mu,\alpha}$ if, and only if,   there exists a sequence of  elements of  $\pmb{B}_n$ connecting 
$|\pmb{r}\rangle$ to $|\pmb{r}'\rangle$  as 
\be\label{seq}
|\pmb{r}\rangle=|\pmb{s}^1\rangle \longrightarrow  |\pmb{s}^2\rangle \longrightarrow  \cdots \cdots \cdots\longrightarrow  |\pmb{s}^t\rangle=|\pmb{r}'\rangle\ ,
\ee
such that  (i) all states $|\pmb{s}^j\rangle$ are in the same irreducible subspace $\mathcal{H}_{\mu,\alpha}$, and (ii)  the Hamming distance between any consecutive pair is $d(\pmb{s}^{j}, \pmb{s}^{j+1})\le k$.   
\end{lemma}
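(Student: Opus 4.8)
The plan is to reduce the statement to a single combinatorial fact: that the partition of $\pmb{B}_n$ induced by the irreducible invariant subspaces $\{\mathcal{H}_{\mu,\alpha}\}$ coincides with the partition into connected components of the graph $\Gamma$ whose vertices are the elements of $\pmb{B}_n$ and whose edges join $|\pmb{r}\rangle$ and $|\pmb{r}'\rangle$ whenever they carry the same total charge and $d(\pmb{r},\pmb{r}')\le k$. Once the two partitions are shown to agree, both directions of the lemma follow at once, since a $\Gamma$-path is exactly a sequence satisfying (ii) with all elements of equal total charge. I would establish the coincidence of partitions through two opposite inclusions, relying throughout on the facts already proved in the run-up to Theorem \ref{Thm2}: that $\mathcal{V}^G_{n,k}$ acts irreducibly on each $\mathcal{H}_{\mu,\alpha}$, and that each $\Pi_{\mu,\alpha}$ is diagonal in $\pmb{B}_n$, so $\mathcal{H}_{\mu,\alpha}$ is spanned by the basis elements it contains.

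First I would prove an edge lemma: a single edge of $\Gamma$ already forces its two endpoints into the same $\mathcal{H}_{\mu,\alpha}$; iterating along a $\Gamma$-path then yields the ($\Leftarrow$) direction and shows that condition (i) is automatic once (ii) and equality of the total charge hold. Suppose $|\pmb{r}\rangle$ and $|\pmb{r}'\rangle$ agree outside a set $A$ of at most $k$ qudits and have the same total charge. By additivity of the Abelian charge, the reduced states on the complement of $A$ coincide, so the restrictions $|\pmb{r}_A\rangle$ and $|\pmb{r}'_A\rangle$ must carry the same irrep of $G$. Hence $H_A=|\pmb{r}_A\rangle\langle\pmb{r}'_A|+|\pmb{r}'_A\rangle\langle\pmb{r}_A|$, extended by identity on the remaining qudits, is $k$-local and $G$-invariant (both terms acquire the same phase under $u(g)^{\otimes|A|}$), so $\exp(\i\theta H_A)\in\mathcal{V}^G_{n,k}$. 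This unitary sends $|\pmb{r}\rangle$ to $\cos\theta\,|\pmb{r}\rangle+\i\sin\theta\,|\pmb{r}'\rangle$; since each $\mathcal{H}_{\mu,\alpha}$ is invariant under $\mathcal{V}^G_{n,k}$, the image cannot straddle two distinct such subspaces, which forces $|\pmb{r}\rangle$ and $|\pmb{r}'\rangle$ into the same $\mathcal{H}_{\mu,\alpha}$. Consequently each connected component of $\Gamma$ lies entirely inside one $\mathcal{H}_{\mu,\alpha}$.

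Then I would prove the reverse inclusion, giving the ($\Rightarrow$) direction. For a component $\mathcal{C}$ of $\Gamma$, set $W_\mathcal{C}=\text{Span}_\mathbb{C}\{|\pmb{r}\rangle:|\pmb{r}\rangle\in\mathcal{C}\}$. Any $k$-local $G$-invariant Hamiltonian $H$ has $\langle\pmb{r}|H|\pmb{r}'\rangle\neq0$ only along edges of $\Gamma$: locality forces $\pmb{r}$ and $\pmb{r}'$ to agree outside the $\le k$ qudits on which $H$ acts, and writing $U(g)|\pmb{r}\rangle=c_{\pmb{r}}(g)\,|\pmb{r}\rangle$ for the eigenphases, $[H,U(g)]=0$ gives $\langle\pmb{r}|H|\pmb{r}'\rangle=\overline{c_{\pmb{r}}(g)}\,c_{\pmb{r}'}(g)\,\langle\pmb{r}|H|\pmb{r}'\rangle$, so a nonzero matrix element requires $c_{\pmb{r}}(g)=c_{\pmb{r}'}(g)$ for all $g$, i.e.\ equal total charge. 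Thus $H$, and hence all of $\mathcal{V}^G_{n,k}$, preserves $W_\mathcal{C}$. Writing $\mathcal{H}_{\mu,\alpha}=\bigoplus_{\mathcal{C}}(\mathcal{H}_{\mu,\alpha}\cap W_\mathcal{C})$ exhibits $\mathcal{H}_{\mu,\alpha}$ as a direct sum of invariant subspaces, so its irreducibility forces all but one summand to vanish. Combined with the first step this gives $\mathcal{H}_{\mu,\alpha}=W_\mathcal{C}$ for a single component, so the basis elements of any $\mathcal{H}_{\mu,\alpha}$ form exactly one connected component of $\Gamma$, which is the claim.

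The step I expect to be the crux is the edge lemma of the second paragraph: it is the only place where the Abelian, additive nature of the charge is genuinely used, guaranteeing that the local swap Hamiltonian $H_A$ is symmetric and hence realizable. Everything else is a soft argument combining the support structure of $k$-local symmetric Hamiltonians in $\pmb{B}_n$ with the irreducibility and diagonality facts already in hand from the proof of Theorem \ref{Thm2}.
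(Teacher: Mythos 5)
Your proof is correct, and it takes a genuinely different route from the paper's. The paper proves the forward direction dynamically: irreducibility of the action of $\mathcal{V}^G_{n,k}$ on $\mathcal{H}_{\mu,\alpha}$ supplies a $V\in\mathcal{V}^G_{n,k}$ with $\langle\pmb{r}'|V|\pmb{r}\rangle\neq 0$, this $V$ is factored into finitely many $k$-local $G$-invariant gates, and resolutions of the identity in $\pmb{B}_n$ are inserted between consecutive factors so that the chain $|\pmb{s}^j\rangle$ is read off from a string of nonvanishing matrix elements, with locality bounding each Hamming distance. Your argument is static: you show that the span $W_{\mathcal{C}}$ of each connected component of your graph $\Gamma$ is invariant under every generator of $\mathcal{V}^G_{n,k}$ (via the support structure of $k$-local symmetric operators in the basis $\pmb{B}_n$), and then play this off against the irreducibility of $\mathcal{H}_{\mu,\alpha}$ and the fact that $\mathcal{H}_{\mu,\alpha}$ is spanned by the basis elements it contains; both of these facts are established in the paper before the lemma, so there is no circularity. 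For the backward direction, the paper defers to lemma \ref{lem4}, whose proof runs through the recursive construction with the gates $C_{a,b}$ of Eq.(\ref{Cab}); your edge lemma needs only the first step of that proof, namely that the bare operator $\mathbb{I}\otimes\mathbb{X}_B(\pmb{r}_B,\pmb{r}'_B)$ (your $H_A$) is $k$-local and $G$-invariant---the projector-dressed version $\mathbb{X}(\pmb{r};\pmb{r}')$ is unnecessary when all one wants is membership in the same invariant subspace. What your route buys: it is more self-contained (no appeal to the finite factorization of elements of $\mathcal{V}^G_{n,k}$ and no dependence on lemma \ref{lem4}), and it establishes the slightly stronger fact that the irreducible invariant subspaces are exactly the spans of the $\Gamma$-components, which makes explicit that condition (i) can be weakened to equality of total charge---precisely the form needed when the paper later proves that statement 3 of theorem \ref{Thm4} implies statement 2. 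What the paper's route buys: lemma \ref{lem4} is required anyway for theorem \ref{Thm2}, so its proof of the lemma adds no machinery beyond what the synthesis scheme already develops. One small point to tighten: when you pass from ``every $k$-local $G$-invariant Hamiltonian preserves $W_{\mathcal{C}}$'' to ``all of $\mathcal{V}^G_{n,k}$ preserves $W_{\mathcal{C}}$,'' it is cleanest to note that the same matrix-element argument applies verbatim to $k$-local $G$-invariant unitaries, which by definition generate $\mathcal{V}^G_{n,k}$.
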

 \begin{proof}

If $|\pmb{r}\rangle$ and $|\pmb{r}'\rangle$ are in the same  irreducible invariant  subspace, then there exists $V\in\mathcal{V}^G_{n, k}$ such that
$\langle \pmb{r}'| V|\pmb{r}\rangle\neq 0$. Then, since  group $\mathcal{V}^G_{n,k}$ is generated  by $k$-local $G$-invariant  unitaries and is compact,  this unitary has a decomposition as  $V=V_{t-1} \cdots V_1$, where $V_1, V_2, \cdots, V_{t-1}$  is a finite sequence of $k$-local $G$-invariant unitaries (Recall that compact groups are uniformly finitely generated \cite{d2007introduction}). This implies  
\be\nonumber
\langle\pmb{r}'|V  |\pmb{r}\rangle=\langle\pmb{r}'|V_{t-1}\cdots V_2V_1 |\pmb{r}\rangle \neq 0\ .
\ee 
Since $\pmb{B}_n$  is an orthonormal basis,   $\sum_{|\pmb{s}\rangle\in \pmb{B}_n}|\pmb{s}\rangle\langle \pmb{s}|$ is the identity operator. Inserting this resolution of the identity  in the above equation, we find that there exists a sequence of basis elements  $|\pmb{s}^2\rangle, |\pmb{s}^3\rangle, \cdots, |\pmb{s}^{t-1}\rangle$ such that     
\begin{align}\nonumber
\langle\pmb{r}'|V_{t-1}|\pmb{s}^{t-1}\rangle\langle \pmb{s}^{t-1}|  \cdots  \cdots  |\pmb{s}^3\rangle\langle\pmb{s}^3|V_2  |\pmb{s}^2\rangle \langle\pmb{s}^2|V_1 |\pmb{r}\rangle \neq 0\ .
\end{align}
Defining $|\pmb{s}^1\rangle=|\pmb{r}\rangle$ and $|\pmb{s}^{t}\rangle=|\pmb{r}'\rangle$, this  implies 
\be\label{c2}
\langle\pmb{s}^{j+1}|V_j|\pmb{s}^j\rangle\neq 0\  \ \  :\ j=1, \cdots, {t-1}\ .
\ee
Unitary $V_j$ is $k$-local and therefore acts non-trivially on, at most,  $k$ qudits. Then, on the rest of $n-k$ qudits two states $|\pmb{s}^j\rangle$ and $|\pmb{s}^{j+1}\rangle$
should be identical, because otherwise the left-hand of Eq.(\ref{c2}) will be zero.  This implies that the Hamming distance $d(\pmb{s}^{j}, \pmb{s}^{j+1})\le k$. Finally, note that because each element of basis $\pmb{B}_n$  is restricted to a single irreducible invariant subspace, and $k$-local unitaries are block-diagonal with respect to these subspaces, then all states $|\pmb{s}^1\rangle, |\pmb{s}^2\rangle, |\pmb{s}^3\rangle, \cdots, |\pmb{s}^{t-1}\rangle, |\pmb{s}^t\rangle$ are in the same irreducible invariant subspace. This proves one direction of the lemma. The proof of the other direction is straightforward and can be found in the proof lemma \ref{lem4} below. 

\end{proof} 

Next, we apply lemma \ref{lem0} to show that for any pair of distinct basis elements  
\be\nonumber
|\pmb{r}\rangle, |\pmb{r}'\rangle\in \mathcal{H}_{\mu,\alpha}\  ,
\ee
in the same irreducible invariant subspace $\mathcal{H}_{\mu,\alpha}$,  the unitary time evolutions generated by 
 Hamiltonians
 \begin{align}\nonumber
\mathbb{X}({\pmb{r}\ ; \pmb{r}'})&= |\pmb{r}\rangle\langle \pmb{r}'|+|\pmb{r}'\rangle\langle \pmb{r}|\ ,\nonumber\\
\mathbb{Y}({\pmb{r}\ ;\pmb{r}'})&= \i(|\pmb{r}\rangle\langle \pmb{r}'|-|\pmb{r}'\rangle\langle \pmb{r}|)\ ,\nonumber\\ 
 \ \mathbb{Z}({\pmb{r}\ ;\pmb{r}'})&=\frac{\i}{2} [\mathbb{X}({\pmb{r}\ ;\pmb{r}'}), \mathbb{Y}({\pmb{r},\pmb{r}'})]= |\pmb{r}\rangle\langle \pmb{r}|-|\pmb{r}'\rangle\langle \pmb{r}'|\ ,\nonumber
 \end{align}
 can be realized with $k$-local $G$-invariant unitaries. That is  
\begin{align}\label{time2}
 &\forall \theta\in[0,2\pi):  \\ &\exp[\i \theta \mathbb{X}({\pmb{r}\ ; \pmb{r}'})] \ ,   \exp[\i\theta \mathbb{Y}({\pmb{r}\ ; \pmb{r}'})]\ ,  \exp[\i\theta \mathbb{Z}({\pmb{r}\ ; \pmb{r}'})]\in \mathcal{V}^G_{n, k}\  .\nonumber
 \end{align}
The first step is to show that this holds for the special case where the Hamming distance $d({\pmb{r},\pmb{r}'})\le k$.

 \begin{lemma}\label{lem4}
Suppose two distinct basis elements $|\pmb{r}\rangle, |\pmb{r}'\rangle\in \pmb{B}_n$ belong to the same charge sector $\mathcal{H}_\mu$ and their   Hamming distance is  $d(\pmb{r},\pmb{r}')\le k$. Then, the statement in Eq.(\ref{time2}) holds for $k\ge 2$.
 \end{lemma}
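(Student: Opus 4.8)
The plan is to exhibit $\mathbb{X}(\pmb{r};\pmb{r}')$, $\mathbb{Y}(\pmb{r};\pmb{r}')$, and $\mathbb{Z}(\pmb{r};\pmb{r}')$ as elements of the Lie algebra $\mathfrak{v}^G_{n,k}$ of the compact connected Lie group $\mathcal{V}^G_{n,k}$ (Theorem \ref{Thm-1}); since exponentials of algebra elements lie in the group, Eq.(\ref{time2}) follows at once. First I would fix notation: let $A$ be the set of (at most $k$) qudits on which $\pmb{r}$ and $\pmb{r}'$ differ, and write $|\pmb{r}\rangle=|\pmb{t}\rangle_A\otimes|\pmb{c}\rangle_{A^c}$ and $|\pmb{r}'\rangle=|\pmb{t}'\rangle_A\otimes|\pmb{c}\rangle_{A^c}$, where the two states agree with a common configuration $\pmb{c}$ off $A$. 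Because the states carry equal total charge and the charge is additive over subsystems, the local configurations $\pmb{t}$ and $\pmb{t}'$ carry equal charge on $A$; hence $H_X:=(|\pmb{t}\rangle\langle\pmb{t}'|+|\pmb{t}'\rangle\langle\pmb{t}|)_A\otimes\mathbb{I}_{A^c}$ is a $k$-local $G$-invariant Hermitian operator, so $\i H_X\in\mathfrak{v}^G_{n,k}$. The catch is that $H_X=\sum_{\pmb{d}}\mathbb{X}(\pmb{t}\pmb{d};\pmb{t}'\pmb{d})$ is a sum over \emph{all} completions $\pmb{d}$ on $A^c$; it is ``delocalized'' and equals the desired generator only together with many unwanted terms.

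The core of the argument is a localization step carried out with Lie brackets. I would first record that any operator diagonal in $\pmb{B}_n$ is automatically $G$-invariant (since $U(g)$ is diagonal in that basis), so every $k$-local diagonal operator $D$ is an available generator. A direct computation shows that for such a $D$ with eigenvalues $a_{\pmb{d}},b_{\pmb{d}}$ on $|\pmb{t}\pmb{d}\rangle,|\pmb{t}'\pmb{d}\rangle$,
\[\i\big[D,\textstyle\sum_{\pmb{d}}f_{\pmb{d}}\mathbb{X}(\pmb{t}\pmb{d};\pmb{t}'\pmb{d})\big]=\sum_{\pmb{d}}(a_{\pmb{d}}-b_{\pmb{d}})f_{\pmb{d}}\,\mathbb{Y}(\pmb{t}\pmb{d};\pmb{t}'\pmb{d}),\]
and symmetrically with $\mathbb{X}\leftrightarrow\mathbb{Y}$. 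Thus each bracket with $D$ multiplies the coefficient by the ``difference function'' $\pmb{d}\mapsto a_{\pmb{d}}-b_{\pmb{d}}$ and toggles $\mathbb{X}\leftrightarrow\mathbb{Y}$; crucially this difference depends on $\pmb{d}$ only through the qudits of $D$'s support lying in $A^c$, and it vanishes identically unless $D$ also overlaps $A$ on a site where $\pmb{t},\pmb{t}'$ differ. I would then partition $A^c$ into blocks $R_1,R_2,\dots$ and, for each block, choose a $k$-local diagonal $D_i$ supported on $R_i$ together with one or more sites of $A$, tuned so that its difference function is the indicator $\delta_{\pmb{d}|_{R_i},\,\pmb{c}|_{R_i}}$. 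Taking the nested bracket of $\i H_X$ against $\i D_1,\i D_2,\dots$ multiplies these indicators, and the telescoping product collapses to $\delta_{\pmb{d},\pmb{c}}$, leaving exactly $\i\mathbb{X}(\pmb{r};\pmb{r}')$ or $\i\mathbb{Y}(\pmb{r};\pmb{r}')$ (according to the parity of the number of brackets) inside $\mathfrak{v}^G_{n,k}$.

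Finally I would close the $\mathfrak{su}(2)$: bracketing against one further $k$-local diagonal operator supported on $A$ alone produces a constant, nonzero difference function, which toggles $\mathbb{X}\leftrightarrow\mathbb{Y}$ without delocalizing, so both $\i\mathbb{X}(\pmb{r};\pmb{r}')$ and $\i\mathbb{Y}(\pmb{r};\pmb{r}')$ lie in $\mathfrak{v}^G_{n,k}$, and then $\i\mathbb{Z}(\pmb{r};\pmb{r}')=\tfrac12[\i\mathbb{X}(\pmb{r};\pmb{r}'),\i\mathbb{Y}(\pmb{r};\pmb{r}')]$ does as well; exponentiating yields Eq.(\ref{time2}). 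The main obstacle is precisely this localization: the natural $k$-local generator $H_X$ acts on every completion $\pmb{d}$ simultaneously, and one must strip away all completions except $\pmb{c}$ using only $k$-local resources. The telescoping construction is where the hypothesis $k\ge2$ is essential—each selector $D_i$ must at once touch a differing site in $A$ and read a site of $A^c$, which is possible for every pair exactly when $k\ge2$ (for $k=1$ a single-qudit diagonal operator cannot do both, and the localization fails).
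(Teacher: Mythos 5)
Your proof is correct, and its skeleton matches the paper's: you split the qudits into the differing set and its complement, use additivity of Abelian charges to conclude that the ``delocalized'' operator $H_X=\mathbb{X}_A(\pmb{t};\pmb{t}')\otimes\mathbb{I}_{A^c}$ is $k$-local and $G$-invariant, and then localize it onto the single completion $\pmb{c}$ by repeatedly coupling the complement to a differing site through diagonal (hence automatically $G$-invariant) operators; finally you close the $\mathfrak{su}(2)$ triple exactly as the paper does. The genuine difference is in how the localization is executed. You work at the Lie-algebra level: nested commutators of $\i H_X$ with $k$-local diagonal selectors whose ``difference functions'' are block indicators, telescoping to $\delta_{\pmb{d},\pmb{c}}$, and you then invoke the fact that the dynamical Lie algebra of $\mathcal{V}^G_{n,k}$ is closed under brackets (legitimate, and the paper itself uses this closure later, when chaining Eq.~(\ref{commutq})). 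The paper instead works at the group level: it defines $F_l=2^l\,[|r_1\rangle\langle r_1|\otimes\cdots\otimes|r_l\rangle\langle r_l|]\otimes\mathbb{X}_B$ and uses the explicit conjugation identity $\exp(\i\theta F_{l+1})=\exp(\i\theta F_l)\,C_{l+1,b}\exp(-\i\theta F_l)\,C_{l+1,b}$ with the 2-local gates $C_{a,b}$ of Eq.~(\ref{Cab}) (the circuit identity of Fig.~\ref{Fig}), adding one agreement site at a time. The two mechanisms are close cousins (conjugation by $\exp(\i\pi P)$ versus bracketing with the diagonal $P$), but the paper's version buys something yours does not: it is constructive at the gate level and isolates the exact gate families needed ($C_{a,b}$, single-qudit phases, and the local charge-redistribution unitaries), which is what later powers corollary~\ref{cor9}. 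Your version is somewhat more flexible (blocks of size up to $k-1$ rather than single sites) but, being an abstract membership argument in $\mathfrak{v}^G_{n,k}$, it does not directly yield an explicit circuit decomposition. One small point to tighten: you should exhibit the selector explicitly, e.g.\ $D_i=|t_a\rangle\langle t_a|_a\otimes|\pmb{c}|_{R_i}\rangle\langle\pmb{c}|_{R_i}|$ for a site $a$ where $\pmb{t},\pmb{t}'$ differ, which gives difference function exactly $\delta_{\pmb{d}|_{R_i},\,\pmb{c}|_{R_i}}$ and is $k$-local precisely because $|R_i|\le k-1$.
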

We present the proof of this lemma at the end.  Next, we use the unitaries in lemma \ref{lem4} as building blocks to realize unitaries in Eq.(\ref{time2}) in the general case, where the Hamming distance $d(\pmb{r},\pmb{r}')$ is not bounded by $k$. To  this end, we note that for any 3 distinct states $|\pmb{s}\rangle, |\pmb{s}'\rangle, |\pmb{s}''\rangle$ in basis $\pmb{B}_n$, 
 \be\label{commutq}
  \big[\mathbb{X}(\pmb{s}\ ; \pmb{s}')\ ,\ \mathbb{Y}(\pmb{s}'\ ; \pmb{s}'')\big]=\i \mathbb{X}(\pmb{s}\ ; \pmb{s}'') \ .
 \ee
This identity implies that if $\exp(\i\theta \mathbb{X}(\pmb{s} ;\pmb{s}'))$ and $\exp(\i\theta Y(\pmb{s}' ; \pmb{s}''))$ are in the group $\mathcal{V}^G_{n, k}$ for all $\theta\in[0,2\pi)$, then unitaries  $\exp(\i\theta \mathbb{X}(\pmb{s} ; \pmb{s}''))$ are also in this group for all $\theta$. Similar constructions work for $\mathbb{Y}(\pmb{s} ;  \pmb{s}'')$ and $\mathbb{Z}(\pmb{s} ; \pmb{s}'')$ (Recall that the set of realizable Hamiltonians are closed under the commutator operation and, hence, form a Lie algebra \cite{d2007introduction}).

Next, we apply this result recursively to the sequence of states obtained in Eq.(\ref{seq}). 
 Since any  consecutive pair of states $|\pmb{s}^j\rangle$ and $|\pmb{s}^{j+1}\rangle$ belong to the same irreducible invariant  subspace  and   $d(\pmb{s}^{j},\pmb{s}^{j+1})\le k$, applying  lemma \ref{lem4}, we find that Hamiltonians 
 $$\mathbb{X}(\pmb{s}^{j} ; \pmb{s}^{j+1})\ ,\  \mathbb{Y}(\pmb{s}^{j} ; \pmb{s}^{j+1})\ , \text{and     }\ \  \mathbb{Z}(\pmb{s}^{j} ; \pmb{s}^{j+1})$$
  can be realized with $k$-local $G$-invariant unitaries.  Then, applying Eq.(\ref{commutq}) together with lemma \ref{lem0}, we conclude that 
for all pair of basis elements $|\pmb{r}\rangle$ and $|\pmb{r}'\rangle$ in the same irreducible invariant subspace $\mathcal{H}_{\mu,\alpha}$, Eq.(\ref{time2}) holds. 
Such unitaries that act non-trivially only in a subspace spanned by 2 basis elements are called 2-level unitaries \cite{NielsenAndChuang} (also known as Givens rotations) and  they generate the full special unitary group on the space \cite{NielsenAndChuang}.  
 Since $\mathbb{X}({\pmb{r} ; \pmb{r}'})$, $\mathbb{Y}({\pmb{r} ; \pmb{r}'})$, and $ \mathbb{Z}({\pmb{r} ;\pmb{r}'})$ are traceless, it follows that the group generated by these unitaries is $\text{SU}(\mathcal{H}_{\mu,\alpha})$ on $\mathcal{H}_{\mu,\alpha}$   
(This can   also be seen using the fact that operators $\mathbb{X}({\pmb{r}\ ; \pmb{r}'})$, $\mathbb{Y}({\pmb{r} ; \pmb{r}'})$, and $ \mathbb{Z}({\pmb{r} ;\pmb{r}'})$ with ${\pmb{r},\pmb{r}'}\in \mathcal{H}_{\mu,\alpha}$ span the space of traceless Hermitian operators on  $\mathcal{H}_{\mu,\alpha}$). We emphasize that these unitaries act non-trivially only in a single  irreducible invariant subspace $\mathcal{H}_{\mu,\alpha}$.   Therefore, the unitary transformations realized in different sectors remain independent of each other. This proves that $\mathcal{V}^G_{n,k}$ contains the subgroup $\bigoplus_{\mu,\alpha} \text{SU}\left(\mathcal{H}_{\mu,\alpha}\right)$. To finish the proof of theorem \ref{Thm2}, in the following we prove  lemma \ref{lem4}.

\begin{proof}(Lemma \ref{lem4})  For $k=n$, the statement holds trivially. Hence, we assume $k<n$ in the following.  We partition the qudits in the system into two subsystems $A$ and $B$, where $A$ is the set of qudits 
whose assigned reduced states are identical for states $|\pmb{r}\rangle$ and  $|\pmb{r}'\rangle$, and $B$ contains the rest of qudits. By assumption, the Hamming distance  $d(\pmb{r}, \pmb{r}')\le k$, which means the number of qudits in $B$ is $|B|\le k$. Relative to this partition, the two states are decomposed as  
$$|\pmb{r}\rangle=|\pmb{r}_A\rangle_A\otimes |\pmb{r}_B\rangle_B\ \  \ \text{, and}  \ \   \ \  |\pmb{r}'\rangle=|\pmb{r}_A\rangle_A\otimes |\pmb{r}'_B\rangle_B\ .$$
 Then, 
\be\label{art}
\mathbb{X}(\pmb{r}, \pmb{r}')= |\pmb{r}_A\rangle\langle \pmb{r}_A|_A \otimes  \mathbb{X}_B(\pmb{r}_B, \pmb{r}'_B) \ ,
\ee
where  $\mathbb{X}_B(\pmb{r}_B, \pmb{r}'_B)=|\pmb{r}_B\rangle\langle \pmb{r}'_B|_B+|\pmb{r}'_B\rangle\langle \pmb{r}_B|_B$ is an operator acting on qudits in $B$.

Next, we focus on the Hamiltonian $ \mathbb{I}_A \otimes \mathbb{X}_B(\pmb{r}_B, \pmb{r}'_B) $, where $\mathbb{I}_A$ is the identity operator on $A$. Since  $|B|\le k$,  this Hermitian operator is $k$-local. Furthermore, it is $G$-invariant. Roughly speaking, this holds because  by the assumption of the lemma, in states $|\pmb{r}\rangle$ and $|\pmb{r}'\rangle$  the total charge in the system is equal, and also the total charge in subsystem $A$ is  equal, which in turn implies the total charge in subsystem $B$ is equal. More formally, we have 
\be\nonumber
\forall g\in G:\ \ \ \  \langle\pmb{r}|U(g) |\pmb{r}\rangle=\langle\pmb{r}'|U(g) |\pmb{r}'\rangle=e^{i\mu(g)}\ ,
\ee
where $U(g)=u(g)^{\otimes n}$, and the phase $e^{i\mu(g)}$ is an irrep of group $G$.  This immediately implies 
\be\nonumber
\forall g\in G:\  \langle\pmb{r}_B|u(g)^{\otimes |B|} |\pmb{r}_B\rangle=\langle\pmb{r}'_B|u(g)^{\otimes |B|} |\pmb{r}'_B\rangle=e^{i\mu_B(g)}\ ,
\ee
where 
the phase $e^{i\mu_B(g)}$ is also an irrep of group $G$.  This, in turn, implies $\mathbb{X}_B(\pmb{r}_B, \pmb{r}'_B)$ commutes with $u(g)^{\otimes |B|}$ for all $g\in G$.
The fact that $\mathbb{I}_A \otimes \mathbb{X}_B(\pmb{r}_B, \pmb{r}'_B)$ is $k$-local and $G$-invariant means that  the unitaries $\mathbb{I}_A \otimes  \exp(i\theta \mathbb{X}_B(\pmb{r}_B, \pmb{r}'_B))$ are in $\mathcal{V}_{n,k}^G$  for all $\theta\in[0,2\pi)$.

 Next, using the family of unitaries $\exp(i\theta \mathbb{X}_B(\pmb{r}_B, \pmb{r}'_B))$, we construct the desired unitary $\exp(i\theta \mathbb{X}(\pmb{r}, \pmb{r}'))$ and show that it is also in $\mathcal{V}_{n,k}^G$. We use a construction which is analogous  to the circuit identity in Fig.  \ref{Fig}, where the unitary $\exp(i\theta \mathbb{X}_B(\pmb{r}_B, \pmb{r}'_B))$ plays the role of $R=\exp(i\theta \sigma_x)$ in this circuit.

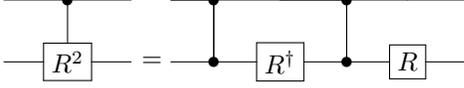
\begin{figure}
\[
\Qcircuit @C=1.5em @R=1.5em {
  & \ctrl{1} &\qw  &&  \ctrl{1} \qw    & \qw&\ctrl{1} \qw    & \qw& \qw & \\
 & \gate{R^2} &\qw & \hspace{-5mm}= & \ctrl{-1}&  \gate{R^\dag} &  \ctrl{-1}  &  \gate{R}  & \qw &    }
\large
\]
\caption{The arguments used in the proof of lemmas \ref{lem4}  and  \ref{lem651} are  a generalization of the standard circuit identity in this figure, where  $R=\exp(i\theta\sigma_x)$ for  
$\theta\in[0,2\pi)$, and the 2-qubit gate in the right circuit is the Controlled-Z gate. }\label{Fig}
\end{figure}

Suppose we label all qubits in $A$ as $l=1,\cdots, |A|$. Then, define $F_0=\mathbb{X}_B(\pmb{r}_B, \pmb{r}'_B)$, and
\begin{align}\label{Fl}
F_l=2^l  \Big[|r_{1}\rangle\langle r_1|_1\otimes \cdots \otimes  |r_{l}\rangle\langle r_l|_l\Big] \otimes \mathbb{X}_B(\pmb{r}_B, \pmb{r}'_B)\ ,
\end{align}
where we have suppressed the tensor products with the identity operators on other qudits.  Comparing this definition with Eq.(\ref{art}), we find that $F_{|A|}=2^{|A|} \mathbb{X}(\pmb{r},\pmb{r}')$. For any pair of 
 distinct qudits $a$ and $b$, and any pair of basis elements $|r_a\rangle,|r_b\rangle\in \pmb{B}_1$,  define the Hermitian unitary operator   
\begin{align}\label{Cab}
C_{a,b}=&\mathbb{I}_d^{\otimes n}-2|r_a\rangle\langle r_a|_{a}\otimes |r_b\rangle\langle  r_b|_b\ .
\end{align}
Note that this unitary is 2-local and $G$-invariant (This unitary is a generalization of the controlled-Z unitary in Fig. \ref{Fig}).  Choosing $b$ to be any qudit in subsystem $B$, 
and using the fact that $\langle  r_b|F_l|r_b\rangle_b=0$, one can easily check the recursive relation  
\begin{align}\label{rec71}
F_{l+1}= F_{l}- C_{l+1, b}\ F_l\ C_{l+1, b} \ .
\end{align}
Clearly, all operators $F_l:  l=1,\cdots, |A|$ commute with each other. Furthermore, since  $C_{l+1, b}\ F_l\ C_{l+1, b}=F_{l}-F_{l+1}$, this operator  also commutes with all $F_l$. Then, Eq.(\ref{rec71}) implies that for all $\theta\in [0,2\pi)$ it holds that
\begin{align}\label{eq26}
\exp(\i\theta F_{l+1})= \exp(\i\theta F_{l})\ C_{l+1, b}\ \exp(-\i\theta F_l)\ C_{l+1, b} \ ,
\end{align}
which is analogous to the circuit in Fig. \ref{Fig}. This implies that by combining unitaries $C_{lb}: l=1,\cdots , |A|$ and unitaries $\exp(\i\theta \mathbb{X}_B(\pmb{r}_B, \pmb{r}'_B) )$, which all belong to $ \mathcal{V}_{n,k}^G$,  one can obtain  unitaries 
$\exp(\i\theta F_l): l=1,\cdots, |A|$, and in particular, $\exp(\i\theta \mathbb{X}(\pmb{r}, \pmb{r}') )$, for all $ \theta\in[0,2\pi)$.  Finally, since
\be\label{lastEq}
\mathbb{Y}(\pmb{r}, \pmb{r}')= \exp(\frac{\i\pi }{2}|r_b\rangle\langle r_b|)\    \mathbb{X}(\pmb{r}, \pmb{r}')\  \exp(\frac{-\i\pi }{2}|r_b\rangle\langle r_b|)\ ,
\ee
by combining $\exp(\i\theta \mathbb{X}(\pmb{r}, \pmb{r}') )$
with single-qudit $G$-invariant unitaries, we obtain $\exp(\i\theta \mathbb{Y}(\pmb{r}, \pmb{r}') )$. Furthermore, since 3 operators  $\mathbb{X}(\pmb{r}, \pmb{r}')$ , $\mathbb{Y}(\pmb{r}, \pmb{r}')$, and $\mathbb{Z}(\pmb{r}, \pmb{r}')$ satisfy the standard commutation relations of  $\mathfrak{su}(2)$ satisfied by Pauli operators, 
by sandwiching $\exp(\i\theta \mathbb{X}(\pmb{r}, \pmb{r}') )$  between 
 $\exp(\i\pi/4 \mathbb{Y}(\pmb{r}, \pmb{r}') )$ 
and its inverse we obtain $\exp(\i\theta \mathbb{Z}(\pmb{r}, \pmb{r}') )$.
This completes the proof of lemma \ref{lem4}, and hence theorem \ref{Thm2}.
\color{black}
\end{proof}

For future use, we summarize the main step of the above argument in the following lemma.

\begin{lemma}\label{lem651}
Consider a pair of distinct basis elements $|\pmb{r}\rangle, |\pmb{r}'\rangle\in \pmb{B}_n$ with the property that  
\be
|\pmb{r}'\rangle=\big[\mathbb{X}_C(\pmb{s},\pmb{s}')\otimes \mathbb{I}_{\overline{C}} \big] |\pmb{r}\rangle\ ,
\ee
where $C$ is a subset of qudits containing $c\le n$ qudits, $|\pmb{s}\rangle, |\pmb{s}'\rangle\in \pmb{B}_{c}$, and  $\mathbb{X}_C(\pmb{s};\pmb{s}')=|\pmb{s}\rangle\langle \pmb{s}'|+ |\pmb{s}'\rangle\langle \pmb{s}|$ acts on qudits in $C$, and $\mathbb{I}_{\overline{C}}$ is the identity operator on the rest of qudits. Then, the family of unitaries $\exp(\i\theta \mathbb{X}(\pmb{r};\pmb{r}')): \theta\in[0,2\pi)$ can be realized using  $c$-local unitaries $\exp(\i\theta \mathbb{X}_C(\pmb{s};\pmb{s}')): \theta\in[0,2\pi)$ and 2-qudit gates $C_{a,b}$ in Eq.(\ref{Cab}).  
\end{lemma}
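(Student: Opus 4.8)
The plan is to recognize this lemma as the abstract extraction of the controlled-gate construction already carried out in the proof of Lemma~\ref{lem4}, now freed from any reference to charge or to the bound $c\le k$. First I would record the structural observation that makes everything work: since $|\pmb{r}\rangle$ and $|\pmb{r}'\rangle$ are related by $\mathbb{X}_C(\pmb{s};\pmb{s}')$ acting only on $C$, they must agree on the complementary qudits $\overline{C}$, so that $|\pmb{r}\rangle=|\pmb{s}'\rangle_C\otimes|\pmb{q}\rangle_{\overline{C}}$ and $|\pmb{r}'\rangle=|\pmb{s}\rangle_C\otimes|\pmb{q}\rangle_{\overline{C}}$ for a common component $|\pmb{q}\rangle_{\overline{C}}$. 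Consequently the target Hamiltonian factorizes as $\mathbb{X}(\pmb{r};\pmb{r}')=\mathbb{X}_C(\pmb{s};\pmb{s}')\otimes|\pmb{q}\rangle\langle\pmb{q}|_{\overline{C}}$, whereas the available generator is $\mathbb{X}_C(\pmb{s};\pmb{s}')\otimes\mathbb{I}_{\overline{C}}$. The task is therefore purely to ``condition'' the generator on the qudits of $\overline{C}$ being in the state $|\pmb{q}\rangle$, i.e.\ to replace $\mathbb{I}_{\overline{C}}$ by the rank-one projector $|\pmb{q}\rangle\langle\pmb{q}|_{\overline{C}}$ one qudit at a time.

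To do this I would introduce the same staircase of commuting Hamiltonians as in Eq.~(\ref{Fl}). Labelling $\overline{C}=\{1,\dots,m\}$ with $m=|\overline{C}|$ and writing $|q_j\rangle$ for the reduced state of $|\pmb{q}\rangle$ on qudit $j$, set $F_0=\mathbb{X}_C(\pmb{s};\pmb{s}')$ and
\be\nonumber
F_l=2^l\Big(|q_1\rangle\langle q_1|_1\otimes\cdots\otimes|q_l\rangle\langle q_l|_l\Big)\otimes\mathbb{X}_C(\pmb{s};\pmb{s}')\ ,
\ee
so that $F_m=2^m\,\mathbb{X}(\pmb{r};\pmb{r}')$. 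The one ingredient that must be supplied is an anchor qudit $b\in C$ on which $\pmb{s}$ and $\pmb{s}'$ disagree; such a $b$ exists precisely because $|\pmb{s}\rangle\neq|\pmb{s}'\rangle$. Writing $P_b$ for the projector onto the value that $|\pmb{r}\rangle$ assigns to qudit $b$, a one-line computation obtained by expanding $\mathbb{X}_C=|\pmb{s}\rangle\langle\pmb{s}'|+|\pmb{s}'\rangle\langle\pmb{s}|$ and projecting on $b$ gives the two identities $P_b\,\mathbb{X}_C+\mathbb{X}_C\,P_b=\mathbb{X}_C$ and $P_b\,\mathbb{X}_C\,P_b=0$. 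These are exactly what is needed to verify the recursion $F_{l+1}=F_l-C_{l+1,b}\,F_l\,C_{l+1,b}$, with $C_{a,b}$ the $2$-qudit gate of Eq.~(\ref{Cab}).

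Finally, since all the $F_l$ commute with one another and $C_{l+1,b}F_lC_{l+1,b}=F_l-F_{l+1}$ commutes with $F_l$, the recursion exponentiates to $\exp(\i\theta F_{l+1})=\exp(\i\theta F_l)\,C_{l+1,b}\exp(-\i\theta F_l)\,C_{l+1,b}$, the abstract form of the circuit identity in Fig.~\ref{Fig}. Iterating from $\exp(\i\theta F_0)=\exp(\i\theta\mathbb{X}_C(\pmb{s};\pmb{s}'))$ and inserting the gates $C_{l+1,b}$ one qudit of $\overline{C}$ at a time produces $\exp(\i\theta F_m)$, and rescaling $\theta\mapsto\theta/2^m$ yields $\exp(\i\theta\mathbb{X}(\pmb{r};\pmb{r}'))$ for all $\theta\in[0,2\pi)$; the only resources used are the given family $\exp(\i\theta\mathbb{X}_C(\pmb{s};\pmb{s}'))$ and the $2$-qudit gates $C_{a,b}$. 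I expect the only delicate point to be the bookkeeping for the recursion, in particular confirming the two anchor identities and that the distinctness of $\pmb{s},\pmb{s}'$ is the sole hypothesis they require. This is what makes the statement independent of $G$-invariance and of the relation between $c$ and $k$, and hence valid for every $c\le n$.
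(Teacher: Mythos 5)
Your proof is correct and takes essentially the same approach as the paper: the same staircase of commuting Hamiltonians $F_l$, the same 2-qudit gates $C_{a,b}$ anchored at a qudit of $C$ where $\pmb{s}$ and $\pmb{s}'$ disagree, and the same exponentiated circuit identity $\exp(\i\theta F_{l+1})=\exp(\i\theta F_l)\,C_{l+1,b}\exp(-\i\theta F_l)\,C_{l+1,b}$. The only difference is organizational: the paper re-embeds $\mathbb{X}_C(\pmb{s};\pmb{s}')$ as $2^{-l_C}F_{l_C}$ inside the staircase of lemma \ref{lem4} and continues the recursion from level $l_C$, whereas you start a fresh staircase at $F_0=\mathbb{X}_C(\pmb{s};\pmb{s}')$ and run it only over the qudits of $\overline{C}$, making explicit the two anchor identities the paper leaves implicit; the resulting gate sequence is identical.
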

The proof follows from the above argument. In particular, to apply this construction 
first we determine which qudits in $C$ are associated with different  reduced states in states $|\pmb{s}\rangle$ and $|\pmb{s}'\rangle$. This subset of qudits, which contains $d(\pmb{s} , \pmb{s}')\ge 1$ qudits, is denoted by $B \subseteq C$, and the rest of $n-d(\pmb{s} , \pmb{s}')$ qudits, are denoted by $A \supseteq \overline{C}$.  Then, we can apply the exact same recursive construction as before. In particular, by properly labeling qudits in $A$ we obtain    $\mathbb{X}_C(\pmb{s};\pmb{s}')=2^{-l_C} F_{l_C}$, where $F_l$ is defined in Eq.(\ref{Fl}) and  $l_C=c-d(\pmb{s} , \pmb{s}')$  is the number of qudits in $C$ with the same reduced state   $|\pmb{s}\rangle$ and $|\pmb{s}'\rangle$  (This means we label qudits in $A$ such that the qudits in $A \cap C$ are labeled as  $1,\cdots, l_C$).  Then, applying  Eq.(\ref{eq26}) recursively, we can obtain the family of unitaries $\exp(\i\theta \mathbb{X}(\pmb{r};\pmb{r}')): \theta\in[0,2\pi)$, which proves lemma \ref{lem651}.

\subsection{Proof of corollary \ref{cor9}}

We saw how general elements of $\mathcal{SV}^G_{n,k}$ can be constructed from  $k$-local $G$-invariant unitaries. Next, we prove corollary \ref{cor9}, which highlights the fact that this construction only  requires specific type of $k$-local $G$-invariant unitaries.

To see this first  recall that  we obtained general unitaries 
in  $\mathcal{SV}^G_{n,k}$ by combining two families of unitaries  
\begin{itemize}
\item Single-qudit unitaries  $\exp(\frac{\i\pi }{2}|r\rangle\langle r|)$ with $|{r}\rangle\in\pmb{B}_1$.
\item  Unitaries $\exp(\i \theta \mathbb{X}(\textbf{r}; \textbf{r}'))$ for all pairs basis elements  $|\textbf{r}\rangle, |\textbf{r}'\rangle\in\pmb{B}_n$ that live in the same irreducible invariant subspace $\mathcal{H}_{\mu,\alpha}$. 
\end{itemize}
In particular, combing these families we obtain $\exp(\i \theta \mathbb{Y}(\textbf{r}; \textbf{r}'))$ and $\exp(\i \theta \mathbb{Z}(\textbf{r}; \textbf{r}'))$. Therefore, in the following we focus on realizing 
$\exp(\i \theta \mathbb{X}(\textbf{r}; \textbf{r}'))$ for arbitrary $|\textbf{r}\rangle, |\textbf{r}'\rangle\in\pmb{B}_n$ that belong to the same irreducible invariant subspace $\mathcal{H}_{\mu,\alpha}$. First, we show that for any such pair of basis elements,  there exists a sequence of basis elements $|\textbf{r}^{v}\rangle\in\pmb{B}_n: v=1,\cdots, w$, defined as 
 \bes\label{sec44}
 \begin{align}
 |\textbf{r}^{v+1}\rangle&= \mathbb{X}_{C_v}(\pmb{t}^{v+1} ; \pmb{t}^{v} )  |\textbf{r}^{v}\rangle\\ &=\mathbb{X}_{C_v}(\pmb{t}^{v+1} ; \pmb{t}^{v} ) 
  \cdots \cdots \mathbb{X}_{C_1}(\pmb{t}^{2} ; \pmb{t}^{1} ) |\pmb{r}\rangle\ ,
\end{align}
\ees
such that 
\begin{enumerate}
\item $|\textbf{r}^{1}\rangle=|\textbf{r}\rangle$ and $|\textbf{r}^{w}\rangle=|\textbf{r}'\rangle$. 
\item  $\mathbb{X}_{C_v}(\pmb{t}^{v+1} ; \pmb{t}^{v} )$ acts as $\mathbb{X}(\pmb{t}^{v+1} ; \pmb{t}^{v} )=|\pmb{t}^{v}\rangle\langle \pmb{t}^{v+1}|+|\pmb{t}^{v+1}\rangle\langle \pmb{t}^{v}|$ on qudits in $C_v$ and acts trivially on the rest of qudits, where $\mathbb{X}(\pmb{t}^{v+1} ; \pmb{t}^{v} )\in \textbf{H}_\text{redist}$. 
\end{enumerate}
To show the existence of such sequences, first recall that 
because  $|\textbf{r}\rangle, |\textbf{r}'\rangle\in\pmb{B}_n$ belong to the same irreducible invariant subspace 
$\mathcal{H}_{\mu,\alpha}$, according to lemma \ref{lem0}, there exists a sequence in the form of Eq.(\ref{seq}) such that any consecutive pair  $|\pmb{s}^{j}\rangle$ and  $|\pmb{s}^{j+1}\rangle$
have equal charges and Hamming distance $d(\pmb{s}^{j}, \pmb{s}^{j+1})\le k$.  The latter property means that to transform $|\pmb{s}^{j}\rangle$ to $|\pmb{s}^{j+1}\rangle$ it suffices to act on, at most, $k$ qudits, and  convert an element of    $\pmb{B}_k$ to another element of  $\pmb{B}_k$ with equal total charge. But,  according to the assumption of  corollary  \ref{cor9},  operators in $\textbf{H}_\text{redist}$
 act transitively on elements of  $\pmb{B}_k$ that have the same charge.  That is, we can convert $|\pmb{s}^{j}\rangle$ to $|\pmb{s}^{j+1}\rangle$ by applying a sequence of elements of $\textbf{H}_\text{redist}$. 
 This implies the existence of the  sequence defined in Eq.(\ref{sec44}) with the claimed  properties.

 Next, applying lemma \ref{lem651}, Eq.(\ref{sec44}) implies that for each $v=1,\cdots, w-1$,  the  unitaries 
 \be\label{rq}
 \exp(\i\theta \mathbb{X}(\textbf{r}^{v+1} ; \textbf{r}^{v})) \  : \theta\in[0,2\pi)\ ,
 \ee
  can be realized with by combining 
 2-qudit gates $C_{a,b}$ in Eq.(\ref{Cab}) with 
  unitaries $\exp(\i\theta \mathbb{X}(\pmb{t}^{v+1} ; \pmb{t}^{v} ))$, where $\mathbb{X}(\pmb{t}^{v+1} ; \pmb{t}^{v} )\in \textbf{H}_\text{redist}$. 

Finally, we note that because of the argument presented in  Eq.(\ref{commutq}),  by combining unitaries 
 in Eq.(\ref{rq})  with single-qudit unitaries  $\exp(\frac{\i\pi }{2}|r\rangle\langle r|)$ with $|{r}\rangle\in\pmb{B}_1$, we can realize the desired unitaries 
  $\exp(\i\theta \mathbb{X}(\textbf{r} ; \textbf{r}')) : \theta\in[0,2\pi) $. Note that this can be achieved for all pair 
 $|\textbf{r}\rangle, |\textbf{r}'\rangle\in\pmb{B}_n$ that belong to the same irreducible invariant subspace.

  In summary, we conclude that combining (i) single-qudit unitaries  $\exp(\frac{\i\pi }{2}|r\rangle\langle r|)$ with $|{r}\rangle\in\pmb{B}_1$, (ii)   2-qudit gates $C_{a,b}$ in Eq.(\ref{Cab}), and (iii) 
  unitaries $\exp(\i\theta \mathbb{X}(\pmb{t} ; \pmb{t}' ))$ with $\mathbb{X}(\pmb{t} ; \pmb{t}')\in \textbf{H}_\text{redist}$, we can realize any unitary in $\mathcal{SV}_{n,k}^G$. This completes the proof of corollary \ref{cor9}.

\subsection{Proof of theorem \ref{Thm4}}\label{secd}

Next, we show how theorem \ref{Thm4} follows from theorem \ref{Thm2}:

\noindent$\bf{ 1\Longleftrightarrow 2}$: Statement 1 in theorem \ref{Thm4} implies  that the group $\mathcal{V}^G_{n,k}$ acts irreducibly on subspaces $\{\mathcal{H}_\mu\}$.  This fact together with Schur's lemmas  immediately imply statement 2. Conversely, statement 2 implies  that $\mathcal{V}^G_{n,k}$ acts irreducibly on subspaces $\{\mathcal{H}_\mu\}$. This fact together with our theorem \ref{Thm2}, which is proved above, implies statement 1. Therefore, statements 1 and 2 are equivalent, which, in particular, prove theorem \ref{Thm3}.\\

 \noindent$\bf{ 2\Longrightarrow 3}$:  Statement 2 implies that  group $\mathcal{V}^G_{n,k}$ acts irreducibly on subspaces $\{\mathcal{H}_\mu\}$, which means each subspace $\mathcal{H}_\mu$
contains only a single irreducible invariant subspace. This, in particular, means   any pair of basis elements  $|\pmb{r}\rangle, |\pmb{r}'\rangle\in \pmb{B}_n$ that belong to the same charge sector $\mathcal{H}_\mu$, are in the same irreducible invariant subspace, and therefore the assumption of lemma 
 \ref{lem0} holds. Then, the lemma implies statement 3 of theorem \ref{Thm4}. \\

 \noindent$\bf{ 3\Longrightarrow 2}$:   Finally, we show that  statement 3 implies statement 2, and, hence statement 1. According to   
statement 3, for any pair of basis elements $|\pmb{r}\rangle$ and 
$|\pmb{r}'\rangle$ in the same charge sector $\mathcal{H}_\mu$, there exists a  sequence of basis elements $|\pmb{r}\rangle=|\pmb{s}^1\rangle, \cdots,  |\pmb{s}^t\rangle=|\pmb{r}'\rangle$, such that any consecutive pairs $|\pmb{s}^j\rangle$ and $|\pmb{s}^{j+1}\rangle$  are in the same charge sector $\mathcal{H}_\mu$ and have Hamming distance $d(\pmb{s}^{j}, \pmb{s}^{j+1})\le k$. Then, it is obvious that there exists a $k$-local symmetric unitary that converts $|\pmb{s}^{j}\rangle$ and $|\pmb{s}^{j+1}\rangle$, and therefore they are in the same irreducible invariant subspace (recall that any basis element lives in a single irreducible invariant subspace $\mathcal{H}_{\mu,\alpha}$). It follows that the basis elements $|\pmb{r}\rangle$ and $|\pmb{r}'\rangle$ also live in the same irreducible invariant subspace $\mathcal{H}_{\mu,\alpha}$. Since $|\pmb{r}\rangle$ and $|\pmb{r}'\rangle$ are arbitrary basis elements in  $\mathcal{H}_{\mu}$, this means $\mathcal{H}_{\mu}$  does not contain any proper non-trivial   irreducible invariant subspace. 
Therefore, $\text{Comm}(\mathcal{V}^G_{n, k})=\text{Span}_\mathbb{C}\{
\Pi_{\mu}\}$ which is the statement 2. This completes the proof of theorem \ref{Thm4}.\\

 \section*{Acknowledgments}
 I am grateful to Hanqing Liu for reading the manuscript and providing useful comments.  
 Also, I would like to thank Nicole Yunger Halpern for introducing me to her papers  on thermalization in the presence of non-Abelian conserved charges \cite{majidy2023non,  yunger2022build, kranzl2022experimental} and specifically  the conjecture in \cite{halpern2020noncommuting}. 
 
 This work is supported by a collaboration between the US
DOE and other Agencies. This material is based upon work
supported by the U.S. Department of Energy, Office of Science, National Quantum Information Science Research Centers, Quantum Systems Accelerator. Additional support is acknowledged from NSF Phy-2046195, NSF QLCI grant OMA-2120757, and ARL-ARO QCISS grant number W911NF-21-1-0005.

\bibliography{Ref_2021_v3, Ref_2020,Ref_2021}

\newpage

\onecolumngrid

\appendix
\newpage
\section{Constraints on the relative phases between sectors with different charges}\label{Sec:App1}

Here, we briefly review some relevant results of \cite{marvian2022restrictions} that allows us to characterize the constraints on the relative phases between sectors with different charges, and prove the results in section \ref{sec:typeI}.   Unless otherwise specified, in the following discussion the group $G$ can be any finite or compact Lie group, which can be Abelian or non-Abelian. 

For any Hamiltonian $H$  consider the function
\be
\chi_H(g)=\Tr( H U(g)) : \ \ g\in G\ ,
\ee
where $U(g)=u(g)^{\otimes n}: g\in G$ is the representation of group $G$ on the system. If $H$ is $G$-invariant, i.e., if $[H, U(g)]=0 : \forall g\in G$, then this function  is uniquely  determined by the real numbers
\be
\Tr(H \Pi_\mu): \mu\in \text{Irreps}_G(n)\ ,
\ee
which can be thought of as a vector in 
$\mathbb{R}^{|\text{Irreps}_G(n)|}$, and  in \cite{marvian2022restrictions}  is called the charge vector of $H$. In particular, 
\be\label{Fourier}
\chi_H(g)= \sum_{\mu\in \text{Irreps}_G(n) } \frac{1}{d_\mu}\Tr(\Pi_\mu H)\   f_\mu(g)\ , 
\ee
where $f_\mu$ is the character of irrep $\mu$ and $d_\mu$ is its dimension, which is equal to 1 in the case of Abelian groups (Recall that the character of a representation of $G$ is a complex function over group, defined as the trace of the representation).  

Eq.(\ref{Fourier}) is indeed the Fourier transform of the vector $\Tr(H \Pi_\mu): \mu\in \text{Irreps}_G(n)$, and is invertible via the inverse Fourier transform, namely 
\be
\Tr(\Pi_\mu H)=\frac{d_\mu}{|G|} \sum_{g\in G} \chi_H(g)\  f^\ast_\mu(g)\ ,
  \ee
  which follows from the orthogonality of characters (a similar relation holds for compact Lie groups, where the summation is replaced by the  integral with the Haar measure). In summary, function $\chi_H$ determines $\{\Tr(\Pi_\mu H)\}_\mu$, hence the component of $H$ in the subspace spanned by projectors $\{\Pi_\mu\}$. This subspace is indeed the  center of the Lie algebra of symmetric Hamiltonians, i.e., the Lie algebra associated to $\mathcal{V}^G_{n,n}$.  Note that the coefficients $\{\Tr(\Pi_\mu H)\}_\mu$ determine the relative phases of the unitaries realized by Hamiltonian $H$. In particular, for unitary  
 \be
 V=\exp(\i  t H)=\bigoplus_{\mu\in\text{Irreps}_G(n)} V_\mu\ ,
 \ee
  the determinant of the component of $V$  in the sector with irrep $\mu$ is
 \be
\text{det}(V_\mu)=\exp[\i t\  \Tr(\Pi_\mu H)]\ .
\ee 
    
    It can be shown that if Hamiltonian $H$ is realizable with $k$-local $G$-invariant unitaries, such that  
\be\label{tr2}
\forall t\in \mathbb{R}:\  \  \  \  \exp(\i H t)\in \mathcal{V}^G_{n, k}\ ,
\ee 
then,
\be
\chi_H= r^{n-k} \times \sum_{\mu\in \text{Irreps}_G(k) } c_\mu    f_\mu\ , 
\ee
for some real coefficients $c_\mu \in\mathbb{R}$, where  $r(g)=\Tr(u(g))$. Therefore, for such Hamiltonians we have
\be
\chi_H\in \text{Span}_\mathbb{R}\{r^{n-k} f_\nu:\ \nu\in  \text{Irreps}_G(k)\}\subseteq \text{Span}_\mathbb{R}\{f_\nu:\ \nu\in  \text{Irreps}_G(n)\}\ ,
\ee
where the second inclusion follows from Eq.(\ref{Fourier}). 
We conclude that for Hamiltonians that are realizable with $k$-local $G$-invariant unitaries, the vector 
$\Tr(H \Pi_\mu): \mu\in \text{Irreps}_G(n)$ can be any real vector in  $\mathbb{R}^{|\text{Irreps}_G(n)|}$ if, and only if 
\be\label{bb}
\text{Span}_\mathbb{R}\{r^{n-k} f_\nu:\ \nu\in  \text{Irreps}_G(k)\}=  \text{Span}_\mathbb{R}\{f_\nu:\ \nu\in  \text{Irreps}_G(n)\}\ .
\ee

More generally, the difference between the dimensions of these spaces, gives a lower bound on the difference between the dimensions of the Lie algebras associated to $\mathcal{V}_{n,n}^G$ and $\mathcal{V}_{n,k}^G$, i.e.,
\bes
\begin{align}
\text{dim}\big(\mathcal{V}_{n,n}^G\big)-\text{dim}\big(\mathcal{V}_{n,k}^G\big)&\ge \text{dim}\big(\text{Span}_\mathbb{R}\{f_\nu:\ \nu\in  \text{Irreps}_G(n)\}\big)-\text{dim}\big(\text{Span}_\mathbb{R}\{r^{n-k} f_\nu:\ \nu\in  \text{Irreps}_G(k)\}\big) \\ &=  |\text{Irreps}_G(n)|-\text{dim}\big(\text{Span}_\mathbb{R}\{r^{n-k} f_\nu:\ \nu\in  \text{Irreps}_G(k)\}\big)\\ &\ge |\text{Irreps}_G(n)|-|\text{Irreps}_G(k)|\ .\label{dh}
\end{align}
\ees
In certain cases of interest, the dimension of  $\text{Span}_\mathbb{R}\{r^{n-k} f_\nu:\ \nu\in  \text{Irreps}_G(k)\}$ is guaranteed to be equal  to $|\text{Irreps}_G(k)|$, which implies Eq.(\ref{dh}) holds  as equality.  In particular, suppose $r(g)=\Tr(u(g))\neq 0$ for all $g\in G$.  Then, for  real coefficients 
$c_\nu\in\mathbb{R}$, the  function $r^{n-k}  \sum_{\nu\in \text{Irreps}_G(k)} c_\nu f_\nu$ is equal to zero for all $g\in G$ if, and only if,  
$ \sum_{\nu\in \text{Irreps}_G(k)} c_\nu f_\nu$ is  equal to zero for all $g\in G$. But, the orthogonality of characters implies that this is possible only if all coefficients $c_\nu=0$. It follows that under this assumption, Eq.(\ref{dh}) holds as equality. Moreover, when $G$ is a connected Lie group, even if $r(g)=0$ for some group element $g\in G$, a similar argument can be made  based on the continuity of $r(g)$ around the identity element of the group. \\

On the other hand, if
there exists a group element $g_0\in G$ such that 
\be\label{cond}
r(g_0)=\Tr(u(g_0))=0\  
 , 
 \ee
  then, unless $k=n$, all functions in $\text{Span}_\mathbb{R}\{r^{n-k} f_\nu:\ \nu\in  \text{Irreps}_G(k)\}$, vanish at $g_0$.  But, if  $\text{Irreps}_G(n)$ contains a 1D irrep, which is always the case for  Abelian groups where all irreps are 1D,  $\text{Span}_\mathbb{R}\{f_\nu:\ \nu\in  \text{Irreps}_G(n)\}$ contains functions that are non-zero at $g_0$. This means for $k<n$,
   \be
\text{dim}\big(\text{Span}_\mathbb{R}\{r^{n-k} f_\nu:\ \nu\in  \text{Irreps}_G(k)\}\big)< \text{dim}(\text{Span}_\mathbb{R}\{f_\nu:\ \nu\in  \text{Irreps}_G(n)\})=  |\text{Irreps}_G(n)|\ ,
\ee
which, in turn, implies  the universality cannot be achieved with $k<n$. In particular, if $\mu\in \text{Irreps}_G(n)$ is a 1D irrep of $G$, then unless $k=n$, the family of unitaries $\exp(\i\theta \Pi_\mu): \theta\in[0,2\pi)$ cannot be realized with $k$-local $G$-invariant unitaries (except, for specific values of $\theta$).


In the following theorem, 
the commutator subgroup of $\mathcal{V}^G_{n,k}$,  is the subgroup generated by $V_1 V_2 V^\dag_1 V^\dag_2$ for $V_1, V_2\in \mathcal{V}^G_{n,k}$, i.e., 
\be
\left[\mathcal{V}^G_{n,k} , \mathcal{V}^G_{n,k} \right]=\big\langle V_1 V_2 V^\dag_1 V^\dag_2: V_1,V_2\in \mathcal{V}^G_{n,k}\big\rangle\ .
\ee
A general $G$-invariant unitary $V\in \mathcal{V}^G_{n,n}$ can be written as an element of the commutator subgroup $[\mathcal{V}^G_{n,n} , \mathcal{V}^G_{n,n} ]$ times a unitary in the subgroup 
\be
\big\{\sum_{\mu\in\text{Irreps}_G(n)} \exp(\i\phi_\mu)\Pi_\mu: \phi_\mu\in[0,2\pi)\big\} \ .
\ee
This subgroup corresponds to the center of the Lie algebra associated to the Lie group $\mathcal{V}^G_{n,n}$ (See \cite{marvian2022restrictions} for more detailed discussion).   Furthermore,
\be
\text{dim}\big(\mathcal{V}_{n,n}^G\big)=\text{dim}\big([\mathcal{V}_{n,n}^G, \mathcal{V}_{n,n}^G]\big)+|\text{Irreps}_G(n)|\ .
\ee

Based on the above arguments, Ref.  \cite{marvian2022restrictions} shows
  \begin{theorem}\emph{\cite{marvian2022restrictions}}\label{Thmsumm}
For any finite or compact Lie group $G$, it holds that 
\begin{align}
\text{dim}\big(\mathcal{V}_{n,n}^G\big)-\text{dim}\big(\mathcal{V}_{n,k}^G\big)&\ge|\text{Irreps}_G(n)|-\text{dim}\big(\text{Span}_\mathbb{R}\{r^{n-k} f_\nu:\ \nu\in  \text{Irreps}_G(k)\}\big)\label{ert}\\ &\ge |\text{Irreps}_G(n)|-|\text{Irreps}_G(k)|\ ,\label{ert2}
\end{align}
where $r(g)=\Tr(u(g))$. Furthermore,
\begin{itemize}
\item If the commutator subgroups of $\mathcal{V}_{n,n}^G$ and $\mathcal{V}_{n,k}^G$ are equal, i.e., 
$\big[ \mathcal{V}_{n,n}^G , \mathcal{V}_{n,n}^G  \big]=\big[ \mathcal{V}_{n,k}^G , \mathcal{V}_{n,k}^G  \big]$, then Eq.(\ref{ert}) holds as equality.
\item If $G$ is a connected Lie group, or $r(g)=\Tr(u(g))\neq 0$ for all $g\in G$,  then Eq.(\ref{ert2}) holds as equality.
\item Suppose there exists a group element $g_0\in G$ such that $\Tr(u(g_0))=0$. If $\text{Irreps}_G(n)$ contains a 1D irrep of $G$, which is always the case for Abelian groups,  then for $k<n$ the right-hand side of Eq.(\ref{ert}) is positive implying $\mathcal{V}_{n,k}^G\neq \mathcal{V}_{n,n}^G$.
\end{itemize}
\end{theorem}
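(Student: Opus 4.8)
The plan is to pass to Lie algebras and track a single linear functional that isolates the center of the symmetric algebra. Write $\mathfrak{g}_{n,k}$ and $\mathfrak{g}_{n,n}$ for the real, compact Lie algebras of $\mathcal{V}^G_{n,k}$ and $\mathcal{V}^G_{n,n}$ guaranteed by Theorem \ref{Thm-1}, and recall that $\mathfrak{g}_{n,n}$ splits as a center $\mathfrak{z}=\text{Span}_\mathbb{R}\{\i\Pi_\mu:\mu\in\text{Irreps}_G(n)\}$ of dimension $|\text{Irreps}_G(n)|$ together with its semisimple part, the derived algebra. I would use the character function $\chi_H(g)=\Tr(HU(g))$ which, for $G$-invariant $H$, is by Eq.(\ref{Fourier}) the invertible Fourier encoding of the charge vector $(\Tr(\Pi_\mu H))_\mu$; concretely $\chi$ is a faithful coordinate on $\mathfrak{z}$ that annihilates the derived algebra, since $\Tr(\Pi_\mu[A,B])=\Tr([\Pi_\mu,A]B)=0$ for the central projectors $\Pi_\mu$. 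Thus the kernel of the charge functional $\phi$ on $\mathfrak{g}_{n,n}$ is exactly the semisimple part, while $\phi$ restricts to an isomorphism on $\mathfrak{z}$.

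The crux, and the step I expect to be the main obstacle, is to identify the image $\phi(\mathfrak{g}_{n,k})$ precisely. For the containment direction, $\Tr(\Pi_\mu[A,B])=0$ shows commutators carry zero charge vector, so the charge vector of any element of $\mathfrak{g}_{n,k}$ equals that of a real combination of its $k$-local generators; for a single $k$-local symmetric generator $h=h_{\text{loc}}\otimes\mathbb{I}_d^{\otimes(n-k)}$ one has $[h_{\text{loc}},u(g)^{\otimes k}]=0$ and the factorization $\chi_h(g)=\Tr(h_{\text{loc}}\,u(g)^{\otimes k})\,r(g)^{n-k}$, whose first factor lies in $\text{Span}_\mathbb{R}\{f_\nu:\nu\in\text{Irreps}_G(k)\}$, so every achievable $\chi_H$ lies in $\Phi_k:=\text{Span}_\mathbb{R}\{r^{n-k}f_\nu:\nu\in\text{Irreps}_G(k)\}$. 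For the reverse inclusion I would note that each such $h$ is $k$-local and symmetric, hence $h\in\mathfrak{g}_{n,k}$; letting $h_{\text{loc}}$ range over all $k$-local symmetric operators makes the first factors fill $\text{Span}_\mathbb{R}\{f_\nu:\nu\in\text{Irreps}_G(k)\}$, so $\phi(\mathfrak{g}_{n,k})=\Phi_k$ exactly. Rank--nullity then gives $\dim\mathfrak{g}_{n,n}-\dim\mathfrak{g}_{n,k}=(\dim\ker\phi-\dim\ker\phi_k)+(|\text{Irreps}_G(n)|-\dim\Phi_k)$, where $\phi_k$ is the restriction to $\mathfrak{g}_{n,k}$; since $\ker\phi_k\subseteq\ker\phi$ the first bracket is nonnegative, which is Eq.(\ref{ert}), and $\dim\Phi_k\le|\text{Irreps}_G(k)|$ gives Eq.(\ref{ert2}). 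Because $\ker\phi$ equals the derived algebra of $\mathfrak{g}_{n,n}$ while $\ker\phi_k$ contains that of $\mathfrak{g}_{n,k}$, equality of the two commutator subgroups forces $\ker\phi_k=\ker\phi$, the first bracket vanishes, and Eq.(\ref{ert}) becomes an equality; this is the first bullet.

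The last two refinements concern only the functions $r^{n-k}f_\nu$. For the second bullet I would prove linear independence, so that $\dim\Phi_k=|\text{Irreps}_G(k)|$ and Eq.(\ref{ert2}) is an equality: if $r(g)\neq0$ for all $g$, then $r^{n-k}\sum_\nu c_\nu f_\nu\equiv0$ forces $\sum_\nu c_\nu f_\nu\equiv0$, whence all $c_\nu=0$ by orthogonality of distinct irreducible characters; if instead $G$ is connected, then $r$ and the $f_\nu$ are real-analytic with $r(e)=\Tr(\mathbb{I}_d)=d>0$, so $r$ is nonzero on a neighborhood of the identity where $\sum_\nu c_\nu f_\nu$ must vanish, and analyticity on the connected group propagates this to $\sum_\nu c_\nu f_\nu\equiv0$, again giving $c_\nu=0$. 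For the third bullet, when some $g_0$ satisfies $r(g_0)=0$ and $k<n$, every element of $\Phi_k$ vanishes at $g_0$ through the factor $r(g_0)^{n-k}=0$, whereas any $1$D irrep $\mu$ has $|f_\mu(g_0)|=1$, so $\text{Span}_\mathbb{R}\{f_\nu:\nu\in\text{Irreps}_G(n)\}$ contains a function nonzero at $g_0$ and strictly contains $\Phi_k$. Hence $\dim\Phi_k<|\text{Irreps}_G(n)|$, the right-hand side of Eq.(\ref{ert}) is strictly positive, and therefore $\mathcal{V}^G_{n,k}\neq\mathcal{V}^G_{n,n}$, which is the final bullet.
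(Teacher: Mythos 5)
Your proposal is correct and follows essentially the same route as the paper's Appendix \ref{Sec:App1}: the character function $\chi_H$, the factorization $\chi_h(g)=\Tr\big(h_{\mathrm{loc}}\,u(g)^{\otimes k}\big)\,r(g)^{n-k}$ for $k$-local symmetric generators, dimension counting against $\text{Span}_\mathbb{R}\{r^{n-k}f_\nu\}$, orthogonality (plus analyticity near the identity) of characters for the equality cases, and the vanishing-at-$g_0$ argument for the final bullet. The only difference is completeness: you supply explicit proofs of steps the paper states without derivation and defers to Ref.~\cite{marvian2022restrictions}, namely that commutators carry zero charge vector, that the image of the charge functional on the $k$-local algebra is exactly $\text{Span}_\mathbb{R}\{r^{n-k}f_\nu:\nu\in\text{Irreps}_G(k)\}$, and the rank--nullity bookkeeping that turns equality of commutator subgroups into equality in Eq.(\ref{ert}).
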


For an Abelian group $G$, the group of all symmetric unitaries can be decomposed as  
\be
\mathcal{V}^G_{n,n}=\bigoplus_{\mu\in\text{Irreps}_G(n)} \hspace{-3mm}\text{U}(\mathcal{H}_{\mu})\ .
\ee
Then, using fact that the commutator subgroup of the unitary group $\text{U}(d)$ is the  special unitary group SU($d$), i.e., $[\text{U}(d), \text{U}(d)]=\text{SU}(d)$, we find that  the commutator subgroup of the group of all symmetric unitaries is
 \be\label{gk}
\left[\mathcal{V}^G_{n,n} , \mathcal{V}^G_{n,n} \right]=\hspace{-3mm}\bigoplus_{\mu\in\text{Irreps}_G(n)} \hspace{-3mm}\text{SU}(\mathcal{H}_{\mu})\ .
\ee 
Therefore, theorem  \ref{Thm3} implies that if there are no extra conserved observables, then 
\be
\left[\mathcal{V}^G_{n,k} , \mathcal{V}^G_{n,k} \right]=\left[\mathcal{V}^G_{n,n} , \mathcal{V}^G_{n,n} \right]=\bigoplus_{\mu\in\text{Irreps}_G(n)} \hspace{-3mm}\text{SU}(\mathcal{H}_{\mu})\ .
\ee
In this case  theorem \ref{Thmsumm} implies that
\begin{align}
\text{dim}\big(\mathcal{V}_{n,n}^G\big)-\text{dim}\big(\mathcal{V}_{n,k}^G\big)&=|\text{Irreps}_G(n)|-\text{dim}\big(\text{Span}_\mathbb{R}\{r^{n-k} f_\nu:\ \nu\in  \text{Irreps}_G(k)\}\big)\label{2ert}\\ &\ge |\text{Irreps}_G(n)|-|\text{Irreps}_G(k)|\ ,\label{2ert2}\ 
\end{align}
where Eq.(\ref{2ert2}) holds as equality if $G$ is connected, or $r(g)\neq 0$ for all $g\in G$. On the other hand, if $r(g)=0$ for some $g\in G$, then, unless $k=n$, the right-hand side of Eq.(\ref{2ert}) is strictly positive. This proves the results in section \ref{sec:typeI}.\\

Finally, we remark on the properties of $\text{Irreps}_G(l)$, i.e., the set of irreps of $G$ appearing in $u(g)^{\otimes l}: g\in G$.  Clearly, $\text{Irreps}_G(l+1)$ can be obtained by combining all irreps in $\text{Irreps}_G(l)$ with irreps in $\text{Irreps}_G(1)$. 

In the case of Abelian groups, the irreps are 1D, i.e., phases. Then, it is convenient to assume $\text{Irreps}_G(1)$ contains the trivial representation. This is always possible by multiplying  $u(g)$ in the inverse of a 1D irrep of $G$ that appear in $\text{Irreps}_G(1)$ (since this  is a global phase, this modification does not change the set of $G$-invariant operators).   With this convention, we find that for an Abelian group $G$, 
\be
\text{Irreps}_G(l) \subseteq \text{Irreps}_G(l+1)\ ,
\ee
and the elements of $\text{Irreps}_G(l+1)$ can be obtained by multiplying each element of  $\text{Irreps}_G(l)$ in an element of $\text{Irreps}_G(1)$.  This immediately implies that if for an integer  $l$, 
\be 
\text{Irreps}_G(l+1)=\text{Irreps}_G(l)\ \  \ \Longrightarrow\  \  \  \forall r>l:  \text{Irreps}_G(r)=\text{Irreps}_G(l)\ .   
\ee
For connected compact Abelian groups, such integer $l$ exists only if the representation of the symmetry on each qudit is trivial, i.e., $u(g)=\mathbb{I}_d$ for all $g\in G$ (This can be seen, for instance, by noting  that if  $\text{Irreps}_G(r)= \text{Irreps}_G(l)$, 
then for any irrep $f_\mu\in \text{Irreps}(1)$ and arbitrary $ r>l$, it holds that  $f^r_\mu\in \text{Irreps}_G(l)$. For a connected Lie group $G$,  consider the  first derivative(s) of $f^r_\mu$ in different directions over group $G$, at the identity element of the group. The magnitude(s) of these derivatives   grow linearly with $r$. Since $f^r_\mu\in \text{Irreps}_G(l)$ for all $r\ge l$, this is possible only if the derivative(s) of $f_\mu$ are zero at the identity element. For connected compact Lie groups, because the exponential map is surjective,  this implies that $f_\mu$ is constant, i.e., it is the trivial representation). 
 
In summary, we showed that
\begin{proposition}
For any compact Abelian group $G$
\be\label{theend}
|\text{Irreps}_G(l+1)|\ge |\text{Irreps}_G(l)|\ .
\ee
Furthermore, in the case of compact connected Abelian groups if $|\text{Irreps}_G(1)|>1$, which means the representation of group is non-trivial on each subsystem, then Eq.(\ref{theend}) is a strict inequality.  
\end{proposition}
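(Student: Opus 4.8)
The plan is to dispatch the weak inequality directly from the inclusion $\text{Irreps}_G(l)\subseteq\text{Irreps}_G(l+1)$ already established just above the proposition (under the harmless convention, also recorded above, that the trivial character belongs to $\text{Irreps}_G(1)$, which does not change any cardinality). Since an inclusion of finite sets forces $|\text{Irreps}_G(l)|\le|\text{Irreps}_G(l+1)|$, the inequality $|\text{Irreps}_G(l+1)|\ge|\text{Irreps}_G(l)|$ is immediate. The remaining content is the strict inequality in the connected case, which I would prove by contradiction.

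First I would suppose that equality holds at some level, i.e. $|\text{Irreps}_G(l+1)|=|\text{Irreps}_G(l)|$. Combined with the inclusion, this forces $\text{Irreps}_G(l+1)=\text{Irreps}_G(l)$, and then the stabilization implication recorded immediately before the proposition yields $\text{Irreps}_G(r)=\text{Irreps}_G(l)$ for every $r>l$; in particular the characters appearing at every level beyond $l$ are confined to one fixed finite set.

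Next, using the hypothesis $|\text{Irreps}_G(1)|>1$, I would pick a nontrivial character $\chi\in\text{Irreps}_G(1)$. If $|v\rangle$ is a simultaneous $u(g)$-eigenvector with eigenvalue $\chi(g)$, then $|v\rangle^{\otimes r}$ is an eigenvector of $u(g)^{\otimes r}$ with eigenvalue $\chi(g)^{r}$, so $\chi^{r}\in\text{Irreps}_G(r)=\text{Irreps}_G(l)$ for all $r\ge l$. Since $\text{Irreps}_G(l)$ is finite, the powers $\{\chi^{r}\}_{r\ge l}$ must repeat, giving $\chi^{r_2-r_1}=1$ for some $r_1<r_2$; that is, $\chi$ has finite order in the Pontryagin dual $\hat G$. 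This contradicts the torsion-freeness of $\hat G$ for connected $G$ noted in the main text, and completes the proof.

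The main obstacle is precisely this torsion-freeness input: for a connected compact Abelian $G$ (a torus $\UU(1)^m$, with $\hat G\cong\mathbb{Z}^m$) one must know that a nontrivial character genuinely has infinite order. An alternative self-contained route, which is the one sketched parenthetically in the text, avoids invoking torsion-freeness and instead differentiates $\chi^{r}$ at the identity: the directional derivatives of $\chi^{r}$ over $G$ scale linearly in $r$, so they cannot stay bounded (as confinement to a fixed finite set would demand) unless all derivatives of $\chi$ at the identity vanish; surjectivity of the exponential map for connected compact Lie groups then forces $\chi$ to be constant, i.e. trivial, again contradicting the choice of $\chi$. Either route supplies the needed strictness, and I would present the torsion-freeness version as the cleaner argument with the derivative version as a remark.
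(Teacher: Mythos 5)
Your proof is correct, and its skeleton — the inclusion $\text{Irreps}_G(l)\subseteq\text{Irreps}_G(l+1)$ (under the normalization convention) for the weak inequality, then contradiction, stabilization, and the observation that $\chi^r\in\text{Irreps}_G(r)$ via tensor powers of an eigenvector — coincides with the paper's. Where you genuinely diverge is the step that produces the contradiction in the connected case. The paper argues analytically: the directional derivatives of $f_\mu^r$ at the identity grow linearly in $r$, so confinement of all these powers to a fixed finite set forces the derivatives of $f_\mu$ to vanish at the identity, and surjectivity of the exponential map for compact connected Lie groups then makes $f_\mu$ constant, i.e., trivial. You instead argue algebraically by pigeonhole: the powers $\{\chi^r\}_{r\ge l}$ lying in a finite set must repeat, hence $\chi^{r_2-r_1}=1$ for some $r_1<r_2$, so $\chi$ is a torsion element of $\hat G$, contradicting torsion-freeness of the dual of a connected compact Abelian group. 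Your route is shorter, avoids any calculus on the group, and applies verbatim to compact connected Abelian groups that are not Lie groups; its cost is importing the duality fact ``connected $\Leftrightarrow$ torsion-free dual,'' which the paper records only in its Discussion section (citing \cite{deitmar2014principles}) and deliberately does not use in the appendix, whose derivative argument is self-contained modulo standard Lie theory but requires $G$ to be a Lie group. Your closing remark correctly identifies the derivative argument as the alternative; presenting the torsion-freeness version as primary with that one as a remark is a legitimate inversion of the paper's emphasis, and both arguments are sound.
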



\end{document}